\def\confversion{0}
\newcommand{\ignore}[1]{}
	\newcommand{\conf}[1]{#1}
	\newcommand{\conf}[1]{\ignore{#1}}
	\newcommand{\full}[1]{#1}
	\newcommand{\full}[1]{\ignore{#1}}
\definecolor{DarkRed}{rgb}{0.5,0.1,0.1}
\definecolor{DarkBlue}{rgb}{0.1,0.1,0.5}
\definecolor{ForestGreen}{rgb}{0.1333,0.5451,0.1333}
\definecolor{Red}{rgb}{0.9,0,0}
\def\BState{\State\hskip-\ALG@thistlm}
\newtheorem{theorem}{Theorem}
\newtheorem{lemma}{Lemma}[section]
\newtheorem{corollary}[theorem]{Corollary}
\newtheorem{claim}[lemma]{Claim}
\newtheorem{definition}{Definition}[section]
\theoremstyle{definition}
\newtheorem{remark}[lemma]{Remark}
\newtheorem{invariant}[lemma]{Invariant}
\newtheorem*{claim*}{Claim}
\newtheorem*{proposition*}{Proposition}
\newtheorem*{lemma*}{Lemma}
\newtheorem*{problem*}{Problem}
\newtheorem{fact}[lemma]{Fact}
\newtheorem{problem}{Problem}
\newtheorem{mdresult}[theorem]{Result}
\newenvironment{Theorem}{\begin{mdframed}[backgroundcolor=lightgray!40,topline=false,rightline=false,leftline=false,bottomline=false,innertopmargin=2pt]\begin{mdresult}}{\end{mdresult}\end{mdframed}}
\renewcommand{\qed}{\nobreak \ifvmode \relax \else
      \ifdim\lastskip<1.5em \hskip-\lastskip
      \hskip1.5em plus0em minus0.5em \fi \nobreak
      \vrule height0.6em width0.6em depth0.0em\fi}
\newcommand{\Qed}[1]{\ensuremath{\qed_{\textnormal{~#1}}}}
\newcommand{\Ot}{\ensuremath{\widetilde{O}}}
\newcommand{\eps}{\ensuremath{\varepsilon}}
\newcommand{\paren}[1]{\ensuremath{\left(#1\right)}\xspace}
\newcommand{\card}[1]{\left\vert{#1}\right\vert}
\newcommand{\Omgt}{\ensuremath{\widetilde{\Omega}}}
\newcommand{\IR}{\ensuremath{\mathbb{R}}}
\newcommand{\IS}{\ensuremath{\mathbb{S}}}
\newcommand{\ceil}[1]{{\left\lceil{#1}\right\rceil}}
\newcommand{\set}[1]{\ensuremath{\left\{ #1 \right\}}}
\newcommand{\poly}{\mbox{\rm poly}}
\newcommand{\polylog}{\mbox{\rm  polylog}}
\newcommand{\ALG}{\ensuremath{\mbox{\sc alg}}\xspace}
\DeclareMathOperator*{\Prob}{\ensuremath{\mathbb{P}}}
\renewcommand{\Pr}{\Prob}
\newcommand{\etal}{{\it et al.\,}}
\newenvironment{tbox}{\begin{tcolorbox}[
		enlarge top by=5pt,
		enlarge bottom by=5pt,
		 breakable,
		 boxsep=0pt,
                  left=4pt,
                  right=4pt,
                  top=10pt,
                  arc=0pt,
                  boxrule=1pt,toprule=1pt,
                  colback=white
                  ]%%
	}
{\end{tcolorbox}}
\newcommand{\supp}{\ensuremath{\textnormal{supp}}\xspace}
\newcommand{\QQ}{\mathbb{R}}
\newcommand{\calC}{{\cal C}}
\newcommand{\ProblemOne}{{Single Element Recovery}\xspace}
\newcommand{\ALGO}{\mathscr{A}}
\newcommand{\ba}{\boldsymbol{a}}
\newcommand{\bv}{\boldsymbol{v}}
\newcommand{\bu}{\boldsymbol{u}}
\newcommand{\by}{\boldsymbol{y}}
\newcommand{\cQ}{\mathcal{Q}}
\newcommand{\calP}{{\cal P}}
\newcommand{\calE}{{\cal E}}
\newcommand{\calF}{{\cal F}}
\newcommand{\RR}{\mathbb{R}}
\newcommand{\BIS}{\textsf{BIS}\xspace}
\renewcommand{\IS}{\textsf{IS}\xspace}
\newcommand{\ADD}{\textsf{Additive}\xspace}
\newcommand{\UR}{\mathbf{UR}}
\newcommand{\bone}{\mathbf{1}}
\newcommand{\linear}{\textsf{Linear}\xspace}
\newcommand{\cross}{\textsf{Cross}\xspace}
\newcommand{\OR}{\textsf{OR}\xspace}
\newcommand{\MultiCutSampler}{\ensuremath{\textnormal{\textsf{Partition-Sampler}}}\xspace}
\newcommand{\DetBinSearch}{\mathsf{BinarySearch}\xspace}
\newcommand{\DetFindMany}{\mathsf{DetFindMany}\xspace}
\newcommand{\DetFindEdge}{\mathsf{DetFindEdge}\xspace}
\newcommand{\DetGraphConn}{\mathsf{DetGraphConn}\xspace}
\newcommand{\RandGraphConnOR}{\mathsf{RandGraphConnOR}\xspace}
\newcommand{\RandGraphConnBIS}{\mathsf{RandGraphConnBIS}\xspace}
\newcommand{\RandGraphConnCross}{\mathsf{RandGraphConnCross}\xspace}
\newcommand{\BndSuppRec}{\mathsf{BndSuppRec}\xspace}
\newcommand{\LSample}{\mathsf{RandSuppSamp}\xspace}
\newcommand{\SuppEst}{\mathsf{SuppEst}\xspace}
\newcommand{\RandSuppRec}{\mathsf{RandSuppRec}\xspace}
\newcommand{\DegEst}{\mathsf{DegEst}\xspace}
\newcommand{\FindNbrs}{\mathsf{FindNbrs}\xspace}
\newcommand{\RandEdges}{\mathsf{RandEdges}\xspace}
\newcommand{\adv}{\ensuremath{\textsc{adversary}}\xspace}
\newcommand{\Qk}{\ensuremath{Q^{(k)}}}
\title{%Rounds vs Query Trade-offs for \\ 
	Graph Connectivity and Single Element Recovery \\via Linear and OR Queries}
\author{Sepehr Assadi\footnote{Department of Computer Science, Rutgers University.}  \and
Deeparnab Chakrabarty\footnote{Department of Computer Science, Dartmouth College. Supported in part by the National Science Foundation grant CCF-1813053.} \and
Sanjeev Khanna\footnote{Department of Computer and Information Science, University of Pennsylvania. Supported in part by the National Science Foundation grants CCF-1617851 and CCF-1763514.} 
}
\date{}
	\title{	Graph Connectivity and Single Element Recovery \\via Linear and OR Queries}
	\titlerunning{Graph Connectivity and Single Element Recovery via Linear and OR Queries}
	\author{Sepehr Assadi}{Rutgers University, New Brunswick, NJ 08901, USA}{sepehr.assadi@rutgers.edu}{}{Supported in part by NSF CAREER award CCF-2047061, and gift from Google Research}
	\author{Deeparnab Chakrabarty}{Dartmouth College, Hanover NH 03755, USA}{deeparnab@dartmouth.edu}{}{Supported in part by the NSF awards CCF-1813053 and CCF-2041920.}
	\author{Sanjeev Khanna}{University of Pennsylvania, Philadelphia PA 19104, USA}{sanjeev@cis.upenn.edu}{}{Supported in part by the NSF awards CCF-1763514, CCF-1617851, CCF-1934876, and CCF-2008305.}
	\authorrunning{S. Assadi, D. Chakrabarty, S. Khanna}
\begin{document}
\maketitle

\full{\pagenumbering{roman}}

% !TeX root = main.tex 
%!TEX root = main.tex
\begin{abstract}
	\noindent
	%Modern applications often force one to design algorithms when input access is limited via queries or measurements.
	We study the problem of finding a spanning forest in an undirected, $n$-vertex multi-graph  under two  basic query models. One are \linear queries which are linear measurements on the incidence vector induced by the edges; the other are the weaker \OR queries which only reveal whether a given subset of plausible edges is empty or not. 
	At the heart of our study lies a fundamental problem which we call the {\em single element recovery} problem: given a non-negative vector $x \in \IR^{N}_{\ge 0}$, the objective is to return a single element $x_j > 0$ from the support. 
	Queries can be made in rounds, and our goals is to understand the trade-offs between the query complexity and the rounds of adaptivity needed to solve these problems, for both deterministic and randomized algorithms.
	These questions have connections and ramifications to multiple areas such as sketching, streaming, graph reconstruction, and compressed sensing. \medskip

%	We study the problem of finding a spanning forest of an undirected, $n$-vertex multi-graph  when access to the graph 
%	is restricted to the following two query models. Both models take input a pair of vertex subsets. Between these subsets,
%	the \cross model returns the {\em number} of edges present, while the weaker \BIS-model returns whether there is {\em any} edge present.
%	Our goal is to understand the trade-off between rounds of adaptivity and number of queries made per round, for both randomized and deterministic algorithms. In this paper we give new algorithms and lower bounds on these trade-offs.
%	In fact, our lower bounds are for stronger models, namely, \linear, in which one can access any linear function of the edges, and \OR-where one can query the existence of an edge in any collection of vertex pairs.
\noindent
	Our main results are as follows:
	\begin{itemize}
			\item For the single element recovery problem, it is easy to obtain a deterministic, $r$-round algorithm which makes $(N^{1/r}-1)$-queries per-round. We prove that this is tight: any $r$-round {\em deterministic} algorithm must make $\geq (N^{1/r} - 1)$ \linear queries in some round.	
			%This proof, which uses duality theory, may be of independent technical interest. 
			In contrast, a $1$-round $O(\polylog{(N)})$-query {\em randomized} algorithm is known to exist.
			
		\item We design a {\em deterministic} $O(r)$-round, $\tilde{O}(n^{1+1/r})$-\OR query algorithm for graph connectivity.
		We complement this with an $\tilde{\Omega}(n^{1 + 1/r})$-lower bound for any $r$-round deterministic algorithm in the \OR-model.
	
		\item We design a {\em randomized}, $2$-round algorithm for the graph connectivity problem which makes $\Ot(n)$-\OR queries. In contrast, we prove that any $1$-round algorithm (possibly randomized) requires $\Omgt(n^2)$-\OR queries.	
		A {\em randomized}, $1$-round algorithm making $\Ot(n)$-\linear queries is already known.
	\end{itemize}
\noindent
All our algorithms, in fact, work with more natural graph query models which are {\em special} cases of the above, and have been extensively studied in the literature. These are \cross queries (cut-queries) and \BIS (bipartite independent set) queries.
In particular, we show a {\em randomized}, $1$-round (non-adaptive) algorithm for the graph connectivity problem which makes only $\Ot(n)$-\cross queries. 
	
	%{\bf DeepC: Do we mention the open problem in the abstract?}
	%	In contrast, any $1$-round {\em deterministic} algorithm for this problem		requires $\Omgt(n^2)$-\linear queries.

%	All our algorithms, work with more restrictive query models that have been studied in the literature, namely cut queries (for \linear) and bipartite
%	independent set (BIS) queries (for \OR). Our lower bounds on the other hand hold for the more general model we study. \medskip
%	
%	A crucial ingredient to the above study is the {\em single element recovery} problem where one is given a non-negative vector $x \in \IR^{N}_{\ge 0}$, 
%	and the objective is to return a single element $x_j > 0$ from the support. Given a subset of indices, the \linear model returns the sum of indices while the \OR model returns whether there is any positive element among these indices. 
%	
%	\begin{itemize}
%		\item It is rather easy to obtain a deterministic, $r$-round algorithm which makes $(N^{1/r}-1)$-queries per-round. 
%		We prove that this is tight --- any $r$-round deterministic algorithm must make $\geq (N^{1/r} - 1)$ \linear queries in some round.
%		This proof, which uses duality theory, may be of independent technical interest.
%	\end{itemize}
%	\noindent
%	
	
\end{abstract}

\full{
\clearpage
\setcounter{tocdepth}{2} 
\tableofcontents
\clearpage
\pagenumbering{arabic}
\setcounter{page}{1}
}

\clearpage
\setcounter{page}{1}
% !TeX root = main.tex 
%!TEX root = main.tex

\section{Introduction}\label{sec:intro}

Many modern applications compel algorithm designers to rethink random access to input data, and revisit basic questions
in a {\em query access model} where the input is accessed only via answers to certain kinds of queries. 
There are many reasons for this ranging from data volume (only snapshots of the data can be accessed) to data ownership (access is restricted via certain APIs).

In this paper, we study algorithms accessing an unknown, undirected multi-graph $G$ on $n$ vertices in the following two basic query models.
Think of the graph as an unknown non-negative $\binom{n}{2}$ dimension vector $x_G$ with $\supp(x_G)$ denoting the positive coordinates. With this view, answers to these queries below can be interpreted as {\em measurements} on this vector.
%We consider the following two {\em query models}. 
\begin{itemize}[leftmargin=15pt,noitemsep,label=$-$]
	\item \textbf{Linear Queries} (\linear): Given any non-negative\footnote{Non-negativity is for convenience. A general linear query can be broken into two non-negative queries.} $\binom{n}{2}$ dimension vector $a_G$, what is $a_G\cdot x_G$?
	
	%Given two disjoint subsets $A,B$ of $V$,  $\cross(A,B)$ returns the number of edges, including multipicity,
	%that have one endpoint in $A$ and the other in $B$.

%	Given a set $P \subseteq V \times V$ of vertex-pairs, a \linear query returns the total number of  edges (counting multiplicities) in $G$ that are in $P$. 
	\item \textbf{OR Queries} (\OR): Given any subset $S$ of the $\binom{n}{2}$ dimensions, is $\supp(x_G)\cap S$ empty?
	%Given any non-negative $\binom{n}{2}$ dimension vector $a_G$, is $a_G\cdot x_G$ non-zero?
	
%	Given two disjoint subsets $A,B$ of $V$,  $\BIS(A,B)$ returns 
%	whether there are any edges
%	that have one endpoint in $A$ and the other in $B$.
%	between $A$ and $B$.
\end{itemize}

\noindent
Reverting back to the combinatorial nature of graphs, it is perhaps more natural to think of different kinds of queries, and indeed
the following two have been extensively studied. These are however special\footnote{%
	The \cross (and \BIS queries) correspond to $\{0,1\}$ vectors $a_G$ (and subsets) corresponding to cuts.
%	A couple of observations. First, it should be clear \cross and \BIS queries are special cases of \linear and \OR queries.
	Indeed, our algorithms work with the weaker queries while our lower bounds will be for the stronger queries.
	It should also be clear that the \linear (and respectively \cross) queries are at least as strong as \OR (resp, \BIS) queries.
} cases, respectively, of the queries above.
\begin{itemize}[leftmargin=15pt,noitemsep,label=$-$]
	\item \textbf{Cross-additive Queries} (\cross): %Given any non-negative $\binom{n}{2}$ dimension vector $a_G$, what is $a_G\cdot x_G$?
	Given two disjoint subsets $A,B$ of $V$,  $\cross(A,B)$ returns the number of edges, including multipicity,
	that have one endpoint in $A$ and the other in $B$.
	
	%	Given a set $P \subseteq V \times V$ of vertex-pairs, a \linear query returns the total number of  edges (counting multiplicities) in $G$ that are in $P$. 
	\item \textbf{Bipartite Independent Set Queries} (\BIS): %Given any subset $S$ if the $\binom{n}{2}$ dimensions, is $\supp(x_G)\cap S$ empty?
		Given two disjoint subsets $A,B$ of $V$,  $\BIS(A,B)$ returns 
		whether or not there is an edge
		that has one endpoint in $A$ and the other in $B$.
		
\end{itemize}

\noindent
%
%It is immediate to verify that \cross queries are at least as strong as \BIS queries. 
The above query models (and similar variants such as additive queries~\cite{GrebinskiK98}, cut-queries~\cite{RubinsteinSW18}, edge-detection queries~\cite{AngluinC08,BeameHRRS18}) have a rich literature~\cite{GrebinskiK98,AlonBKRS02,ChoiK08,Mazzawi10,Choi13,BeameHRRS18,RubinsteinSW18,AbasiB18,Nis19}. 	Most previous works, however, have focused on either {\em graph reconstruction}~\cite{ChoiK08,Choi13,BshoutyM12,Mazzawi10}, or on {\em parameter estimation} (e.g., estimating the number
of edges~\cite{BeameHRRS18} or triangles~\cite{BhattacharyaBGM18}). %\footnote{The only exception that we are aware of is the work of~\cite{RubinsteinSW18} on minimum cut which we will elaborate on later.}. 
In this work, however, our goal is to understand the power and limitations of these queries to reveal {\bf \em structural properties} of the underlying graph. In particular, we study the following basic property.

\begin{problem}[Graph connectivity]\label{prob:connectivity}
%	\emph{
Given query access to an undirected multigraph 
on the vertex set $V = \set{1,\ldots,n}$, return a {\em spanning forest}. %What are the power and limitations of the above models?
%}
%	
%	\textbf{Input:} Unknown, undirected multigraph with vertex set $V := \set{1,\ldots,n}$
%	accessed via either \cross queries or \BIS queries. 
%%	
%%	\noindent
%	\textbf{Output:} A {\em spanning forest} of $G(V,E)$. 
\end{problem}

\noindent
It is not too hard to implement the {\em classic} BFS or DFS traversals %\footnote{Indeed, they make excellent assignment questions for an undergraduate algorithms course.}
to obtain an $\tilde{O}(n)$-query deterministic algorithm for the above problem in either query model. However, such algorithms are {\em adaptive}, that is, the queries depend on the answers obtained so far. 
A much more modern algorithm of Ahn, Guha, and McGregor~\cite{AhnGM12} gives\footnote{Using results in~\cite{SunWYZ19}, one can also obtain a $\tilde{O}(n)$-query deterministic algorithm in the \cross-query model.} an $\tilde{O}(n)$-\linear query {\em non-adaptive} but {\em randomized} algorithm for the problem.
This raises the following questions that motivate us

\begin{quote}
%	\centering
	\emph{
	What is the rounds-of-adaptivity versus query-complexity trade-off for deterministic algorithms for \Cref{prob:connectivity}? Can randomization also help in the \OR and \BIS models? 
	}
\end{quote}
It turns out that understanding the complexity of~\Cref{prob:connectivity} is closely related to understanding an even more basic problem which we discuss below.

%Our motivating questions are :  
%
%There are two main measures of efficiency for algorithms that solve Problem~\ref{prob:connectivity}: \emph{Query complexity} which measures the number of queries made by the algorithm to the graph, and
%\emph{Adaptivity (or round complexity)} that measures the number of rounds of adaptive querying where in each round all queries are asked in parallel (an algorithm is called $r$-round adaptive
%iff it asks all its queries in $r$ rounds and is called non-adaptive when $r=1$). Finally, another key aspect is whether we work with \emph{randomized} algorithms or \emph{deterministic} ones. 

\subsubsection*{Single Element Recovery.} %It turns out understanding the complexity of~\Cref{prob:connectivity} is closely related to understanding an even more basic problem. 
Consider a non-negative real-valued vector $x \in \IR_{\geq 0}^{N}$ and
%Let $\supp(x)$ denote the coordinates of $x$ on which it is non-zero. 
%
%Suppose we have access to one of the following two types of queries to $x$: 
%\begin{itemize}[leftmargin=15pt,noitemsep,label=$-$]
%	\item \textbf{Linear queries (\linear)}: Given a non-negative vector\footnote{The non-negativity is for convenience. A general vector can be split into two non-negative vectors.} $a \in \IR_{\geq 0}^{N}$, what is $a \cdot x = \sum_{i=1}^N a_ix_i$? 
%	\item \textbf{OR queries (\OR):} Given a set $S \subseteq [N]$, is $\supp(x)\cap S$ empty or non-empty?
%\end{itemize}
suppose we have access to $x$ only via \linear or \OR queries where the dimension is now $N$.
%Again it is immediate to verify that \linear queries can only be stronger than \OR queries.
%Moreover, these queries are natural generalizations of the \cross and \BIS queries for graphs if we think of $x$ encoding the $N:={{n}\choose{2}}$ dimensional vector of edge-multiplicities. 
We define the following problem  which
we call the \emph{single-element recovery} problem (following the standard ``support-recovery'' problem in compressed sensing).  

\begin{problem}[Single-element recovery]\label{prob:vector} 
	Given a non-negative real-valued vector $x \in \IR_{\geq 0}^N$, accessed via either \linear-queries or \OR-queries, 
	output {\em any arbitrary} element\footnote{In the case of \OR-queries, we can only return the $j$ with $x_j > 0$} from the support $\supp(x)$. % denoted by $\supp(\bx) := \set{i \in [N]: x_i \neq 0}$.  
\end{problem}
\noindent
To see how the above problem relates to~\Cref{prob:connectivity}, consider the vector of possible edges incident to a single vertex. A spanning forest must find an edge incident to this vertex. This corresponds to solving~\Cref{prob:vector} on this vector.
%
%
%Algorithmically, we use~\Cref{prob:vector} for finding edges incident on a vertex in the graph $G$. 
The problem is also interesting in its own right, with connections to {\em combinatorial group testing}~\cite{Dorfman43,DH00,NgoD00}, {\em compressed sensing}~\cite{Donoho06,CormodeM06,Indyk08}, and {\em coin-weighing problems.}~\cite{Lindstrom71,RV97,GrebinskiK98,Bshouty09}. While most of these works have focused on recovering the full support, we ask the simpler question of just recovering a single element. 
%To our knowledge, this problem has not been explicitly studied before.

If one allows {\em randomization}, then one can use $\ell_0$-samplers~\cite{JowhariST11} to solve the above problem using $O(\log^2 N \log (\frac{1}{\delta}))$ \linear queries\footnote{A similar result holds with \OR queries as well. See~\Cref{sec:warmup}}, {\em non-adaptively}. In fact, $\ell_0$-samplers return a random element in $\supp(x)$.
The parameter $\delta$ is the error probability. There have been numerous applications of these (see the table in Figure 1 of~\cite{KapralovNPWWY17}, for instance), and indeed many applications (including the AGM~\cite{AhnGM12} algorithm alluded to above) need only an arbitrary element in the support. This is precisely what is asked in~\Cref{prob:vector}.
Furthermore, the upper bound for randomized algorithms is nearly tight~\cite{JowhariST11,KapralovNPWWY17}, and therefore, for randomized algorithms, our understanding is pretty much complete.
But what can be said about {\em deterministically} finding a single support element\footnote{A ``deterministic $\ell_0$ sampler'', if you allow us the abuse of notation.}? This is an important question for it relates to deterministic analogs to the various applications stated above.

It is not too hard to make a couple of observations. One, any {\em non-adaptive} deterministic algorithm for~\Cref{prob:vector} using \linear-queries can in fact be recursively used to completely recover the whole vector.
This implies an $\tilde{\Omega}(N)$ information theoretic lower bound. Two, if one allows more rounds, then one can indeed do better using a binary-search style idea. More precisely, in each round the algorithm partitions the search space into $N^{1/r}$ parts and using $N^{1/r}$ queries finds a non-zero part. In this way
in $r$ rounds, one can gets algorithm making $N^{1/r}$-queries per round.
This leads to the following fundamental question which we answer in our paper.

\begin{quote}
	%	\centering
	\emph{What is the rounds-of-adaptivity versus query-complexity trade-off for deterministic algorithms for~\Cref{prob:vector}?}
\end{quote}

\subsection{Motivation and Perspective} 
Why should we care about the questions above? 
\begin{itemize}
	\item We think that algorithmic question of computation on graphs via queries is as natural and important as the reconstruction question.
	Indeed, our study was inspired by trying to understand the power of {\em cut}-queries to check whether a graph was connected or not; this is an (extremely) special case of submodular function minimization.
	More recently, this type of ``property-testing via queries'' question on graphs has been asked for matchings by Nisan~\cite{Nis19}, and more generally for matrix properties by~\cite{SunWYZ19} and~\cite{RWZ20}.	
	Single element recovery is also as natural as whole-vector recovery.
%	Single element recovery also has applications in domains where vector-reconstruction is traditionally used.
Indeed,	one can imagine a scenario where recovering a big\footnote{As we show later in~\Cref{lem:det-r-round-supp-recovery}, algorithmically we can get results when the ``single'' in single element recovery can be larger.}
subset of the support (diseased blood samples, say) faster and with fewer queries may be more  beneficial than reconstructing the whole vector. 
	
	\item The \linear query model is closely connected to {\em linear sketches} that have found plenty of applications in dynamic streaming; see, e.g.~\cite{Ganguly08,LiNW14,KallaugherK20}. 
	The single element recovery problem also has connections to the \emph{universal relation} $\UR^{\subset}$ problem in communication complexity, which was studied in~\cite{KapralovNPWWY17,NelsonY18}.
	Understanding these questions, therefore, have ramifications to other areas. As a concrete example, one consequence of our results is a deterministic, $O(r)$-pass dynamic streaming algorithm for graph connectivity in $\Ot(n^{1+1/r})$ space.
	This was not known before.

	\item We believe the question of the trade-off between rounds versus query complexity is natural and important, especially in today's world of massively parallel computing.
	Such trade-offs are closely related to similar questions in communication complexity, number of passes in streaming algorithms, etc. It is worthwhile building up an arsenal of tools to attack such questions.
	%There is often a gap between even non-adaptive and $2$-round questions. 
	Indeed, one main contribution of this paper is to show how LP-duality can be used as one such tool.	
	
	\item Why do we focus on deterministic algorithms? Mainly because, as mentioned above, our understanding of the complexity of randomized algorithms for the problems above is near complete.
However, in some applications one may require exponentially low error, or has to deal with an ``adversary'' (say, the one giving updates to a streaming algorithm) that is not oblivious to the algorithm's randomness; see, e.g.~\cite{Ben-EliezerJWY20}. 
This further motivates the study of deterministic algorithms in this context. Furthermore, we need to design lower-bounding techniques which only work against deterministic algorithms, and this is of technical interest.

\end{itemize}

\full{The above questions on round-versus-query-complexity trade-offs for both graph connectivity and single element recovery can be interpreted as asking bounds on {\em deterministic adaptive sketching}~\cite{KamathP19,AhnGM12b}.
The answers to the various queries can be thought of as the sketch. With \linear queries, the question is related to linear sketches, which in turn, is closely related to dynamic streaming~\cite{Ganguly08,LiNW14,KallaugherK20}. 
For instance, as we state below, one consequence of our results is a deterministic, $O(r)$-pass dynamic streaming algorithm for graph connectivity in $\Ot(n^{1+1/r})$ space.
}

%%We shall note that another important measure is the time-complexity of the algorithms. All our algorithms in this paper
%%are polynomial time while the lower bounds hold even against algorithms with unbounded computation time. 

\subsection{Our Results}\label{sec:results}
Our first result is a tight lower bound for the question on single element recovery. The binary-search style algorithm mentioned above is the best one can do. 
\begin{Theorem}\label{Thm:lb}
	For the single element recovery with \linear-query access, 
	any $r$-round, deterministic algorithm must make $\geq N^{1/r} - 1$ queries in some round.
\end{Theorem}
We should remind the reader that the above lower bound is for vectors whose domain is non-negative rationals. 
In particular, it does not hold for Boolean vectors\footnote{Indeed, for Boolean vector with \linear queries one can recover the whole vector if the query vector has exponentially large coefficients. Even when the coefficients are small ($\{0,1\}$ even), the vector can be recovered with $O(n/\log n)$-queries which is information theoretically optimal. }.
%Indeed, such a tight result is impossible (there needs to be log factors in the denominator). 
Moving to the continuous domain allows one to use tools from geometry, in particular duality theory and Caratheodory's theorem, to prove the tight lower bound.
\full{We discuss this in more detail in~\Cref{sec:overview}.}

As mentioned above, \linear queries are stronger than \OR queries, and thus the above lower bound holds for \OR queries as well.
The proof for \OR queries, however, is combinatorial, arguably simpler, and more importantly can be generalized to prove the following lower bound for~\Cref{prob:connectivity} as well.
%Our second result is a lower bound for finding a spanning forest using \OR queries.

\begin{Theorem}\label{Thm:det-graph-conn-lb}
	Any $r$-round deterministic algorithm for finding a spanning forest, must make $\Omgt(n^{1 + \frac{1}{r}})$-\OR queries. 
	\full{Formal statement in~\Cref{thm:det-graph-lower}.}
\end{Theorem}

As we explain below, the above smooth trade-off between rounds and query complexity is optimal, even when we allow the weaker \BIS-queries. Algorithmically, we have the following result.
We mention that such a result was not known even using \linear or \cross queries.
%Once again, recall \BIS-queries are weaker than \cross-queries, and so we would have the same result for \cross-queries as well. 
A similar lower bound as in \Cref{Thm:det-graph-conn-lb} with \cross-queries is left open.
\begin{Theorem}\label{Thm:det-graph-conn}
	For any positive integer $r$, there exists an $O(r)$-round deterministic algorithm which makes $\tilde{O}\left(n^{1+\frac{1}{r}}\right)$-\BIS queries per round, and returns a spanning forest of the graph.
\full{	Formal statement in~\Cref{thm:det-graph-conn-formal}.} % and~\Cref{thm:det-graph-lower}.
\end{Theorem}
It is worth remarking that 
our algorithm with \linear queries (which is implied by the weaker \BIS queries) above also implies an $O(r)$-pass $\tilde{O}(n^{1+1/r})$-space {\em deterministic} algorithm for maintaining a spanning forest in {\em dynamic} graph streams. 
As the edge updates arise, one simply updates the answers to the various queries made in each round. This result was not known before.

Finally, we show that for~\Cref{prob:connectivity}, randomization is helpful in decreasing the number of rounds. 
More precisely, we consider Monte-Carlo algorithms. 
%That is, the number of queries made by the algorithm is at most a deterministic quantity, and we have a probability error of $1/\poly(n)$, where, as usual, the exponent in the $\poly(n)$ can be made as large as the constant in the $O(\cdot)$.

\begin{Theorem}\label{Thm:rand-graph-conn-or}
	There exists a $2$-round randomized algorithm for graph connectivity which makes $\tilde{O}(n)$-\OR queries per round.
	There exists a $4$-round randomized algorithm for graph connectivity which makes $\tilde{O}(n)$-\BIS queries per round.
	Any non-adaptive, randomized algorithm for graph connectivity must make $\Omgt(n^2)$-\OR queries.
\full{	Formal statements in~\Cref{thm:rand-graph-conn-or-formal} and~\Cref{thm:rand-nonadaptive-or-lb-formal}.}
\end{Theorem}
%
%As alluded to earlier, with \linear queries one can use the AGM results in~\cite{AhnGM12} to get $\tilde{O}(n)$-query algorithms. This shows a separation between \linear and \OR queries. 
%We can show that one can obtain a similar result using just the weaker \cross-queries.
%%Finally, we show that for randomized algorithms, there is a qualitative separation between \cross-queries and \BIS-queries. In particular, we describe a non-adaptive, randomized algorithm which makes $\tilde{O}(n)$ \cross-queries. Such a result was only known~\cite{AhnGM12} for the stronger \linear-queries using linear sketching techniques (in particular, $\ell_0$-samplers). 
%
%\begin{Theorem}\label{Thm:rand-graph-conn-cross}
%	There exists a non-adaptive, randomized algorithm for graph connectivity which makes $\tilde{O}(n)$-\cross queries.
%	Formal statement in~\Cref{thm:rand-graph-conn-cross-formal}.
%\end{Theorem}

\noindent
\Cref{tab:summary} summarizes our contributions.
 \def\arraystretch{2}

\setlength{\tabcolsep}{4pt}

\newsavebox{\tabone}

\sbox{\tabone}{
             {\small
        \centering
        \begin{tabular}{|l|c|c|c|c|c|}
  	    \hline
       		\multicolumn{2}{|c|}{}& \multicolumn{2}{c}{{\linear $|$ \cross}} & \multicolumn{2}{|c|}{{\OR $|$ \BIS}}   \\
		\cline{3-6} 
		\multicolumn{2}{|c|}{}& Upper Bound& Lower Bound & Upper Bound & Lower Bound \\
		\cline{1-6} 
             \multirow{2}{50pt}{\textbf{Single Element Recovery}} & Det & $r$ , $N^{1/r}-1$ & $\bm{r}$ , $\bm{N^{1/r}-1}$  & $r$ , $N^{1/r}-1$ & $\bm{r}$ , $\bm{N^{1/r}-1}$ \\	
             \cline{2-6} 
		 & Rand&  $r = 1$ ,  $O(\log^2{N})$ &  $r=1$, $\Omega(\log^{2}{N})$ \cite{JowhariST11} & $r=1$, $O(\log^2{N})$ & $r=1$, $\Omega(\log^2 N)$\\
	\cline{1-6} 
%%      		1 & 2 & 3 & 4 & 5 & 6 \\
             \multirow{2}{70pt}{\textbf{Graph Connectivity}} & Det & $\bm{O(r)}$ , $\bm{n^{1+1/r}}$ & {\large \bf ?} & $\bm{O(r)}$ , $\bm{n^{1+1/r}}$ & $\bm{r}$ , $\bm{\Omgt(n^{1+1/r})}$ \\	
             \cline{2-6} 
		 & Rand & $\bm{r = 1}$ , $\Ot(n)$~\cite{AhnGM12} $|$ ${\Ot(n)}$~\cite{AhnGM12,SunWYZ19} & $r$, $\Omega(n/\log n)$ & $\bm{r = 2 | 4}$ , $\bm{\Ot(n)}$ & $\bm{r=1}$, $\bm{\Omgt(n^2})$ \\
	   \hline
        \end{tabular}
      }
  }
  
 \begin{table}[ht!]
\begin{tikzpicture}
   \node[fill=white](boz){};
\conf{\node[fill=white, inner xsep=-7pt, inner ysep=0pt](table)[right=-20pt of boz, scale=0.8]{\usebox{\tabone}};}
\full{\node[fill=white, inner xsep=-7pt, inner ysep=0pt](table)[right=-20pt of boz, scale=0.9]{\usebox{\tabone}};}
\end{tikzpicture}
          \caption{\conf{\footnotesize} \emph{Summary of the state-of-the-art and our results for graph connectivity and single-element recovery problems. 
          	In each cell, we write the number of rounds followed by the query complexity per round. All lower bounds are with respect to the stronger model (\linear and \OR).
          	For upper bounds, if there is a discrepancy between the stronger and weaker models, we show this using a $|$ as partition.
          	Bold results are ours. The remaining results are folklore unless a reference is explicitly cited.
          	The {\bf ?} indicates the main open question of our paper.}
        \label{tab:summary}}

    \end{table}

%drop shadow={black, shadow xshift=5pt,shadow yshift=-5pt, opacity=0.5},

\subsection{Technical Overview}\label{sec:overview}

In this section we give a technical overview of our results. These highlight the main underlying ideas and will assist in reading the detailed proofs which appear in the subsequent sections.\conf{\smallskip}

\full{\paragraph{Overview of~\Cref{Thm:lb}.}}
\conf{\noindent {\bf Overview of~\Cref{Thm:lb}.}}
It is relatively easy to prove an $r$-round lower bound for single element recovery in the \OR-query model via an adversary argument (see~\Cref{sec:lb-sing-el-or}).
At a high level, \OR-queries only mildly interact with each other and can be easily fooled.
Linear queries, on the other hand, strongly interact with each other. To illustrate: if we know $x(A)$ and $x(B)$ for $B\subseteq A$, then we immediately know $x(A\setminus B)$. This is untrue for \OR-queries -- if $x$ has a non-zero entry in both $A$ and $B$, nothing can be inferred about its entries in $A\setminus B$. Indeed, this power manifests itself in the non-adaptive, randomized algorithm using \cross-queries; it is important that we can use subtraction. This makes proving lower bounds against \linear-queries distinctly harder.

In our proof of~\Cref{Thm:lb}, we use duality theory. To highlight our idea, for simplicity, let's consider a warmup {\em non-adaptive} problem. The algorithm has to ask $\ll \sqrt{N}$ queries, and on obtaining the response, needs to return a subset $S\subseteq [N]$ of size $\ll \sqrt{N}$ with the guarantee that $\supp(x)\cap S$ is not empty. Note that if this were possible, then there would be a simple $2$-round $o(\sqrt{N})$-algorithm --- simply query the individual coordinates of $S$ in the second round. This is what we want to disprove. Therefore, given the first round's $\ll \sqrt{N}$ queries, we need to show there exists responses such that {\em no matter} which set $S$ of $\ll \sqrt{N}$ size is picked, there exists a feasible $x\in \RR^N_{\geq 0}$ which sets all entries in $S$ to $0$. Note this is a $\exists\forall\exists$-statement. How does one go ahead establishing this?

We first observe that for a fixed response $\ba$ and a fixed set $S$, whether or not a feasible $x\in \RR^N_{\geq 0}$ exists is asking whether a  system of linear inequalities has a feasible solution. %But this we understand. 
Farkas Lemma, or taking the dual,  tells us exactly when this is the case. The nice thing about the dual formulation is that the ``response'' $\ba$ becomes a ``variable'' in the dual program, as it should be since we are trying to find it. To say it another way, taking the dual allows us to assert conditions that the response vector $\ba$ must satisfy, and the goal becomes to hunt for such a vector. How does one do that? Well, the conditions are once again linear inequalities, and we again use duality. In particular, we use Farkas Lemma again to obtain conditions certifying the {\em non-existence} of such an $\ba$. The final step is showing that the existence of this certificate is impossible. This step uses another tool from geometry --- Carathedeory's theorem. Basically, it shows 
that if a certificate exists, then a {\em sparse} certificate must exist. And then a simple counting argument shows the impossibility of sparse certificates. This, of course, is an extremely high-level view and for just the warmup problem. In~\Cref{sec:lb-single-element} we give details of this warmup, an also details of how one proves the general $r$-round lower bound building on it.

The interested reader may be wondering about the two instantiations of duality (isn't the dual of the dual the primal?). We point out that duality can be thought of as transforming a $\exists$ statement into a $\forall$ statement: feasibility is a $\exists$ statement, Farkas implies infeasibility is a different $\exists$ statement, and negating we get the original feasibility as a $\forall$ statement. Since we were trying to assert a $\exists \forall \exists$-statement, the two instantiations of duality hit the two different $\exists$. \conf{\smallskip}

\full{\paragraph{Overview of~\Cref{Thm:det-graph-conn-lb}.} }
\conf{\noindent {\bf Overview of~\Cref{Thm:det-graph-conn-lb}.} }
%The simpler, combinatorial proof of \Cref{Thm:lb} against \OR-queries can be ``lifted'' to give an almost tight lower bound for \Cref{prob:connectivity} as well. 
At a high level, the lower bound for \Cref{prob:connectivity}, the spanning forest problem, boils down to a ``direct sum'' version of \Cref{prob:vector}, the single element recovery problem. Imagine the graph is an $n\times n$ bipartite graph. Therefore, finding a spanning forest requires us finding an edge incident to {\em each} of the $n$ vertices on one side. This is precisely solving $n$-independent versions of \Cref{prob:vector} in parallel. However, note that a single query can ``hit'' different instances at once. The question is, as all direct-sum questions are, does this make the problem $n$-times harder? We do not know the answer for \linear queries and leave this as the main open question of our work. However, we can show that the simpler, combinatorial proof of \Cref{Thm:lb} against \OR-queries does have a direct-sum version, and gives
an almost tight lower bound for \Cref{prob:connectivity}. 
%We complement the above result with a lower bound for the stronger $\OR$-queries where the existence of an edge can be checked among an arbitrary collection of vertex pairs. Our starting point is the simpler proof (\Cref{thm:lb-single-element-or}) of the lower bound for single element recovery for $\OR$-queries. As mentioned above, 
This is possible because \OR-queries, as mentioned in the previous paragraph, have only mild interaction between them.
%For the graph connectivity lower bound, we in fact prove something stronger. 
%We show that any $r$-round algorithm which for each vertex just finds a single edge incident on it must make $\Omega_r(n^{1+1/r})$-\OR queries.
%At a high level, the problem boils down to solving $n$ different single element recovery problems in parallel. However, the same \OR-query could ``hit'' many different instances at once. Again, we use the weakness of the \OR-queries 
We show that this interaction cannot help by more than a $\poly(r)$-factor. Our proof is an adversary argument, and a similar argument was used recently by Nisan~\cite{Nis19} to show that matchings cannot be approximated well by deterministic algorithms with \OR-queries. Details of this are given in~\Cref{sec:lb-det-conn-or}. \conf{\smallskip}
%At present, we are unable to lift the lower bound to \linear queries, and leave this as the main open question of our work.

\full{\paragraph{Overview of~\Cref{Thm:det-graph-conn}.} }
\conf{\noindent {\bf Overview of~\Cref{Thm:det-graph-conn}.} }
In \Cref{sec:warmup}, we show some simple, folklore, and known results for single element recovery.
We build on these algorithms to obtain our algorithms for \Cref{prob:connectivity}.
With every vertex one associates an unknown vector which is an indicator of its neighborhood.
If one applies the $r$-round binary-search algorithm for the single element recovery problem on each such vector, then in $r$-rounds with $O(n^{1+1/r})$-\BIS queries, for every vertex one can obtain a single edge incident on it. This alone however doesn't immediately help: perhaps, we only detect $n/2$ edges and get $n/2$ disconnected clusters. Recursively proceeding only gives an $O(r\log n)$-round algorithm. And we would like no dependence on $n$.

To make progress, we actually give a more sophisticated algorithm 	for single element recovery than binary search, which gives more and may be of independent interest. In particular, we describe an algorithm (\Cref{lem:det-r-round-supp-recovery}) for single element recovery which in $O(r)$ rounds, and making $N^{1/r}$-queries per round, can in fact return as many as $N^{1/4r}$ elements in the support. 
%This is not hard but needs to use some more sophisticated (than binary search) deterministic algorithms from combinatorial group testing.
Once we have this, then for graph connectivity we observe that in $O(r)$ rounds, we get polynomially many edges incident on each vertex.
Thus as rounds go on, the number of effective vertices decreases, which allows us to query more aggressively. Altogether, we get an $O(r)$-round algorithm making only $\tilde{O}(n^{1+1/r})$-\BIS queries. The details of this are described in~\Cref{sec:det-graph-conn}.
\conf{\smallskip}

\full{\paragraph{Overview of~\Cref{Thm:rand-graph-conn-or}.}}
\conf{\noindent {\bf Overview of~\Cref{Thm:rand-graph-conn-or}.}}
In the overview of the deterministic algorithm, we had to be a bit conservative in that even after every vertex found $k$ edges ($k$ being $1$ or $n^{O(1/r)}$) incident on it, we pessimistically assumed that after this step the resulting graph still has $\Theta(n/k)$ disconnected clusters, and we haven't learned {\em anything} about the edges across these clusters. In particular, we allow for the situation that the cross-cluster edges can be dense.
With randomization, however, we get to sample $k$ {\em random} edges incident on a vertex. 
This is where we use the recent result of Holm \etal~\cite{HolmKTZZ19} which shows that if the $k$ incident edges are random, then, as long as $k = \Omega(\log n)$, the number of {\em inter-component} edges between the connected components induced by the sampled edges, is $O(n/k)$. That is the cross-cluster edges are sparse.
Therefore, 
 in a single round
with $\tilde{O}(n)$-randomized \BIS queries, we can obtain a disconnected random subgraph, but one such that, whp, there exist
at most $\tilde{O}(n)$ edges across the disconnected components.
%$O(\log^2 n)$-edges uniformly at random from each vertex. 
%This can still lead to many disconnected clusters. However, 
%we prove  a fact (\Cref{lem:edge-sampling-sparse}) which may be of independent interest. We show that in any undirected graph, if every vertex samples $O(\log^2 n)$ edges uniformly at random, then whp there is at most $O(n\log n)$ edges across the disconnected components of this random subgraph.

Given the above fact, the algorithm is almost immediate. After round $1$, we are in a sparse graph (where nodes now correspond to subsets of already connected vertices). If we were allowed general $\OR$-queries, then a single round with $\tilde{O}(n)$-\OR queries suffices to learn this sparse graph, which in turn, gives us a spanning forest in the original graph. This follows from algorithms for single element recovery when the vector is promised to be sparse (discussed in \Cref{sec:warmup}).
Unfortunately, these queries may not be \BIS-queries; recall that \BIS-queries are restricted to ask about edges across two subsets.
Nevertheless, we can show how to implement the above idea using $2$-extra rounds with only $\BIS$-queries, giving a $4$-round algorithm.
Details can be found in~\Cref{sec:algo-or}.%~\cite{ACK20}.

To complement the above, we also prove that even with randomization, one cannot get non-adaptive ($1$-round) $o(n^2/\log^2 n)$-query algorithms with $\OR$-queries .
Indeed, the family of examples is formed by two cliques (dense graphs) which could have a single edge, or not, that connects them.
A single collection of $o(n^2/\log^2 n)$-\OR queries cannot distinguish between these two families. Details can be found in~\Cref{sec:lb-rand-one}.
 
%\paragraph{Overview of~\Cref{Thm:rand-graph-conn-cross}.} 
%%With $\cross$-queries, one can indeed get a non-adaptive randomized algorithm.
%%Our algorithm is inspired by, and in the style of,the Ahn-Guha-McGregor~\cite{AhnGM12} algorithm which does the same with $\linear$-queries. 
%As mentioned in the Introduction, the one-pass dynamic algorithm of Ahn, Guha, and McGregor~\cite{AhnGM12} for graph connectivity implies a $\tilde{O}(n)$ \linear-query algorithm as well. 
%In the AGM algorithm of~\cite{AhnGM12}, the authors sample $O(\log n)$ different $\ell_0$-samplers (which are just linear queries), and each sampler allows them to sample a random edge across a collection of disjoint connected components. Together, these $O(\log n)$ different samplers allows them to construct the spanning forest of a graph.
%
%We show that one can the replace the $\ell_0$-sampler (that uses a more powerful tool than $\cross$-queries) in each round by a data structure (which we call \MultiCutSampler) which has the required property --- given input a partition of vertices, it returns a random edge from the cut-set of every part. More pertinently, the construction of the data structure is done making only $\tilde{O}(n)$-non-adaptive \cross-queries.
%The high level approach is similar to those in the construction~\cite{FrahlingIS08,JowhariST11} of $\ell_0$-samplers, however, we need to bring in many new ideas to make the algorithm work only with $\cross$-queries. Details can be found in~\Cref{sec:lin-graph-conn}.

\subsection{Related Works}
Our work falls in the broad class of algorithm design in the {\em query access model}, where one has limited access to the input.
Over the years there has been a significant amount of work relevant to this paper including in graph reconstruction~\cite{GrebinskiK98,AlonBKRS02,AlonA05,ReyzinS07,ChoiK08,Bshouty09,BshoutyM11,Mazzawi10,BshoutyM12,AngluinC08,AbasiB18}, parameter estimation~\cite{RonT16,DellL18,BeameHRRS18,BhattacharyaBGM18}, minimum cuts~\cite{RubinsteinSW18,AssadiCK19}
sketching and streaming~\cite{FlajoletM85,AMS99,FrahlingIS08,JowhariST11,AhnGM12,KapralovNPWWY17,kapralov2017single,AssadiCK19,NelsonY18,SunWYZ19}, combinatorial group testing, compressed sensing, and coin weighing~\cite{Dorfman43,Bshouty09,CormodeM06,D01,Donoho06,RV97,DH00}.
It is impossible to do complete justice, but in~\Cref{sec:related} we give a little more detailed discussion of some of these works and how they fit in with our paper. 
%At this point, we mention a recent paper of Nisan~\cite{Nis19} which considers the problem of finding matchings in bipartite graphs using \OR-queries. Of relevance is a lower bound for deterministic algorithms: if such an algorithm is restricted to make $n^{1.99}$ queries and get even a polynomial approximation factor, then polynomially many rounds are needed. In contrast, a $O(\eps^{-2}\log n)$-round, $\tilde{O}(n)$ \OR-query per round {\em randomized} algorithm~\cite{DNO13} gives an $(1-\eps)$-approximation.

\full{
\subsection{Notation}
%We spend a paragraph stating some standard notations we use.
Throughout the paper, for a positive integer $p$, $[p]$ denotes the set $\{1,2,\ldots,p\}$. 
Our randomized algorithms are Monte-Carlo and make a fixed number of queries but fail with some probability.
We use ``with high probability'' or ``whp'' to denote a failure probability of $\frac{1}{\poly(n)}$ where $n$ is the relevant size parameter.
The exponent of the polynomial can be traded off with the constant in the query complexity.
Given an undirected multigraph $G$, and two disjoint subsets $S$ and $T$ of vertices, we use $E(S,T)$ to denote the collection of pairs $(s,t)\in S\times T$ such that there is at least one edge between $s$ and $t$. We use $\partial(S)$ to denote $E(S,S^c)$.
We use  $\tilde{O}(f(n))$ to hide $\polylog(f(n))$-factors.
}

\def\bbA{\mathbf{A}}
\def\bbB{\mathbf{B}}
\def\bone{\mathbf{1}}
\def\bzero{\mathbf{0}}
\def\bw{\boldsymbol{w}}%\def\ba{\mathbf{a}}
\def\bv{\boldsymbol{v}}%\def\ba{\mathbf{a}}
\def\bs{\boldsymbol{s}}
\renewcommand{\epsilon}{\varepsilon}

%\begin{document}
	
\section{Lower Bound for \ProblemOne}\label{sec:lb-single-element}

In this section,
% we show that even for \linear queries, the trivial $\DetBinSearch$ algorithm mentioned in~\Cref{lem:bin-search} is tight.
%In particular, 
we prove the following theorem.
\setcounter{theorem}{0}
\begin{theorem}\label{thm:lb-lin-single-element}
	Any $r$-round deterministic algorithm for \ProblemOne must make $\geq (N^{1/r} - 1)$-\linear queries in some round.
%	Any successful $r$-round deterministic algorithm for \ProblemOne must make $\geq n^{1/r} - 1$ queries in some round.
\end{theorem}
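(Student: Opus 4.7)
The plan is to prove \Cref{thm:lb-lin-single-element} by an adversary argument whose technical core consists of two successive applications of Farkas' lemma together with Carath\'eodory's theorem. Concretely, against any deterministic $r$-round algorithm using fewer than $N^{1/r}-1$ queries in each round, the adversary will choose round-by-round responses so that at the end two non-zero vectors $x^1, x^2 \in \mathbb{R}^N_{\geq 0}$ with $\supp(x^1)\cap\supp(x^2)=\emptyset$ remain consistent with the entire transcript; since the algorithm outputs a single coordinate, it must err on at least one input.

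I would begin with the illustrative $r=2$ case flagged in the technical overview, as it isolates the key geometric idea. Let $Q \in \mathbb{R}^{q \times N}_{\geq 0}$ be the round-one query matrix with $q < \sqrt{N}-1$. The first goal is to produce a response $\ba$ such that for every $S \subseteq [N]$ with $|S|\le q$ the set
\[
  X(\ba,S)\;=\;\bigl\{\,x \in \mathbb{R}^N_{\geq 0}\,:\,Qx = \ba,\ x_i = 0 \ \forall i\in S,\ x\ne \bzero\,\bigr\}
\]
is non-empty. By Farkas' lemma, for fixed $\ba,S$ this is equivalent to $\ba \in \mathrm{cone}(Q_{\bar S})$, where $Q_{\bar S}$ is the submatrix of columns outside $S$. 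Thus the adversary needs
\[
  \ba\;\in\;\bigcap_{|S|\le q}\mathrm{cone}(Q_{\bar S})\setminus\{\bzero\},
\]
itself a feasibility problem in $\ba$. I would then apply Farkas' lemma a \emph{second} time to characterize when this intersection is trivial: any obstruction yields a ``certificate'' combining witnesses $\{y_S\}$ into a contradiction over the columns of $Q$. Carath\'eodory's theorem lets me assume this certificate is supported on at most $O(q)$ columns of $Q$, after which a direct counting argument --- there are $N > 2q+1$ columns, so an admissible $S$ can always be chosen to evade any fixed $O(q)$ of them --- rules out the existence of a sparse certificate, and hence produces the desired $\ba$.

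For the general $r$-round bound I would iterate this machinery, applying the two-Farkas/Carath\'eodory argument to the cumulative query matrix formed by stacking $Q_1,\ldots,Q_t$ at the start of each new round $t+1$. The inductive invariant after round $t$ states, roughly, that the cone of non-negative solutions consistent with the transcript remains geometrically rich: for every subset $S$ of size at most $N^{(r-t)/r}-1$, a non-zero consistent solution vanishing on $S$ still exists. This invariant composes cleanly because Farkas plus Carath\'eodory control the shrinkage by a factor of at most $N^{1/r}$ whenever a fresh round contributes $\le N^{1/r}-2$ queries, and at $t=r$ it directly forces the required disjoint-support pair. I anticipate the main obstacle to be pinning down the exact formulation of this invariant: it must be preserved under \emph{cumulative} (not just fresh) queries, and the arithmetic matching the Carath\'eodory sparsity bound to a per-round factor of $N^{1/r}$ is the delicate step that actually drives the tight bound in \Cref{thm:lb-lin-single-element}.
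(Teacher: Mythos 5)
Your proposal follows the same architecture as the paper's proof: an adversary argument organized around a safe-set invariant, established via two applications of Farkas' lemma (once to dualize the per-$S$ feasibility system into conditions on the answer vector $\ba$, and once to obtain an infeasibility certificate for the adversary's response system), with Carath\'eodory's theorem used to bound the certificate to $O(k)$ terms so that a support-counting contradiction closes the warmup, and an induction on rounds with the invariant $|S| \le s_i$ where $(k_i+1)s_i \le s_{i-1}$ carrying the argument to $r$ rounds. Your warmup ``trapping'' formulation, the ``two consistent vectors with disjoint supports'' goal (equivalent to the paper's ``all singletons remain safe''), and the anticipated delicate point (that the invariant must be preserved under the cumulative query matrix, not just the fresh round) all match the paper's development, so this is essentially the same proof.
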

As discussed in the introduction, this is not difficult to show for \OR-queries (see~\Cref{thm:lb-single-element-or} in ~\Cref{sec:lb-sing-el-or}), however, it takes some work to obtain the result for \linear-queries.
Following the overview in \Cref{sec:overview}, we start by describing the lower bound for a simple ``trapping problem'' problem which illustrates the main ideas. The general proof follows inductively.
One piece of notation before we begin: given any subset $S\subseteq [N]$, we use $\bone_S$ to denote the $N$-dimensional indicator vector of the subset $S$ with $1$ in the index corresponding to elements in $S$.

\subsection{Warmup: A One Round Lower Bound for a Trapping Problem}

In this setting, there are two parameters $k$ and $s$. The former is an upper bound on the number of non-adaptive ($1$-round) queries.
The objective of the algorithm, after obtaining the answers to the queries, is to find a {\em subset} $S\subseteq [N]$ with $|S| \leq s$ such that $x_j > 0$ for {\em some} $j\in S$. 
That is, a subset $S$ which traps an element of the support. Note that if such an algorithm exists, then there is a $2$-round algorithm for single element recovery making $k$ queries in round $1$ and $s$ queries in round $2$.
We also assume that $x$ is a non-zero vector since otherwise $x([N]) = 0$. Furthermore, by scaling, we assume that $x([N]) = 1$. 
The main lower bound statement is the following.
\begin{theorem}\label{thm:toy-lb}
	If $(k+1)s < N$, then there cannot exist such an algorithm.
\end{theorem}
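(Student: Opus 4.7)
The plan is to recast correctness of an arbitrary non-adaptive algorithm as a polyhedral-cone condition on the query matrix $A$, apply Farkas' lemma twice (once to eliminate the unknown input $x$, once to pass to the dual cone), and finish with Carath\'{e}odory's theorem together with a short counting step. I will describe the algorithm by a query matrix $A\in\IR^{k\times N}_{\ge 0}$ (rows are the $k$ non-negative queries) and a decision function $f$ mapping each response $\ba=Ax$ to a set $f(\ba)\subseteq [N]$ of size at most $s$; the algorithm is correct iff $\supp(x)\cap f(Ax)\ne\emptyset$ for every nonzero $x\ge 0$.

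The first use of Farkas rewrites ``some bad input gives response $\ba$ and evades the output $S$''. A bad $x$ exists iff the system $\{x\ge 0,\ Ax=\ba,\ x_S=0\}$ is feasible, i.e.\ iff $\ba\in\mathrm{cone}(A_{\bar S})$; by Farkas the complementary infeasibility is witnessed by some $w\in\IR^k$ with $w^TA_j\le 0$ for all $j\notin S$ and $w^T\ba>0$. Thus the algorithm is correct iff no nonzero $\ba\in\mathrm{cone}(A)$ lies in $\bigcap_{|S|\le s}\mathrm{cone}(A_{\bar S})$, and since this intersection is already contained in $\mathrm{cone}(A)$, correctness is equivalent to $\bigcap_{|S|\le s}\mathrm{cone}(A_{\bar S})=\{0\}$.

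Next I will polarize: using $(\bigcap_S C_S)^{*}=\sum_S C_S^{*}$ for finitely generated cones together with $\mathrm{cone}(A_{\bar S})^{*}=\{w : w^TA_j\ge 0\ \forall j\notin S\}$, the condition becomes $\sum_{|S|\le s}\{w : w^TA_j\ge 0\ \forall j\notin S\}=\IR^k$. The union of these cones over $|S|\le s$ is exactly the set of ``$s$-sparse-negative'' witnesses $W:=\{w\in\IR^k : |\{j : w^TA_j<0\}|\le s\}$, and a Minkowski sum of cones equals the conic hull of their union; so the correctness condition reduces to the positive-spanning requirement $\mathrm{cone}(W)=\IR^k$, which is the ``hunt for a certificate'' phase flagged in the overview.

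Assume $\mathrm{cone}(W)=\IR^k$ and aim for a contradiction with $(k+1)s<N$. Zero columns first: any $j$ with $A_j=\bzero$ must lie in $f(\bzero)$ (otherwise $x=e_j$ defeats the algorithm), so $Z=\{j : A_j=\bzero\}$ has $|Z|\le s$. For the $N'=N-|Z|$ nonzero columns I take $v=-\bone\in\IR^k$ and invoke conic Carath\'{e}odory to write $v=\sum_{i=1}^{m}\lambda_iw_i$ with $m\le k$, $\lambda_i\ge 0$, and $w_i\in W$. Let $B_i=\{j : w_i^TA_j<0\}$, so $|B_i|\le s$. For every nonzero column $A_j$, $v^TA_j=-\sum_\ell A_{\ell j}<0$, yet if $j\notin\bigcup_i B_i$ then $v^TA_j=\sum_i\lambda_iw_i^TA_j\ge 0$---a contradiction. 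Hence every nonzero column index lies in some $B_i$, so $N'\le\sum_i|B_i|\le ks$ and $N\le N'+|Z|\le(k+1)s$, contradicting the hypothesis. The main subtlety I anticipate is the closure issue in the polar identity $(\bigcap C_S)^{*}=\sum C_S^{*}$, but since each $\mathrm{cone}(A_{\bar S})$ is finitely generated (hence closed) and finite Minkowski sums of finitely generated cones are again finitely generated, no closure corrections are required and every step is a literal equality; the choice of $v=-\bone$ in the final step is the crucial trick that lets the non-negativity of $A$ interact with the Carath\'{e}odory bound of $k$ to yield the sharp factor $(k+1)$.
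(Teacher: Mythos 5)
Your proof is correct and arrives at the tight bound $(k+1)s$, and at heart it deploys the same three tools as the paper — Farkas/LP duality (used twice), conic Carathéodory, and a terminal counting step — but the packaging is a genuinely different and cleaner route. The paper works in the ``answer space'': it introduces a normalizing $0$th row $\bone_{[N]}$, writes down the system $\cQ$ whose variables are the answers $\ba$, assumes infeasibility of $\cQ$, and extracts from the resulting Farkas certificate a $(k{+}1)$-term conic combination whose support comparison yields the contradiction. You instead never construct $\ba$ at all: you observe that correctness of $(A,f)$ is equivalent to $\bigcap_{|S|\le s}\mathrm{cone}(A_{\bar S})=\{0\}$, polarize to $\mathrm{cone}(W)=\IR^k$ where $W$ is the set of ``$s$-sparse-negative'' witnesses, and then the choice $v=-\bone$ (exploiting $A\ge 0$) lets conic Carathéodory in $\IR^k$ (giving $m\le k$ rather than the paper's $k{+}1$) plus the separate bound $|Z|\le s$ on zero columns recover the $(k{+}1)s$ threshold by direct covering of column indices. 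The trade-off is that the paper's heavier transcript-and-$\cQ$ machinery is exactly what makes its argument inductable to $r$ rounds (\Cref{thm:realmain}), whereas your cone-polar formulation is prettier for a single round but does not obviously admit the same induction since there is no explicit response vector to condition on.

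Two small precision points worth tightening. First, the sentence ``correctness is equivalent to $\bigcap_{|S|\le s}\mathrm{cone}(A_{\bar S})=\{0\}$'' is only one direction of what you actually need: the cone condition does not by itself imply correctness if $|Z|>s$ (the input $x=\bone_Z$ with $\ba=0$ still defeats any $f$). You do handle this by deriving $|Z|\le s$ separately from correctness, and since the overall proof only uses ``correct $\Rightarrow$ cone condition $\wedge$ $|Z|\le s$'', the argument is sound; just state it as an implication rather than an equivalence, or fold $|Z|\le s$ into the characterization. Second, in the phrase ``a bad $x$ exists iff the system $\{x\ge 0, Ax=\ba, x_S=0\}$ is feasible'' you should note that this is only literally true for $\ba\neq 0$ (for $\ba=0$ the system is trivially feasible via $x=0$, which is not a bad input); again this is implicitly handled by your separate treatment of $Z$, but it is worth saying explicitly so the reader doesn't stumble.
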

\noindent
Note that if $s$ divides $N$, then $k = \frac{N}{s} - 1$ queries indeed do suffice. So the above theorem is tight.
\begin{proof}
	We let $\bbA$ denote the $k\times N$ matrix corresponding to the $k$ queries arranged as row vectors. 
	We use $\ba\in \RR^k_{\geq 0}$ to denote the answers we will give to fool any algorithm. To find this, fix any subset $S$ with $|S|\leq s$, and
 	consider the following system of inequalities parametrized by the answer vector $\ba$. The only inequalities are the non-negativity constraints.
	\begin{equation}\label{eq:toy-primal}
	\calP(\ba; S) = \{x\in \RR^N_{\geq 0} ~:~ x([N]) = 1 ~~~~ \bbA\cdot x = \ba~~~ x(S) = 0 \} \tag{P}
	\end{equation}
	Note that if $\calP(\ba; S)$ has a feasible solution, then given the answers $\ba$ to its queries, the algorithms {\em cannot} return the subset $S$.
	This is because there is a non-negative $x$ consistent with these answers with $S$ disjoint from its support. In other words, $S$ is {\em safe} for the lower bound w.r.t. $\ba$.
	Therefore, if there exists an answer vector $\ba$ such that {\em every} subset $S\subseteq [N]$ with $|S| \leq s$ is safe with respect to $\ba$, that is $\calP(\ba;S)$ is feasible, then we would have proved our lower bound.
	We use use duality and geometry to prove the existence of this vector (if $(k+1)s < N$).
	
	The first step is to understand when for a {\em fixed} set $S$, the system $\calP(\ba;S)$ is infeasible. This is answered by Farkas Lemma. In particular, consider the following 
	system\footnote{Here $\by$ and $\bone_S$ are  {\em row vectors}. In the general proof, there will be multiple $\by$'s indexed with super-scripts. All of them are row-vectors. Putting an added $^\top$ would be a notational mess.} of inequalities where the variables are the Lagrange multipliers corresponding to the equalities in \eqref{eq:toy-primal}. We note that the variables are {\em free}, since $\calP(\ba;S)$ has only equalities in the constraints.
	For convenience, we have eliminated the variable corresponding to the subset $S$ and have moved it to the right hand side.
	\begin{equation}\label{eq:toy-cs}
	\calC_S := \Big\{~~(y^{(0)}, \by) \in \QQ \times \QQ^{k}: ~~y^{(0)} \cdot \bone_{[N]	} + \by \cdot \bbA \leq \bone_S ~~\Big\} \notag
	%~ {y^{(0)}}^\top \bbA^{(0)} + {y^{(1)}}^\top \bbA^{(1)} + \cdots + {y^{(i)}}^\top \bbA^{(i)} \leq \bone_S\}
	\end{equation}
	Farkas Lemma asserts that  $\calP(\ba;S)$ is {\em infeasible} if and only if there exists $\by\in \calC_S$ such that $y^{(0)}\cdot 1 + \by\cdot \ba > 0$. 
	Contrapositively, we get that $\calP(\ba;S)$ is feasible, that is $S$ is safe with respect to $\ba$, iff 
	$y^{(0)} +  \by\cdot \ba \leq 0$ for {\em all} $\by \in \calC_S$. Since we want an answer $\ba$ such that {\em all} subsets $S$ with $|S|\leq s$ are safe, we conclude that such an answer exists if and only if the following system of linear inequalities has a feasible solution.
	\begin{equation}
	\label{eq:toy-dual}
	\cQ :=  \Big\{~~\ba \in \QQ^{k}_{\geq 0}:~~~ \by\cdot \ba \leq - y^{(0)}, ~~~\forall S\subseteq [N], |S|\leq s, ~~\forall (y^{(0)},\by) \in \calC_S ~~\Big\} \tag{D}
	\end{equation}
	
	In summary, to prove the lower bound, it suffices to show that $\cQ$ has a feasible solution, and this solution will correspond to the answers to the queries.
	Suppose, for the sake of contradiction, $\cQ$ is {\em infeasible}. Then, again by Farkas Lemma (but on a different system of inequalities), there exists multipliers $\lambda_t \geq 0$ corresponding to constraints $\left(S_t ~\textrm{s.t.}~ |S_t|\leq s, ~~ (y^{(0)}_t, \by_t) \in \calC_{S_t}\right)$ for some $t = 1\ldots T$ such that
	(P1) $\sum_{t=1}^T \lambda_t \by_t \geq \bzero_{k}$ where $\bzero_{k}$ is the $k$-dimensional all zero (row) vector, and (P2) $\sum_{t=1}^T \lambda_t y^{(0)}_t > 0$. Note that this time $\lambda_t$'s are non-negative since $\cQ$ has inequalities in the constraints.
	
	We can focus on the $\lambda_t$'s which are {\em positive} and discard the rest. The next key observation is to {\em upper bound} the size $T$ of the support. 
	Note that the conditions (P1) and (P2) can be equivalently stated as asserting that the $(k+1)$-dimensional {\em cone} spanned by the vectors $(y^{(0)}_t, \by_t)$ contains a non-negative point with first coordinate positive.
	Caratheodory's theorem (for cones) asserts that any such point can be expressed as a conic combination of at most $(k+1)$ vectors. 
	Therefore, we can assume that $T \leq k+1$.
	
	Now we are almost done. Since $(y^{(0)}_t, \by_t) \in \calC_{S_t}$, we have $y^{(0)}_t \cdot \bone_{[N]} + \by_t \cdot \bbA \leq \bone_{S_t}$.
	Taking $\lambda_t$ combinations and adding, we get (since all $\lambda_t > 0$) that
	\[
	\left(\sum_{t=1}^T \lambda_t y^{(0)}_t\right) \cdot \bone_{[N]} + \left(\sum_{t=1}^T \lambda_t \by_t\right)\cdot \bbA ~\leq ~ \sum_{t=1}^T \lambda_t \bone_{S_t}
	\]
	Since every $|S_t|\leq s$, the support of the right hand side vector is $\leq sT \leq s(k+1)$. The support of the left hand side vector is $\geq N$. This is because the second summation is a non-negative vector by (P1), and the first has full support.
	This contradicts $(k+1)s < N$. Hence, $\cQ$ has a feasible solution, which in turn means there exists answers $\ba$ which foils $\bbA$. This proves \Cref{thm:toy-lb}. %Also note that the answers can be found by linear programming.
\end{proof}

\subsection{The General $r$-round Lower Bound}
We begin by formally defining what an $r$-round deterministic algorithm is, and what it means for such an algorithm to successfully solve \ProblemOne.

%
%\paragraph{Linear Query Model.}
%We access an an unknown $n$-dimensional non-negative, non-zero vector $x\in \QQ^n_{\geq 0}$ via {\em linear queries/measurements}.
%More precisely, given any vector $v\in \QQ^n$ of our choice, we can obtain the inner product $v\cdot x$. Without loss of generality, by doubling the number of queries, we may assume $v\in \QQ^n_{\geq 0}$.

\begin{definition}[$r$-round deterministic algorithm.]
	An $r$-round deterministic algorithm $\ALGO$ %for the \ProblemOne problem 
	proceeds by making a collection of linear queries $\bbA^{(1)} \in \QQ^{k_1 \times N}_{\geq 0}$ and obtains the answer $\ba^{(1)} = \bbA^{(1)} \cdot x$. This is the first round of the algorithm.
	For $1 < i \leq r$, in the $i$th round the algorithm makes a collection of linear queries $\bbA^{(i)} \in \QQ^{k_i\times N}_{\geq 0}$.
	This matrix depends on the history $(\bbA^{(1)}, \ba^{(1)}), \ldots, (\bbA^{(i-1)}, \ba^{(i-1)})$. Upon making this query it obtains the answer $\ba^{(i)} = \bbA^{(i)}\cdot x$. We call $\Pi_r := \left((\bbA^{(1)}, \ba^{(1)}), \ldots, (\bbA^{(r)}, \ba^{(r)})\right)$ the $r$-round {\bf \em transcript} of the algorithm. The output of the deterministic algorithm $\ALGO$ only depends on $\Pi_r$.
	
	A vector $y\in \QQ^N_{\geq 0}$ is said to be {\bf \em consistent} with respect to a transcript $\Pi_r$ if 
	$\bbA^{(i)}\cdot y = \ba^{(i)}$ for all $1\leq i\leq r$. A transcript $\Pi_r = \left((\bbA^{(1)}, \ba^{(1)}), \ldots, (\bbA^{(r)}, \ba^{(r)})\right)$ is {\bf \em feasible} for the algorithm if there is some vector $y$ consistent with respect to it, and if
	the algorithm indeed queries $\bbA^{(i)}$ given the $(i-1)$-round transcript $\left((\bbA^{(1)}, \ba^{(1)}), \ldots, (\bbA^{(i-1)}, \ba^{(i-1)})\right)$
\end{definition}

%In the \ProblemOne problem, the objective is to return a coordinate $j\in [N]$ with $x_j > 0$. In other words, it has to return a coordinate in the support.

\begin{definition}
	An $r$-round deterministic algorithm $\ALGO$ is said to successfully solve \ProblemOne if for all non-zero $x\in \QQ^N_{\geq 0}$, 
	upon completion of $r$-rounds the algorithm $\ALGO$ returns a coordinate $j\in [N]$ with $x_j > 0$. 
	In particular, if the algorithm returns a coordinate $j$ given a feasible transcript $\Pi_r$, then {\em every} $x$ that is consistent with $\Pi_r$ must have $x_j > 0$.
%The query of the algorithm is defined as the maximum number of queries the algorithm makes in any round, that is, $\max(k_1, \ldots, k_r)$.
\end{definition}

%\section{Lower Bound}
%
%In this section 
%We wish to prove that any $r$-round deterministic algorithm that successfully solves \ProblemOne must make at least $(n^{1/r} - 1)$ queries in some round. 
For technical reasons, we add a $0$th-round for any $r$-round algorithm. In this round, the query ``matrix'' $\bbA^{(0)}$ is the single $N$-dimensional row with all ones. That is, we ask for the sum of $x_j$ for all $j\in [N]$. We assume that the answer $\ba^{(0)}$ is the scalar $1$ to capture the fact that the vector $x$ is non-zero.

%
%Let us begin with some comments.
%\begin{itemize}%[noitemsep]
%	\item 
%We will be using the following notation: given a sequence of vectors $\bv^{(0)}, \bv^{(1)}, \ldots, \bv^{(i)}$, we will use
%$\bv^{(\leq i)}$ to denote the concatenated vector $(\bv^{(0)}, \ldots, \bv^{(i)})$. Given two such vectors $\bv^{(\leq i)}$ and $\bw^{(\leq i)}$ such that $\bv^{(j)}$ and $\bw^{(j)}$ are of the same dimension for all $0\leq j\leq i$, we use
%$\bv^{(\leq i)}\cdot \bw^{(\leq i)}$ to denote, as expected, the sum $\sum_{j=0}^i \bv^{(j)}\cdot \bw^{(j)}$.
%
%
%	\item We 
%	
%\end{itemize}

%	An $i$-round transcript $\Pi = (\bbA^{(0)}, \ba^{(0)}), \ldots, (\bbA^{(i)}, \ba^{(i)})$ is said to {\bf trap} a subset $S\subseteq [N]$ if every $x$ that is consistent with $\Pi$ has $x(S) = \sum_{j\in S} x_j > 0$. 
%	Otherwise $S\subseteq [N]$ is said to be {\bf safe} with respect to $\Pi$. For a successful algorithm, every feasible transcript must trap a singleton subset. 
%	

Next we define the notion of safe subsets with respect to a transcript generated till round $i$. A safe subset of coordinates are those for which there is a consistent vector $x$ whose support is disjoint from the subset, that is, $x_j = 0$ for all $j\in S$, or equivalently $x(S) = 0$ since $x\geq0$.
	\begin{definition}
		Given an $i$-round transcript $\Pi_i = \left((\bbA^{(0)}, \ba^{(0)}), \ldots, (\bbA^{(i)}, \ba^{(i)})\right)$, a subset $S\subseteq [N]$ is {\bf safe} w.r.t. $\Pi_i$ if the following system of linear inequalities
	\begin{equation}\label{eq:primal} \tag{Primal}
	\calP(\ba^{(\leq i)}; S) := \Big\{~~~x \in \QQ^{N} : \begin{cases}
	\bbA^{(j)} \cdot x = \ba^{(j)} & \forall 0\leq j\leq i \\
	x(S) = 0 & \\
	x \geq 0 & 
	\end{cases}~~~\Big\}
	\end{equation}
		has a feasible solution.
	\end{definition}
%	In plain English, a subset $S\subseteq [N]$ is safe with respect to a transcript $\Pi_i$ if there is a vector consistent with $\Pi_i$ which sets every coordinate $j\in S$ to $0$.
	\begin{claim}\label{clm:safe}
		If $\Pi_r$ is a feasible $r$-round transcript of an algorithm $\ALGO$ such that {\em all} singletons are safe w.r.t $\Pi_r$, then the algorithm $\ALGO$ cannot be successful in solving \ProblemOne.
	\end{claim}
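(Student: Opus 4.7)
The plan is to argue by contradiction and simply unwind the definitions; the statement is essentially a sanity check that ``safety of all singletons'' captures exactly the obstruction to successful single-element recovery.

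First, I would fix a feasible transcript $\Pi_r$ with the property that every singleton $\{j\} \subseteq [N]$ is safe, and suppose for contradiction that $\ALGO$ successfully solves \ProblemOne. Because $\ALGO$ is deterministic and its output depends only on $\Pi_r$, there is a well-defined coordinate $j^\star \in [N]$ that $\ALGO$ returns after the transcript $\Pi_r$. The successfulness of $\ALGO$ says: for \emph{every} non-zero $x \in \QQ^N_{\geq 0}$ that would cause $\ALGO$ to produce precisely this transcript, the output coordinate must satisfy $x_{j^\star} > 0$. Since any $x$ consistent with $\Pi_r$ produces exactly this transcript when fed to $\ALGO$ (the algorithm is deterministic, and by definition of a feasible transcript, the queries $\bbA^{(i)}$ it makes on $x$ are precisely those recorded in $\Pi_r$), we conclude that every consistent $x$ satisfies $x_{j^\star} > 0$.

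Next I would invoke the safety of the singleton $\{j^\star\}$. By definition this means the system $\calP(\ba^{(\leq r)}; \{j^\star\})$ is feasible, so there exists $x^\star \in \QQ^N_{\geq 0}$ with $\bbA^{(i)} \cdot x^\star = \ba^{(i)}$ for all $0 \le i \le r$ and with $x^\star_{j^\star} = 0$. This $x^\star$ is by construction consistent with $\Pi_r$, contradicting the conclusion of the previous paragraph as soon as we know $x^\star$ is a legal input, i.e.\ non-zero.

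The only place where care is needed is precisely this last point: the feasibility and consistency definitions only require $\bbA^{(i)} \cdot x^\star = \ba^{(i)}$ and $x^\star \ge 0$, so a priori the ``safe witness'' might be the all-zero vector, which is excluded from the problem domain. The key observation is the role of the artificially added $0$th round, where $\bbA^{(0)} = \bone_{[N]}^\top$ and $\ba^{(0)} = 1$. This constraint is built into $\calP(\ba^{(\leq r)}; \{j^\star\})$ as well, so any feasible $x^\star$ automatically has $\sum_i x^\star_i = 1 > 0$ and is therefore a bona fide non-zero input. With this in hand the contradiction closes and the claim follows. No real obstacle beyond this definitional bookkeeping is expected.
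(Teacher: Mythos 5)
Your proof is correct and follows the same approach as the paper: the algorithm must output some coordinate $j^\star$, the safety of $\{j^\star\}$ supplies a consistent input $x^\star$ with $x^\star_{j^\star}=0$, and feeding $x^\star$ to the (deterministic) algorithm reproduces $\Pi_r$ and hence the wrong output. Your additional observation that the built-in $0$th-round constraint $\bbA^{(0)}\cdot x^\star = 1$ rules out the all-zero witness is a small gap the paper's one-line proof glosses over, and it is exactly the right way to close it.
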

\begin{proof}
	Given $\Pi_r$, the algorithm $\ALGO$ must return some coordinate $j\in [N]$. However $\{j\}$ is safe. That is, there is a feasible solution $x$ to 
	$\calP(\ba^{(\leq r)}, \{j\})$. Indeed, if $x$ were the input vector, the algorithm would return a coordinate not in the support.
\end{proof}

\begin{definition}[Transcript Creation Procedure]
	Given an $r$-round algorithm $\ALGO$, the transcript creation procedure is the following iterative process. In round $i$,
	given the transcript \\$\Pi_{i-1} := \left((\bbA^{(0)}, \ba^{(0)}), \ldots, (\bbA^{(i-1)}, \ba^{(i-1)})\right)$ upon which the algorithm $\ALGO$ queries 
	$\bbA^{(i)}$, and the transcript creation procedure produces an answer $\ba^{(i)}$ such that $\Pi_i = \Pi_{i-1} \circ (\bbA^{(i)}, \ba^{(i)})$ 
	is feasible.
\end{definition}
\noindent
Our main theorem, which implies~\Cref{thm:lb-lin-single-element}, is the following.

\begin{theorem}[Transcript Creation Theorem]\label{thm:realmain}
	Let $k_1,\ldots, k_r$ and $s_0,s_1,\ldots, s_r$ be positive integers such that 
	$s_0 \leq n-1$ and $(k_i + 1)s_i \le s_{i-1}$ for all $i \geq 1$.
	Then given any $r$-round algorithm $\ALGO$ making $\leq k_i$ queries in round $i$, there is a transcript creation procedure to create an $r$-round transcript 
	such that for all $0\leq i\leq r$, any subset $S\subseteq [N]$ with $|S|\leq s_i$ is safe with respect to $\Pi_i$. 
%	
%	there is a vector $x\in \QQ^n_{\geq 0}$ 
%	such that for subset $S\subseteq [N]$ with $|S|\le s_r$, there exists another vector $z \in \QQ^n_{\geq 0}$ so that 
%	(a) for every $A^{(i)}_t$ with $1\leq i\leq r$ and $1\leq t\leq k_i$, we have $x(A^{(i)}_t) = z(A^{(i)}_t)$, and
%	(b) $z(S) = 0$.
\end{theorem}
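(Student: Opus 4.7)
The plan is to construct the transcript round by round, maintaining by induction on $i \in \{0,1,\ldots,r\}$ the invariant that every $S \subseteq [N]$ with $|S| \leq s_i$ is safe w.r.t.\ $\Pi_i$. The base case $i = 0$ is immediate: $\bbA^{(0)} = \bone_{[N]}$ with $\ba^{(0)} = 1$, and any $S$ with $|S| \leq s_0 \leq N - 1$ admits a coordinate $j \notin S$, so $x = \bone_{\{j\}}$ is a feasible solution of $\calP(\ba^{(\leq 0)}; S)$. The inductive step mirrors the two-layer Farkas-plus-Carath\'eodory recipe of the warmup, but the final contradiction now comes from the inductive hypothesis applied to a carefully chosen ``covering'' subset rather than from a raw counting argument.

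For the inductive step, assume $\Pi_{i-1}$ has been produced with every subset of size $\leq s_{i-1}$ safe w.r.t.\ $\Pi_{i-1}$. The algorithm commits to $\bbA^{(i)}$, and we must pick $\ba^{(i)} \geq \bzero$ so that every $S$ with $|S| \leq s_i$ satisfies $\calP(\ba^{(\leq i)}; S)$ feasible. Applying Farkas' lemma to each such system, this reduces to finding $\ba^{(i)}$ in
\[
\cQ^{(i)} := \bigl\{\ba^{(i)} \in \QQ^{k_i}_{\geq 0} \,:\, y^{(i)} \cdot \ba^{(i)} \leq -\textstyle\sum_{j=0}^{i-1} y^{(j)} \cdot \ba^{(j)}\ \text{ for all } |S| \leq s_i,\ \by \in \calC_S^{(i)}\bigr\},
\]
where $\calC_S^{(i)} := \{\by = (y^{(0)},\ldots,y^{(i)}) : \sum_{j=0}^{i} y^{(j)} \bbA^{(j)} \leq \bone_S\}$. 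Suppose for contradiction that $\cQ^{(i)}$ is infeasible. Farkas' lemma then delivers non-negative multipliers $\lambda_t$ on a family of pairs $(S_t, \by_t)$ with $|S_t| \leq s_i$ and $\by_t \in \calC_{S_t}^{(i)}$ satisfying (P1) $\sum_t \lambda_t y^{(i)}_t \geq \bzero$ coordinate-wise, and (P2) $\sum_t \lambda_t \sum_{j=0}^{i-1} y^{(j)}_t \cdot \ba^{(j)} > 0$. Carath\'eodory's theorem for cones in $\QQ^{k_i + 1}$ lets us trim to at most $T \leq k_i + 1$ positive multipliers.

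Set $\mu^{(j)} := \sum_t \lambda_t y^{(j)}_t$ for $0 \leq j \leq i$ and $S^\star := \bigcup_{t : \lambda_t > 0} S_t$, so $|S^\star| \leq T \cdot s_i \leq (k_i + 1) s_i \leq s_{i-1}$. Taking the $\lambda_t$-combination of the cone inequalities $\sum_{j=0}^{i} y^{(j)}_t \bbA^{(j)} \leq \bone_{S_t}$ and dropping the non-negative tail $\mu^{(i)} \bbA^{(i)} \geq \bzero$ (non-negative by (P1) and non-negativity of $\bbA^{(i)}$) yields
\[
\sum_{j=0}^{i-1} \mu^{(j)} \bbA^{(j)} \leq \sum_t \lambda_t \bone_{S_t} \leq M \cdot \bone_{S^\star},
\]
where $M$ is the largest coordinate of $\sum_t \lambda_t \bone_{S_t}$; in the degenerate case $M = 0$ we simply rescale $\mu$ by any positive $\epsilon$. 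Hence $(\mu^{(0)}/M,\ldots,\mu^{(i-1)}/M)$ lies in $\calC_{S^\star}^{(i-1)}$, and since $|S^\star| \leq s_{i-1}$ the inductive hypothesis asserts that $S^\star$ is safe w.r.t.\ $\Pi_{i-1}$. Farkas' lemma then forces $\sum_{j=0}^{i-1} \mu^{(j)} \cdot \ba^{(j)} \leq 0$, contradicting (P2). So $\cQ^{(i)}$ is feasible, any $\ba^{(i)}$ in it is a valid answer, and $\Pi_i$ is feasible because $S = \emptyset$ is in particular safe.

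The main obstacle is the bookkeeping around the second Farkas application: one must cleanly isolate the unknown $\ba^{(i)}$ from the fixed history $\ba^{(\leq i-1)}$ so that $\cQ^{(i)}$ is a feasibility LP purely in $\ba^{(i)}$; recognize that Carath\'eodory operates in the $(k_i+1)$-dimensional space consisting of the $k_i$ query coordinates together with the scalar right-hand side; and scale the extracted witness so that it lands inside $\calC_{S^\star}^{(i-1)}$, ready to be fed into the inductive hypothesis. Once these pieces are aligned, the hypothesis $(k_i+1) s_i \leq s_{i-1}$ is exactly the accounting that drives the Carath\'eodory-sparsity bound on $|S^\star|$ below the threshold $s_{i-1}$, closing the induction.
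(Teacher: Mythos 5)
Your proof is correct and follows essentially the same route as the paper's: Farkas' lemma to dualize the safety condition, a second Farkas application to the feasibility LP for $\ba^{(i)}$, Carath\'eodory to trim the certificate to $k_i+1$ terms, and the inductive hypothesis applied to a set of size at most $(k_i+1)s_i \leq s_{i-1}$ to derive the contradiction with (P2). The only cosmetic differences are that you handle the base case by a direct primal witness ($x=\bone_{\{j\}}$) rather than via the dual characterization, and that you scale by the max coordinate $M$ of $\sum_t\lambda_t\bone_{S_t}$ over the explicit union $S^\star=\bigcup_t S_t$, whereas the paper picks $\theta$ and $S$ from the positive support of $\bu_1 := \sum_{j<i}\bw^{(j)}\bbA^{(j)}$ directly — both yield the same bound and land inside $\calC^{(i-1)}_S$.
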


\begin{corollary}
	Let $k_1, \ldots, k_r$ be any $r$ positive integers with $\prod_{i=1}^n (k_i + 1) < n$.
	No $r$-round algorithm $\ALGO$ which makes $\leq k_i$ queries in round $i$ can be successful for the \ProblemOne problem.
	In particular, this implies~\Cref{thm:lb-lin-single-element}.
\end{corollary}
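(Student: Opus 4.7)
The plan is to obtain the corollary as a direct consequence of the Transcript Creation Theorem (\Cref{thm:realmain}) together with \Cref{clm:safe}, by choosing the sequence $s_0, s_1, \ldots, s_r$ so that the theorem's hypothesis is exactly met and the final safety guarantee is strong enough to preclude any output.

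First I would instantiate the parameters. Set $s_r := 1$ and, going backwards, define
\[
s_{i-1} \;:=\; (k_i+1)\,s_i \qquad \text{for } i = r, r-1, \ldots, 1 .
\]
Then by construction $(k_i+1)\,s_i = s_{i-1}$, so the chain of inequalities $(k_i+1)s_i \le s_{i-1}$ required by \Cref{thm:realmain} holds with equality for every $i \ge 1$. Unfolding the recurrence gives $s_0 = \prod_{i=1}^{r}(k_i+1)$. The hypothesis of the corollary states $\prod_{i=1}^{r}(k_i+1) < N$, and since the left side is an integer, this yields $s_0 \le N-1$, meeting the remaining requirement on $s_0$. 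So the parameter choice is admissible.

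Now suppose, for contradiction, that some $r$-round deterministic algorithm $\ALGO$ with $\le k_i$ queries in round $i$ solves \ProblemOne. Apply \Cref{thm:realmain} to $\ALGO$ with the parameters above. The theorem produces a transcript-creation procedure that, when run against $\ALGO$, yields a feasible $r$-round transcript $\Pi_r$ such that every subset $S \subseteq [N]$ with $|S| \le s_r = 1$ is safe with respect to $\Pi_r$. In particular, every singleton $\{j\}$ is safe. By \Cref{clm:safe}, this contradicts the assumed success of $\ALGO$, so no such algorithm exists.

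Finally I would derive \Cref{thm:lb-lin-single-element}. If every round used at most $k_i \le \lceil N^{1/r}\rceil - 2$ queries, i.e.\ $k_i + 1 < N^{1/r}$ for all $i$, then $\prod_{i=1}^{r}(k_i+1) < N$, so the corollary applies and no $r$-round deterministic algorithm with these query bounds can solve \ProblemOne. Contrapositively, any successful $r$-round algorithm must satisfy $k_i \ge N^{1/r}-1$ for some round $i$, which is exactly the statement of \Cref{thm:lb-lin-single-element}. There is no real obstacle here beyond the bookkeeping on the parameter sequence $s_i$ and the integrality argument that turns the strict inequality $\prod (k_i+1) < N$ into $s_0 \le N-1$; all the heavy lifting (duality, Farkas, Carath\'{e}odory) has already been encapsulated in \Cref{thm:realmain}.
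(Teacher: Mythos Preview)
Your proof is correct and follows exactly the same route as the paper: set $s_r=1$, back-propagate $s_{i-1}=(k_i+1)s_i$ so that $s_0=\prod_i(k_i+1)\le N-1$, invoke \Cref{thm:realmain} to make all singletons safe at round $r$, and conclude via \Cref{clm:safe}. You add a little extra detail the paper omits (the integrality step turning $<N$ into $\le N-1$, and the explicit contrapositive deriving \Cref{thm:lb-lin-single-element}), but the argument is the same.
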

\begin{proof}
	Set $s_r = 1$, $s_{r-1} = (k_r + 1)$, and in general, $s_i = (k_r + 1)(k_{r-1} + 1)\cdots (k_{i+1} + 1)$.
	Note that the conditions of~\Cref{thm:realmain} are satisfied. Therefore given any algorithm $\ALGO$ making $\leq k_i$ queries
	in round $i$, we can create a $r$-round transcript such that all singleton sets are safe with respect to $\Pi_r$.
	\Cref{clm:safe} implies $\ALGO$ cannot be succesful.
\end{proof}
%\begin{corollary}
%	Any successful $r$-round deterministic algorithm for \ProblemOne must make $\geq n^{1/r} - 1$ queries in some round.
%\end{corollary}

\subsubsection{Proof of the Transcript Creation Theorem}
We start with writing the dual representation of safe sets. Fix a subset $S\subseteq [N]$ and a transcript $\Pi_i$.
By Farkas lemma we know that the system $\calP(\ba^{(\leq i)}; S)$ is infeasible only if there exists a infeasibility certificate
%\footnote{deepc: this needs some explanation which should be added.}
\[
\left(\by^{(0)}, \by^{(1)}, \ldots, \by^{(i)}\right) \in \QQ \times \QQ^{k_1} \times \cdots \QQ^{k_i} ~:~ \sum_{j=0}^i \by^{(j)}\cdot \bbA^{(j)}  \leq \bone_S ~~~ \textrm{and} ~~~ \sum_{j=0}^i \by^{(j)}\cdot \ba^{(j)} > 0
\]
Here $\bone_S$ is the $n$-dimensional indicator vector of the subset $S$, that is, it has $1$ in the coordinates $j\in S$ and $0$ otherwise. Taking negations, we get that the system $\calP(\ba^{(\leq i)}; S)$ is {\em feasible}, that is $S\subseteq [N]$ is safe w.r.t $\Pi_{i-1}$,  if and only if the following condition holds
\begin{equation}\label{eq:dual}
	\textrm{$S$ is safe w.r.t. $\Pi_i$ iff}~~~~
\by^{(\leq i)} \cdot \ba^{(\leq i)} := \sum_{j=0}^i {\by^{(j)}} \cdot \ba^{(j)} \leq 0 ~~~~\textrm{for all} ~~~~  \by^{(\leq i)}\in \calC^{(i)}_S \tag{Dual}
\end{equation}
where,
\begin{equation}\label{eq:cs}
\calC^{(i)}_S := \Big\{~~\by^{(\leq i)} := (\by^{(0)}, \by^{(1)}, \ldots, \by^{(i)})\in \QQ  \times \QQ^{k_1} \times \cdots \times \QQ^{k_i}: ~~\sum_{j=0}^i \by^{(j)} \cdot \bbA^{(j)} \leq \bone_S~~\Big\} \notag
%~ {y^{(0)}}^\top \bbA^{(0)} + {y^{(1)}}^\top \bbA^{(1)} + \cdots + {y^{(i)}}^\top \bbA^{(i)} \leq \bone_S\}
\end{equation}
\noindent
We are now ready to prove~\Cref{thm:realmain} via induction on $i$. The above representation is the dual definition of safe sets, and this definition is what is easy to induct with. \smallskip

\noindent
{\em Base Case: $i=0$.} We need to show that any subset $S\subseteq [N]$ of size $|S| \leq s_0 = N-1$ is safe with respect to the transcript $(\bbA^{(0)}, \ba^{(0)})$. To remind the reader, $\bbA^{(0)}$ is just the all ones vector and $\ba^{(0)}$ is just the scalar $1$. Using \eqref{eq:dual}, we need to show for any subset $S\subseteq [N]$ with $|S|\leq N-1$, we must have 
\[
\by^{(0)}\cdot \ba^{(0)} \leq 0 ~~\textrm{for all}~~ \by^{(0)}\in \QQ ~~\textrm{such that}~~ \by^{(0)}\cdot \bbA^{(0)} \leq \bone_S 
\]
However, $\by^{(0)}\cdot \bbA^{(0)}$ is the $n$-dimensional vector which is $\by^{(0)}$ on all coordinates. Since $|S|\leq n-1$, there is some coordinate $j\notin S$ such that $\bone_S[j] = 0$. Thus, $\by^{(0)} \leq 0$ implying $\by^{(0)}\cdot \ba^{(0)} \leq 0$. The base case holds. \smallskip

\noindent
{\em Inductive Case: $i\geq 1$.} Assume the conclusion of the theorem holds for all $0\leq j\leq i-1$.
That is, there is a procedure which has created a transcript $\Pi_{i-1} = \left((\bbA^{(0)}, \ba^{(0)}), \cdots, (\bbA^{(i-1)}, \ba^{(i-1)})\right)$ such that every subset $S\subseteq[N]$ with $|S|\leq s_{i-1}$ is safe w.r.t $\Pi_{i-1}$.
Using \eqref{eq:dual}, we can rewrite this as the following statement
	\begin{equation}\label{eq:ih}
\textrm{
	for all ~~$S\subseteq [N], |S|\leq s_{i-1}$, ~~for all ${\by^{(\leq ~i-1)}} \in \calC^{(i-1)}_S$ ~~we have ~~ $\by^{(\leq ~i-1)}\cdot \ba^{(\leq ~i-1)} \leq 0$.}\tag{IH}
\end{equation}

\noindent
Given $\Pi_{i-1}$, the algorithm $\ALGO$ now queries $\bbA^{(i)}$ in round $i$. Our goal is to find answers $\ba^{(i)}\in \QQ^{k_i}_{\geq 0}$ such that
any subset $S\subseteq [N]$ with $|S|\leq s_i$ is safe w.r.t $\Pi_i = \Pi_{i-1} \circ (\bbA^{(i)}, \ba^{(i)})$.
Again referring to~\eqref{eq:dual},  we need to find $\ba^{(i)}\in \QQ^{k_i}_{\geq 0}$ satisfying the following system of linear inequalities.
\begin{equation}
\label{eq:def-dual-2}
\cQ^{(i)} :=  \Big\{~~\ba^{(i)} \in \QQ^{k_i}_{\geq 0}:~~~ \by^{(i)}\cdot \ba^{(i)} \leq - \left(\by^{(\leq ~i-1)} \cdot \ba^{(\leq ~i-1)} \right), ~~~\forall S\subseteq [N], |S|\leq s_i, ~~\forall \by^{(\leq i)} \in \calC^{(i)}_S ~~\Big\}\notag
\end{equation}
Although it may appear that the above system has infinitely many constraints, it suffices to write the constraints for extreme points for the polyhedra $\calC^{(i)}_S$'s. To complete the proof, we need to show that $\cQ^{(i)}$ is non-empty; if so, we can select any $\ba^{(i)}\in \cQ^{(i)}$ for completing the transcript creation procedure, and proving the theorem by induction.
The next lemma does precisely that; this completes the proof of the theorem. \Qed{\Cref{thm:realmain}}

\begin{lemma}\label{lem:non-empty}
	The system of inequalities $\cQ^{(i)}$ has a feasible solution.
\end{lemma}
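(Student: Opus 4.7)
The plan is to follow the warmup argument closely, doubling up the use of duality and Carath\'eodory that worked there. Assume toward contradiction that $\cQ^{(i)}$ is infeasible. Since $\cQ^{(i)}$ has the form $\{M \ba^{(i)} \leq b,~ \ba^{(i)} \geq 0\}$, where each row of $M$ is some $\by^{(i)}_t$ and the corresponding right-hand-side entry is $-\by^{(\leq i-1)}_t \cdot \ba^{(\leq i-1)}$, Farkas' Lemma produces non-negative multipliers $\lambda_t \geq 0$ indexed by a finite collection of constraints, each associated with a set $S_t$ with $|S_t|\leq s_i$ and a vector $\by^{(\leq i)}_t \in \calC^{(i)}_{S_t}$, such that (P1) $\sum_t \lambda_t \by^{(i)}_t \geq \bzero$, and (P2) $\sum_t \lambda_t \, \by^{(\leq i-1)}_t \cdot \ba^{(\leq i-1)} > 0$.

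The second step is to invoke Carath\'eodory's theorem for cones on the $(k_i+1)$-dimensional vectors $\big(\by^{(i)}_t,~ \by^{(\leq i-1)}_t\cdot\ba^{(\leq i-1)}\big)$: the conic combination witnessing (P1)--(P2) can be reduced to $T \leq k_i + 1$ terms, so we may assume the support of $\lambda$ has size at most $k_i+1$. Next, aggregate by setting $\hat{\by}^{(j)} := \sum_{t=1}^T \lambda_t \by^{(j)}_t$ for $j=0,\ldots,i$. Summing the defining inequalities $\sum_{j=0}^i \by^{(j)}_t \cdot \bbA^{(j)} \leq \bone_{S_t}$ with weights $\lambda_t$ gives $\sum_{j=0}^i \hat{\by}^{(j)} \cdot \bbA^{(j)} \leq \bw$, where $\bw := \sum_t \lambda_t \bone_{S_t}$ is a non-negative vector whose support $S^* := \bigcup_t S_t$ has size at most $T\cdot s_i \leq (k_i+1)s_i \leq s_{i-1}$. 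Because $\hat{\by}^{(i)} \geq \bzero$ by (P1) and $\bbA^{(i)} \geq 0$ by the non-negativity of the query matrix, we have $\hat{\by}^{(i)} \cdot \bbA^{(i)} \geq \bzero$, so dropping this term yields $\sum_{j=0}^{i-1} \hat{\by}^{(j)} \cdot \bbA^{(j)} \leq \bw$.

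The final step is to normalize and invoke the inductive hypothesis~\eqref{eq:ih}. Let $\alpha$ be the largest entry of $\bw$; if $\alpha = 0$ then $\hat{\by}^{(\leq i-1)} \in \calC^{(i-1)}_{S}$ for any $S$ (in particular any singleton, whose size is at most $s_{i-1}$), while otherwise $\tilde{\by}^{(\leq i-1)} := \hat{\by}^{(\leq i-1)}/\alpha$ lies in $\calC^{(i-1)}_{S^*}$ since $\bw/\alpha \leq \bone_{S^*}$ and $|S^*|\leq s_{i-1}$. Either way, \eqref{eq:ih} forces $\hat{\by}^{(\leq i-1)} \cdot \ba^{(\leq i-1)} \leq 0$ (scaling by $\alpha>0$ preserves the sign), contradicting (P2) and completing the proof. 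The main obstacle is precisely this rescaling: the aggregate $\bw$ can have entries exceeding $1$, so unlike in the warmup one cannot apply the inductive certificate directly, and one must also dispose of the degenerate case $\alpha=0$ with care.
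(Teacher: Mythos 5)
Your proof is correct and follows essentially the same route as the paper: infeasibility of $\cQ^{(i)}$ via Farkas yields the certificate $(\lambda_t, S_t, \by^{(\leq i)}_t)$, Carath\'eodory for cones truncates it to $k_i+1$ terms, the $\lambda_t$-aggregation together with non-negativity of $\hat\by^{(i)}\cdot\bbA^{(i)}$ gives a row that lands (after rescaling) in $\calC^{(i-1)}_{S^*}$ for $|S^*|\leq (k_i+1)s_i\leq s_{i-1}$, and the induction hypothesis~\eqref{eq:ih} then contradicts (P2). The one cosmetic difference is that you normalize by $\alpha=\max_j \bw_j$ and take $S^* = \bigcup_t S_t$, whereas the paper normalizes by the largest positive entry of the aggregated left-hand side $\bu_1$ and takes $S$ to be the positive support of $\bu_1$; both give the same support bound, and your explicit treatment of the degenerate case $\alpha=0$ is a welcome touch the paper leaves implicit.
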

\begin{proof}
	For the sake of contradiction, suppose not. Applying Farkas lemma (again), we get the following certificate of infeasibility.
	There exists the tuples
	$(\lambda_t > 0, ~~S_t\subseteq [N]~\textrm{with}~ |S_t|\leq s_i, ~~\by^{(\leq i)}_t\in \calC^{(i)}_{S_t})$ for $1\leq t \leq k_i + 1$ such that
	\begin{enumerate}
		\item[(P1):] $\sum_{t=1}^{k_i + 1} \lambda_t \by^{(i)}_t \geq \bzero_{k_i}$, and 
		\item[(P2):] $\sum_{t=1}^{k_i+1} \lambda_t \left(\by^{(\leq ~i-1)}_t \cdot \ba^{(\leq ~i-1)}\right) > 0$.
	\end{enumerate}
Since $\by^{(\leq i)}_t\in \calC^{(i)}_{S_t}$, we get
$
\sum_{j=0}^i \by^{(j)}_t\cdot \bbA^{(j)} \leq \bone_{S_t}$ for all $1\leq t\leq k_i$. Taking the positive $\lambda_t$-combinations of these inequalities, we get
\begin{equation}\label{eq:contra}
\sum_{t=1}^{k_i + 1} \lambda_t \cdot \left(\sum_{j=0}^i \by^{(j)}_t\cdot \bbA^{(j)}  \right) \leq \sum_{t=1}^{k_i + 1} \lambda_t \bone_{S_t}\tag{P3}
\end{equation}
Now, define $\bw^{(j)} := \sum_{t=1}^{k_i+1} \lambda_t \by^{(j)}_t$ for $0\leq j\leq i$. (P1) above implies (Q1): $\bw^{(i)} \geq \bzero_{k_i}$, and (P2) implies (Q2): $\bw^{(\leq ~i-1)}\cdot \ba^{(\leq ~i-1)} > 0$.
And finally, (P3) translates to 
\begin{equation}\label{eq:contra-2}
\underbrace{\sum_{j=0}^{i-1} \bw^{(j)}\cdot \bbA^{(j)}}_{\textrm{Call this}~ \bu_1} ~~+~~ \underbrace{\bw^{(i)}\cdot \bbA^{(i)}}_{\textrm{Call this}~ \bu_2} \leq \underbrace{\sum_{t=1}^{k_i + 1} \lambda_t \bone_{S_t}}_{\textrm{Call this}~ \bv}\tag{Q3}
\end{equation}
Now we are ready to see the contradiction. First observe that the vector $\bv$ has at most $(k_i+1)s_i$ positive entries since it is the sum of $k_i+1$ vectors each of support $\leq s_i$. Since $\bw^{(i)}$ and $\bbA^{(i)}$ are both non-negative, $\bu_2$ is a non-negative vector. This implies that $\bu_1$ must have $\leq (k_i + 1)s_i$ positive entries. From the conditions of the theorem, we get $(k_i + 1)s_i \leq s_{i-1}$. Thus, $\bu_1$ has $\leq s_{i-1}$ positive entries. This in turn implies there exists a scalar $\theta$ such that 
$\theta \bu_1 \leq \bone_S$ for some subset $S\subseteq [N]$ with $|S|\leq s_{i-1}$. That is, 
\[
\sum_{j=0}^{i-1} (\theta \bw^{(j)})\cdot \bbA^{(j)} \leq \bone_S ~~~\Rightarrow ~~~~ \theta \bw^{(\leq~i-1)} \in \calC^{(i-1)}_S
\]
The induction hypothesis \eqref{eq:ih} implies $\theta \bw^{(\leq~i-1)}\cdot \ba^{(\leq~i-1)} \leq 0$. This contradicts (Q2). This completes the proof of the lemma.
\end{proof}

\conf{
	\section{Algorithms Warmup: Algorithms for Single Element Recovery} \label{sec:warmup}
In this section, we state some simple and/or well known algorithms for single element recovery. We will be using these as subroutines for our algorithms for graph connectivity as well.
We state most of our algorithms in the weaker $\OR$-query model remarking what advantage, if any, \linear queries may provide.

\begin{lemma}\label{lem:bin-search}
	Let $x \in \RR^N_{\geq 0}$ be a  non-zero, non-negative vector, and let $r$ be a positive integer. There exists an $r$-round {\em deterministic} algorithm $\DetBinSearch_r(x)$ which makes $(N^{1/r} - 1)$-\OR queries per round, and returns a coordinate $j$ with $x_j > 0$.
	If \linear queries are allowed, then one can recover $x_j$ as well.
\end{lemma}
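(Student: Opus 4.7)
My plan is an $r$-ary generalization of the familiar $\log_2 N$-round binary search; there is really no deep obstacle here, and the main thing to get right is the bookkeeping that each round reduces the ``live'' block size by a factor of $k := \lceil N^{1/r}\rceil$ at a cost of $k-1$ queries. For clarity I would first assume $N = k^r$ by padding $x$ with zero entries to length $k^r$; padding does not change the support, and we will only ever address queries to the original $N$ coordinates so that the returned coordinate is a genuine element of $\supp(x)$.

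The algorithm maintains a ``live block'' $B \subseteq [N]$ with the invariant that $\supp(x)\cap B \neq \emptyset$. Initially $B = [N]$, which satisfies the invariant since $x$ is nonzero. In each round, the algorithm partitions $B$ into $k$ consecutive equal-sized sub-blocks $B_1,\ldots, B_k$ and issues the $k-1$ parallel \OR-queries $\OR(B_1),\ldots,\OR(B_{k-1})$. We then update $B := B_i$, where $i$ is the smallest index with $\OR(B_i)=1$, or $B := B_k$ if all $k-1$ answers are $0$; in the latter case the invariant forces $\supp(x)\cap B_k \neq \emptyset$. After round $t$ we have $|B| = k^{r-t}$, so after $r$ rounds $B = \{j\}$ with $x_j > 0$, which we return. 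The per-round cost is exactly $k-1 = N^{1/r}-1$ \OR-queries, as claimed. The case where $N^{1/r}$ is not an integer is handled by the $\lceil\cdot\rceil$ padding described above; the asymptotic bound is unaffected.

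For the \linear-query variant, I would run the same algorithm but replace each \OR-query on $B_i$ by the \linear-query $\sum_{\ell \in B_i} x_\ell$ (which reveals whether $B_i$ contains support by testing positivity), and pick the smallest $i$ with positive sum (or $i=k$ if all are zero). The key observation is that at the start of each round we know $S := \sum_{\ell \in B}x_\ell$ from the previous round's measurements (initially, the very first round's $k-1$ linear queries can be supplemented by the free knowledge that $S$ is whatever nonzero value the algorithm is working with; more cleanly, one round's $k-1$ queries plus the previously-known $S$ determine all $k$ block-sums of the current $B$, so the invariant ``$S$ is known'' propagates). In the final round $|B|=k$ and the $k-1$ singleton queries plus the carried $S$ yield $x_j$ exactly for every $j \in B$, so we recover the value at no extra cost. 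This establishes both halves of the lemma.
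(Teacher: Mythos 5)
Your proposal is essentially the same $N^{1/r}$-ary search that the paper sketches: partition the live block into $N^{1/r}$ pieces, $\OR$-query all but the last, recurse into a block known to hit the support, terminating at a singleton after $r$ rounds. The one loose end is in the \linear recovery claim: your invariant is that the running block-sum $S = x(B)$ is known, and you propagate it correctly (either $S$ becomes $x(B_i)$, which was directly queried, or it stays $S - 0 = S$ when falling through to the last block), but the base case is not free — $x([N])$ is never queried, and the phrase ``supplemented by the free knowledge that $S$ is whatever nonzero value the algorithm is working with'' has no content. If the adversary routes you through the unqueried last block in every round, you return the correct index $j$ but never learn the value $x_j$. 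This is a one-query fix (query $x([N])$ in round one, or query all $N^{1/r}$ singletons in the final round), and the paper's own two-line sketch doesn't address the \linear variant at all, so this is a small patch rather than a conceptual issue.
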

\begin{proof} (Sketch)
	Divide $[N]$ into $N^{1/r}$ blocks each of size $N^{1-1/r}$, and run \OR query on each block but the last, taking $N^{1/r} - 1$ queries in all.
	If one of them evaluates to $1$, recurse on that for the next $r-1$ rounds. Otherwise, recurse on the last block.
%	One of them evaluates to $1$. Recurse on that. {\bf deepc: needs to be written better}
\end{proof}

%In the next section, we will show that the above upper bound is tight even for \linear queries.

Next, we state a standard result from the combinatorial group testing and coin-weighing literature~\cite{KautzS64,D01,HwangS87,PoratR08,CormodeM06}
which says that if the support of $x$ is known to be small, then there exist efficient 
{\em one-round} deterministic algorithms to recover the complete vector.
%We provide a sketch of a proof in the appendix for completeness ({\bf \em deepc: needs to be done}).

%\begin{lemma}\label{lem:det-supp-one}
%	If $x\in \RR^N_{\geq 0}$ with $\supp(x) = 1$, then the unique $j$ with $x_j > 0$ can be found using $O(\log n)$ \OR-queries in $1$-round.
%\end{lemma}

%\begin{lemma}\label{lem:det-supp-one-checker}
%	Let $x\in \RR^N_{\geq 0}$ be a non-zero vector. Then using $O(\log N)$ \OR-queries in $1$-round, we can either 
%	(a) assert $\supp(x) > 1$, or (b) find the unique $j$ with $x_j > 0$ can be found .
%\end{lemma}
%
%\begin{proof}
%	Folklore (for instance, in Cormode-Muthukrishnan).
%	Queries correspond to the rows of the parity-check matrix of the Hamming code.
%\end{proof}
%
%
%
%\begin{lemma}\label{lem:det-supp-d}
%	If $x\in \RR^N_{\geq 0}$ with $\supp(x) \leq d$, then $x$ can be recovered using deterministic $O(d^2\log^2 n)$ \OR-queries  in $1$-round.
%	If \linear queries are allowed, then $O(d\log n)$ queries suffice.
%\end{lemma}
%\begin{proof}
%	Cite references from GRW. 
%\end{proof}

\begin{lemma}\label{lem:det-support-checker}\cite{HwangS87,PoratR08}
		Let $x\in \RR^N_{\geq 0}$ be a non-zero vector, and let $d$ be any positive integer.
		There exists a $1$-round (non-adaptive) deterministic algorithm $\BndSuppRec(x,d)$ which makes $O(d^2\log N)$-\OR queries and 
		(a) either asserts $\supp(x) > d$, or (b) recovers the full support of $x$. With \linear queries, the number of queries reduces to $O(d\log N)$.
		%The recovery is in polynomial time.
\end{lemma}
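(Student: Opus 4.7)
The plan is to invoke the classical machinery of \emph{$d$-disjunct matrices} (equivalently, $d$-cover-free families) for the \OR case, and of $2d$-wise column-independent measurement matrices for the \linear case. Recall that a $0/1$ matrix $M \in \{0,1\}^{m \times N}$ is \emph{$d$-disjunct} if for every column $i$ and every set $C$ of $d$ other columns, some row has a $1$ in column $i$ and $0$ in every column of $C$. We will interpret the rows of $M$ as the queries: the $t$-th query asks the \OR of $\{j : M_{tj}=1\}$. The classical constructions of Kautz-Singleton~\cite{KautzS64} (via Reed-Solomon concatenation) or a simple probabilistic argument yield explicit $d$-disjunct matrices with $m = O(d^2 \log N)$ rows, which will be our query set.

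The decoding procedure, executed after all $m$ answers are collected in one round, is the standard one: define the candidate set
\[
\hat S \;:=\; \{\,i \in [N] : \text{every query } t \text{ with } M_{ti}=1 \text{ returned } 1\,\}.
\]
The key structural claim, which I would establish as the main step, is that (i) $\supp(x) \subseteq \hat S$ always, because any query containing an element of the support must evaluate to $1$; and (ii) if $|\supp(x)| \le d$, then the $d$-disjunct property forces $\hat S = \supp(x)$, since for each $i \notin \supp(x)$ there is a row which includes $i$ but excludes the $\le d$ support elements, hence returns $0$. Combining (i) and (ii), the contrapositive gives: if $|\supp(x)| > d$ then $|\hat S| \ge |\supp(x)| > d$. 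So the algorithm \BndSuppRec simply outputs $\hat S$ if $|\hat S| \le d$, and otherwise certifies $|\supp(x)| > d$. Correctness follows in both branches.

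For the \linear-query variant with $O(d\log N)$ queries, the idea is that a single \linear query is strictly more informative: the answer equals the exact weighted sum, not merely its indicator. We use a matrix $M' \in \mathbb{R}^{m' \times N}$ with the property that every $2d$ columns are linearly independent; equivalently, no nonzero vector of support size $\le 2d$ lies in the kernel. Explicit constructions (e.g.\ a Vandermonde/Reed-Solomon matrix, or the deterministic compressed-sensing matrices in~\cite{PoratR08}) give such an $M'$ with $m' = O(d \log N)$ rows. Given the answer vector $\bolda = M' x$, we solve for the sparsest non-negative vector consistent with $M' y = \bolda$: any two $d$-sparse solutions would differ by a $2d$-sparse kernel element, contradicting independence, so the solution is unique whenever $|\supp(x)| \le d$. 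If the resulting candidate has support $> d$, or if no $d$-sparse non-negative solution exists, we output ``$|\supp(x)|>d$''.

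The main obstacle is really just producing the $d$-disjunct (respectively $2d$-wise independent) matrix with the claimed parameters; we delegate this to the existing constructions cited in the lemma statement. The decoding correctness itself is short once the combinatorial/linear-algebraic property of the matrix is in hand. I would write the proof as: (a) cite/recall the matrix construction and its parameters, (b) describe the one-round query schedule and the decoder, and (c) prove the two-way implication $|\hat S| \le d \iff |\supp(x)| \le d$ (and analogously for the \linear case) using the defining property of the matrix.
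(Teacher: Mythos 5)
Your \OR-query argument is the standard group-testing route and matches what the paper intends (the paper gives a probabilistic-method sketch and defers to \cite{HwangS87,PoratR08}; you use the clean combinatorial abstraction of $d$-disjunct matrices, which is exactly what those references deliver). The decoding step --- form $\hat S$, output it if $\card{\hat S}\le d$, otherwise declare $\card{\supp(x)}>d$ --- is correct: containment $\supp(x)\subseteq\hat S$ is unconditional, and when $\card{\supp(x)}\le d$ disjunctness gives equality, so the two branches never conflict. One small quibble: the Kautz--Singleton Reed--Solomon construction yields $O(d^2 \log_d^2 N)$ rows, not $O(d^2\log N)$; the $O(d^2\log N)$ bound comes from the probabilistic existence argument the paper sketches (or from Porat--Rothschild), so you should not attribute it to Kautz--Singleton.

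The \linear-query half has a genuine gap. You claim a matrix $M'$ whose every $2d$ columns are linearly independent lets you either recover $x$ or certify $\card{\supp(x)}>d$. But $2d$-wise column independence only gives uniqueness \emph{among} $\le d$-sparse vectors; it does not prevent a $d$-sparse non-negative impostor from reproducing the measurements of a \emph{denser} $x$. Concretely, take $d=1$ and let $M'$ be the $2\times N$ Vandermonde matrix with $M'_{1j}=1$, $M'_{2j}=j$; every pair of its columns is independent. Let $x=e_1+e_3$, so $\card{\supp(x)}=2>d$ and $M'x=(2,4)^{\top}$. Then $y=2e_2$ is non-negative, $1$-sparse, and $M'y=(2,4)^{\top}=M'x$, so your decoder outputs $\set{2}$ --- wrong. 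To rule out a $d$-sparse impostor for an arbitrary non-negative $x$, $(\card{\supp(x)}+d)$-wise column independence would be needed, which forces an unbounded number of rows. The $O(d\log N)$ bound is genuine, but proving it requires exploiting non-negativity and the structure of the answers beyond mere $2d$-wise independence (this is the coin-weighing/non-negative compressed-sensing machinery the citations are for). The paper itself also does not prove the \linear case --- it only sketches the \OR case and defers the rest to the references --- so your treatment and the paper's are equally incomplete here, but your stated argument as written is not correct.
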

\begin{proof}(Sketch) 
	We give a very high level sketch only for the sake of completeness.
	For the case of $d=1$, take the $O(\ceil{\log N}\times N)$ matrix $A$ where column $i$ is the number $i$ represented in binary.
	Then $Ax$ (the ``\OR product'') points to the unique element in the support.
	To see the {\em existence} of a deterministic procedure for larger $d$, one can proceed by the probabilistic method. If one samples each coordinate with probability $1/d$, then 
	with constant probability the vector restricted to this sample has precisely support $1$ for which the above ``$d=1$'' algorithm can be used to recover it.
	Repeating this $O(d\log N)$ times leads to error probability which swamps the union bound over $\leq N^d$ possible sets, implying the existence of a deterministic scheme. 
	Finally, another $O(d)$ arises since we need to recover all the $\leq d$ coordinates.
	All this can be made explicit by using ideas from error correcting codes; we point the interested reader to~\cite{HwangS87,PoratR08} for the details.
\end{proof}
%\subsection{Randomized}
\noindent
Next we move to randomized algorithms. Here ideas from $F_0$-estimation~\cite{FlajoletM85,AMS99} and $\ell_0$-sampling~\cite{FrahlingIS08,JowhariST11,CormodeD14} give the following algorithms.

\begin{lemma}\label{lem:rand-single-element-recovery}
	Let $x\in \RR^N_{\geq 0}$ be a non-zero vector. There exists a $1$-round (non-adaptive) randomized  algorithm $\LSample(x)$ which makes  $O(\log^2 N\log\left(\frac{1}{\delta}\right))$-\OR queries  and returns a random $j\in \supp(x)$ with probability $\geq 1-\delta$.
%	This algorithm takes
\end{lemma}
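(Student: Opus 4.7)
The plan is the standard $\ell_0$-sampler template, adapted to the \OR-query model. First I would subsample coordinates at geometrically decreasing rates and, at the right scale, apply the sparse-recovery primitive from~\Cref{lem:det-support-checker} with sparsity parameter $d=1$ to pull out a single support element.

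\medskip

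Concretely, for each level $\ell \in \{0,1,\ldots,\lceil \log_2 N \rceil\}$, pick a uniformly random subset $T_\ell \subseteq [N]$ by including each coordinate independently with probability $2^{-\ell}$ (equivalently, using a pairwise independent hash family to keep randomness bounded; pairwise independence suffices here). Let $x|_{T_\ell}$ denote the restriction of $x$ to coordinates in $T_\ell$. Non-adaptively, run $\BndSuppRec(x|_{T_\ell}, 1)$ for every level $\ell$ in parallel. By~\Cref{lem:det-support-checker}, each invocation makes $O(\log N)$ \OR queries (the queries on $x|_{T_\ell}$ are just \OR queries on subsets of $[N]$ intersected with $T_\ell$, which are themselves \OR queries on $x$), so the total query cost at this stage is $O(\log^2 N)$.

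\medskip

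Now I would argue correctness. Let $s = |\supp(x)|$ and let $\ell^* \in \{0,\ldots,\lceil \log_2 N\rceil\}$ be the unique level with $2^{\ell^*} \leq s < 2^{\ell^*+1}$. At level $\ell^*$, the expected number of support elements surviving in $T_{\ell^*}$ lies in a constant range around $1$, and a standard second-moment or inclusion–exclusion calculation (using only pairwise independence of the subsampling) shows that with probability at least some constant $c > 0$, the set $\supp(x) \cap T_{\ell^*}$ has size exactly $1$. Whenever that happens, the $d=1$ recovery subroutine at level $\ell^*$ correctly outputs that unique coordinate; moreover, by symmetry over the support, this coordinate is uniformly distributed in $\supp(x)$. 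The algorithm scans the levels from smallest $\ell$ upward, and outputs the first uniquely recovered element (any level $\ell < \ell^*$ will return either nothing or a uniformly random singleton from $\supp(x)$, preserving the uniform-from-support guarantee).

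\medskip

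Finally, to boost the success probability from the constant $c$ to $1-\delta$, I would run $O(\log(1/\delta))$ independent parallel copies of the entire scheme and return the coordinate produced by any successful copy. This gives the claimed $O(\log^2 N \cdot \log(1/\delta))$ \OR-query bound in a single (non-adaptive) round. The main subtlety is ensuring that the primitive from~\Cref{lem:det-support-checker} can be implemented by \OR queries on $x$ (not merely on $x|_{T_\ell}$); but since membership in $T_\ell$ is determined by our own randomness, each query ``is there a support element of $x|_{T_\ell}$ in $Q$?'' is just the \OR query on $Q \cap T_\ell$ applied to $x$, so no extra cost is incurred.
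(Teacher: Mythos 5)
Your proposal is correct and matches the paper's own proof sketch essentially step for step: geometric subsampling at $O(\log N)$ levels, running the $d=1$ case of $\BndSuppRec$ (costing $O(\log N)$ \OR queries) non-adaptively at each level, observing that at the level matching $|\supp(x)|$ a unique and uniformly distributed survivor appears with constant probability, and boosting to $1-\delta$ with $O(\log(1/\delta))$ independent repetitions. The only difference is that you spell out a few details (pairwise independence sufficing, \OR queries on $x|_{T_\ell}$ being \OR queries on $x$) that the paper leaves implicit.
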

\begin{proof} (Sketch)
%	This is the $\ell_0$-sampling trick.
	Suppose we knew the support $\supp(x) = d$. Then, we sample each $j\in [N]$ with probability $1/d$ to get a subset $R\subseteq [N]$.
	With constant probability $\supp(x\cap R) = 1$ and, conditioned on that, it contains a random $j\in \supp(x)$. 
	Therefore, running the algorithm $\BndSuppRec(x\cap R, 1)$ asserted in~\Cref{lem:det-support-checker}, we can find a random $j\in \supp(x)$
	with constant probability. Repeating this $O(\log(1/\delta))$ times gives the desired error probability.
	Since we don't know $\supp(x)$, we run for various powers of $2$ in $1$ to $n$; we are guaranteed success in at least one of the scales.
\end{proof}

\begin{lemma}\label{lem:rand-supp-estimate}~(Theorem 7 in \cite{Bshouty18}, also in~\cite{DamM10,FalJOPS16})
		Let $x\in \RR^N_{\geq 0}$ be a non-zero vector. There exists a $1$-round (non-adaptive) randomized  algorithm $\SuppEst(x)$ which makes $O\left(\log N\cdot \log(1/\delta)\right)$-\OR-queries and returns an estimate $\tilde{s}$ of the support which satisfies $\frac{\supp(x)}{3} \leq \tilde{s} \leq 3\supp(x)$ with probability $\geq 1-\delta$.
\end{lemma}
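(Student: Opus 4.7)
The plan is to adapt the classical $F_0$-estimation approach (via random subsampling at geometrically spaced rates) so that it works with only \OR-queries rather than with hashing. The key observation is that if we sample a set $R \subseteq [N]$ by including each coordinate independently with probability $p$, then the single \OR-query on $R$ returns $1$ with probability exactly $1 - (1-p)^{s}$, where $s = |\supp(x)|$. As a function of $p$, this probability transitions sharply from close to $1$ to close to $0$ as $p$ drops below $1/s$, and the transition point pins down $s$ up to a constant factor.

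Concretely, for each scale $i \in \{0, 1, \ldots, \lceil \log_2 N \rceil\}$ and each repetition $j \in [t]$ with $t = \Theta(\log(\log N / \delta))$, I would independently sample a subset $R_{i,j} \subseteq [N]$ by including each coordinate with probability $2^{-i}$, and perform the \OR-query $q_{i,j} = \OR(R_{i,j})$. All of these queries are chosen without looking at any answer, so the algorithm is $1$-round and uses $O(\log N \cdot \log(1/\delta))$ queries in total (after absorbing $\log \log N$ into $\log(1/\delta)$ or paying an extra factor). I would then set $\hat{p}_i := \frac{1}{t} \sum_{j=1}^t q_{i,j}$, pick an appropriate threshold $\tau \in (0,1)$ (e.g.\ $\tau = 1/2$), let $i^\star := \min\{\, i : \hat{p}_i \leq \tau \,\}$, and return $\tilde{s} := 2^{i^\star}$ (or some fixed constant times $2^{i^\star}$).

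For the analysis, write $p_i := 1 - (1 - 2^{-i})^s$; this is the true probability that $q_{i,j} = 1$. Two elementary estimates suffice: if $2^i \leq s/3$ then $p_i \geq 1 - e^{-3} > 0.95$, and if $2^i \geq 3s$ then $p_i \leq s/2^i \leq 1/3$. A Chernoff bound gives $|\hat{p}_i - p_i| \leq 0.1$ for each fixed $i$ with probability at least $1 - \delta/(\log N + 1)$, provided $t = \Theta(\log(\log N / \delta))$; a union bound over all $O(\log N)$ scales makes this simultaneously true with probability at least $1 - \delta$. On this good event, $\hat{p}_i > \tau$ for every $i$ with $2^i < s/3$ and $\hat{p}_i \leq \tau$ for every $i$ with $2^i \geq 3s$, so the threshold scale $i^\star$ satisfies $s/3 \leq 2^{i^\star} \leq 3s$, as required.

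The main issue is purely the constant juggling: the thresholds used to define $i^\star$, the additive Chernoff slack, and the constants $3$ in the lemma all have to be chosen consistently so that the transition window is detected robustly. Aside from this, everything is standard (essentially the subsampling scheme of Flajolet--Martin / AMS, translated from exact counts to binary \OR-answers). A small observation worth noting is that we do not need to know $s$ in advance since we sweep over all scales; and although the total number of queries naively looks like $O(\log N \cdot \log(\log N /\delta))$, the $\log \log N$ factor can be folded into $\log(1/\delta)$ by slightly inflating the probability budget per scale, giving the claimed $O(\log N \cdot \log(1/\delta))$ bound.
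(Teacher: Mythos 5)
The paper does not prove this lemma; it cites it as a black box from~\cite{Bshouty18,DamM10,FalJOPS16}, so there is no in-paper proof to compare against. Your approach---subsampling at geometric rates $2^{-i}$, thresholding the empirical \OR-frequency, and locating the transition scale---is the standard reconstruction and is correct in spirit. Two technical points need fixing before the proof is airtight.

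First, the constant $3$ does not follow from the powers-of-two grid as written. Your good event gives $s/3 \le 2^{i^\star}$ on the low side, but on the high side it only bounds $i^\star$ by the smallest $i$ with $2^i \ge 3s$, which could satisfy $2^i$ as large as $6s$. So you obtain $2^{i^\star} \in [s/3, 6s)$, an $18$-to-$1$ window, and no fixed rescaling of $2^{i^\star}$ turns this into a $3$-approximation. The standard fix is to use a finer geometric grid (scales $\rho^i$ for a constant $\rho < 2$), which keeps the number of scales $O(\log N)$ while shrinking the discretization loss; alternatively, tighten the two inequalities ($p_i \ge 1-e^{-c}$ and $p_i \le 1/c$ for a carefully chosen $c < 3$) so the interval closes up after rescaling.

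Second, the query count. Your union bound over all $\Theta(\log N)$ scales forces $t = \Theta(\log(\log N / \delta))$ repetitions per scale, giving $O(\log N \cdot (\log\log N + \log(1/\delta)))$ queries, and the claim that the $\log\log N$ can be ``folded into $\log(1/\delta)$'' is not correct when $\delta$ is a constant. The missing idea is that you do not need uniform accuracy at every scale. For scales well below the transition ($2^i \ll s$), the true probability $p_i = 1-(1-2^{-i})^s$ is $1 - e^{-\Omega(s/2^i)}$, doubly-exponentially close to $1$ as $i$ decreases, so the per-scale failure probability of the event $\{\hat p_i \le \tau\}$ decays doubly-exponentially and the sum over all such $i$ is dominated by the single scale nearest the transition. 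On the high side you need to control only the one scale $i_0 = \lceil\log_2(3s)\rceil$. Together these give total failure probability $e^{-\Omega(t)}$ with no $\log N$ blow-up, so $t = O(\log(1/\delta))$ suffices and the total is $O(\log N \cdot \log(1/\delta))$ as claimed. (Equivalently, one can run a constant-success-probability version with $t = O(1)$ per scale, using the same geometric-decay observation, and then take the median of $O(\log(1/\delta))$ independent runs.)
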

%\begin{proof} (Sketch)
%%	This is like the $F_0$-estimation from Flajolet-Martin or Alon-Mattias-Szegedy (see Amit Chakrabarti notes, for instance) {\bf \em deepc: refs need to be given}
%	We run $O(\log n)$-batches in parallel. In batch $i$, 
%	we sample elements $R_i$ in $[N]$  with probability $\frac{1}{2^i}$, and we check if $\supp(x\cap R_i) = 0$ or $> 0$ using an $\OR$-query.
%	Let $z$ be the largest $i$ such that $\supp(x\cap R_i) > 0$; we return $2^{z + \frac{1}{2}}$.
%\end{proof}

	\input{icalp-det-graph-conn}}
\full{
\section{Lower Bounds for Graph Connectivity with \OR-queries.}

In this section we establish our lower bounds with \OR-queries. In the first subsection, we show that deterministic, $r$-round algorithms for finding a spanning forest in an $n$-vertex undirected graph must make $\Omgt(n^{1+1/r})$-\OR queries. This builds on an adversary style lower bound for single element recovery with \OR-queries. A similar style of argument was used by Nisan~\cite{Nis19} to prove lower bounds for finding approximate matchings using \OR-queries.
It may be instructive to first read this \Cref{thm:lb-single-element-or} in \Cref{sec:lb-sing-el-or}, for the lower bound for the spanning forest problem is a direct-sum version of that.
In the second subsection, we show that non-adaptive ($1$-round) {\em randomized} algorithms for graph connectivity with $\OR$-queries must make $\Omgt(n^2)$-queries. 

\subsection{Lower Bound for Deterministic $r$-Round Algorithms}\label{sec:lb-det-conn-or}
The following theorem formalizes~\Cref{Thm:det-graph-conn-lb}. 

\begin{theorem}\label{thm:det-graph-lower}
	For any integer $r \geq 1$, any $r$-round deterministic algorithm in the $\OR$-query access which returns a spanning forest of any given graph $G(V,E)$ must make $\Omega(\frac{n^{1 + \frac{1}{r}}}{r^2 \cdot \log{n}})$-\OR queries.
\end{theorem}
\begin{proof}
The proof of this result is via an adversary argument. The \adv creates instances on a bipartite graph $G(U,V,E)$ with $n$ vertices on each side. Let $\ALG$ be any deterministic $r$-round algorithm
that is able to find a single edge incident {\em every} vertex $u \in U$; this is a much weaker condition than finding a spanning forest. We show a strategy for \adv that forces $\ALG$ to make more than $t := t(n,r) = \paren{\frac{n^{1+\frac{1}{r}}}{32r^2 \cdot \ln{n}}}$ queries to the graph \emph{in one of its rounds}. This establishes the lower bound in~\Cref{thm:det-graph-lower}. 

For every vertex $u\in U$, let $x_u \in \{0,1,\star\}^n$ denote the incidence vector that the \adv maintains. $x_u(i) = 0$ implies that the pair $(u,i)$ (for $i\in V$) is {\em not} an edge in $G$, $x_u(i) = 1$ implies that $(u,i)$ is an edge in $G$, and $x_u(i) = \star$ implies that the status of $(u,i)$ is still unknown. More precisely, whenever $x_u(i) = \star$, given the \adv responses to the queries made so far, there are two consistent graphs, one containing the edge $(u,i)$ and one not. 
Note that finding a coordinate $x_u(i) = 1$, which is precisely the single element recovery problem, means that $\ALG$ has succeeded in finding an edge incident to vertex $u$.
However, $\ALG$ needs to find an $x_u(i) = 1$ for {\em every} $u$. Thus, the problem is precisely solving $n$ single element recovery problems {\em in parallel}, where the queries are allowed to span multiple vectors corresponding to different vertices.
%That is, a single query $Q$ can contain $x_u(i)$'s for multiple $u$'s and $i$'s. %If \adv answers $0$, then all such $x_u(i)$'s must be $0$, but if \adv answers $1$, then {\em some} $x_u(i) = 1$.

We start with some definitions. For every vertex $u$, the \adv maintains a set $A_u \subseteq [n]$ of {\bf \em active coordinates}. These are precisely the coordinates $i$ for which $x_u(i) = \star$. Initially, $A_u = [n]$ for all vertices $u$, and all $x_u$'s are all-star.
We say a vertex $u$ is an {\bf \em alive vertex} if $A_u \neq \emptyset$. For alive vertices $u$, the vector $x_u \in \{0,\star\}^n$ has no coordinate set to $1$. Otherwise, we call
$u$ {\em dead}, and for a dead vertex $x_u \in \{0,1\}^n$. That is, the vector corresponding to a dead vertex is completely known.

Every query $Q$ ever made by the algorithm is answered $0$ or $1$ by \adv. We call the former query a $0$-query, and the latter a $1$-query. Note, $Q$ is classified post facto by the answers, and not up front. We call a $1$-query $Q$ {\bf \em explained} if $Q$ contains some $x_u(i)$ which has been already set to $1$. Otherwise, a $1$-query is unexplained. 
On the other hand, for every $0$-query, the \adv maintains that every $x_u(i)$ in $Q$ has been set to $0$. Note that by definition, every explained $1$-query must touch a dead vertex, and every unexplained $1$-query $Q$ must intersect $A_u$ for at least one vertex $u$.
We are now ready to state the goal of the \adv.

\begin{lemma}\label{lem:adv-goal}
	At the end of $r$ rounds, if there exists an alive vertex $u$ such that every {\em unexplained} $1$-query $Q$ is either disjoint from $A_u$ or
	$|A_u \cap Q| \geq 2$, then $\ALG$ cannot return a $j$ in $x_u$ asserting $x_u(j) = 1$.
\end{lemma}
\begin{proof}
	Suppose $\ALG$ does return $j$ asserting $x_u(j) = 1$. Since $u$ was alive, we must have $j\in A_u$ for otherwise $x_u(j) = 0$ since $x_u \in \{0,\star\}^n$.
	Now, consider the vector $\hat{x}_u$ which sets $\hat{x}_u(i) = 0$ wherever $x_u(i) = 0$, $\hat{x}_u(j) = 0$, and $\hat{x}_u(i) = 1$ for all $i\in A_u\setminus j$.
	For all other $u'\neq u$, we set $\hat{x}_{u'}(i) = x_{u'}(i)$ for $i\notin A_{u'}$, and $\hat{x}_{u'}(i) = 1$ for all $i\in A_{u'}$. We claim that $\hat{x}$ is consistent with all the responses. This would show that $\ALG$ fails.
	We only need to argue about unexplained $1$-queries for all other queries are consistent with the $0,1$-coordinates of $x$, and thus, $\hat{x}$. Take such a query $Q$. If $Q\cap A_u =\emptyset$, 
	then $Q\cap A_{u'} \neq \emptyset$ for some $u'$, and all $i\in A_{u'}$ is set to $\hat{x}_{u'}(i) = 1$. Otherwise, $|Q\cap A_u| \geq 2$ implying there is some $j'\neq j$ in $Q\cap A_u$. This has been set to $\hat{x}_{u}(j') = 1$.
\end{proof}
\noindent
To achieve the goal, the \adv maintains the following invariants after every round $k$. 
\begin{enumerate}[noitemsep]
	\item[(I1.)] The number of alive vertices, $a_k$, is at least $n\cdot \left(1 - \frac{k}{2r}\right)$.
	\item[(I2.)] For every alive vertex $u$ and for every unexplained $1$-query $Q$ with $Q\cap A_u \neq \emptyset$, we have $|Q\cap A_u| > n^{1 - \frac{k}{r}}$.
\end{enumerate}
Note that at the beginning, that is after round $k=0$, the invariants do hold. Furthermore, observe that if the invariants hold for $k = r$, then we get the premise of \Cref{lem:adv-goal} and the \adv succeeds in fooling $\ALG$.
All that remains is to show how the adversary {\em answers} the queries made in round $(k+1)$ (for $0\leq k\leq r-1$), and how the various sets are changed so that the invariants are maintained. %\medskip

Let $Q_1, Q_2, \ldots, Q_t$ be the queries made by $\ALG$ in round $(k+1)$. We say $Q_\ell$ {\em touches} vertex $u$ if $Q_\ell\cap A_u \neq \emptyset$. Call a query $Q_\ell$ {\em broad} if it touches $> 8r\cdot \ln n$ {\em alive} vertices. Call $Q_\ell$ {\em narrow} otherwise.
The next claim shows that a small number alive vertices can ``take care of'' all broad queries.
\begin{claim}\label{clm:A-large}
	There is a subset $S$ of alive vertices with $|S|\leq \frac{n}{4r}$ such that for
	any broad query $Q_\ell$ there is some $u\in S$ with $Q_\ell\cap A_u \neq \emptyset$.
\end{claim}
\begin{proof}
	Consider the following instance of the set cover problem: we have one set $S(u)$ for every alive vertex, and one element $e(Q_\ell)$ for every broad query $Q_\ell$. Each set $S(u)$ contains element $e(Q_\ell)$ if and only if $Q_\ell\cap A_u \neq \emptyset$.
	By setting a weight of $\frac{1}{8r \cdot \ln{n}}$ on the set $S(u)$ corresponding to each alive vertex $u$, we obtain a fractional set cover for this instance since, by design, every element (broad query) belongs to at least $8r \cdot \ln{n}$ many sets (alive vertices).
	Thus the set cover instance has a fractional set cover of size at most $\frac{n}{8r \cdot \ln{n}}$ as there are $\leq n$ sets. As the integrality gap of set cover LP is $\ln{t} \leq 2\ln{n}$, there is an integral set cover of size $\leq \frac{n}{4r}$. That is, there exists a set $S$ of $\frac{n}{4r}$ alive vertices 
	such that for every broad $Q_\ell$, $Q_\ell\cap A_u \neq \emptyset$ for some $u\in S$.
\end{proof}

The \adv does the following: for every $u\in S$, it sets $x_u(i) = 1$ for all $i\in A_u$ and deems $u$ {\em dead}. By the claim above, this step kills {\bf \em at most $\frac{n}{4r}$} alive vertices.
It responds $1$ to every broad query $Q_\ell$. By the above claim, note that these broad queries are explained $1$-queries.
Furthermore, if there is any narrow query $Q$ with $Q\cap A_u \neq \emptyset$ for $u\in S$, then \adv responds to $1$ to such queries as well, and these are also explained $1$-queries.

Next, the \adv responds to the remaining narrow queries. %What remains are narrow queries which did not touch any vertex in $S$. 
If there exists any alive vertex $u$ which is touched by $ \geq n^{1/r}$ such narrow queries, then \adv sets $x_u(i) = 1$ for all $i\in A_u$, deems it dead, and 
responds $1$ to all narrow queries touching this vertex. These $1$-queries are also explained. Since every narrow query touches at most $8r\cdot \ln n$ alive vertices, and there are
$\leq t = \frac{n^{1+1/r}}{32r^2\ln n}$ narrow queries to begin with, a counting argument shows that there cannot be more than $\frac{n}{4r}$ vertices which touch more than $n^{1/r}$ narrow queries.
Therefore, this step kills {\bf \em at most $\frac{n}{4r}$} alive vertices as well. In the remainder of this $(k+1)$th round, \adv does not kill any more vertices, and so the total number of vertices killed this round is $\leq \frac{n}{2r}$.
Therefore, the number of alive vertices after round $(k+1)$ is
$\geq n\cdot \left(1 - \frac{k}{2r}\right) - \frac{n}{2r} = n\cdot \left(1 - \frac{k+1}{2r}\right)$. Thus, Invariant (I1.) holds after round $(k+1)$.

The only unanswered queries left with are narrow queries such that every remaining alive vertex is touched by $\leq n^{1/r} - 1$ of these queries.
This is like a single instance of the single element recovery problem, and the remainder of this proof is akin to that of \Cref{thm:lb-single-element-or}. For every query $Q_\ell$ and for every alive $u$ with $Q\cap A_u \neq \emptyset$, 
if $|Q\cap A_u| \leq n^{1 - \frac{(k+1)}{r}}$, the \adv sets $x_u(i) = 0$ for all $i\in Q\cap A_u$, and removes these coordinates from $A_u$. Since there are $\leq n^{1/r} - 1$ such queries, the total number of vertices removed from $A_u$ is 
$\leq n^{1 - \frac{(k+1)}{r}} \cdot \left(n^{1/r} - 1\right) = \left(n^{1 - \frac{k}{r}} - n^{1 - \frac{(k+1)}{r}}\right)$.
If, for this particular query, all $x_u(i)$ participating in it is set to $0$, the \adv responds $0$.
Otherwise, it responds $1$. In the latter case, the query is an {\em unexplained} $1$-query. This completes the responses of \adv to all the queries made in this round. We now show that Invariant (I2.) holds.

Fix any alive vertex $u$ which remains alive after round $(k+1)$. Fix any unexplained $1$-query $Q$ (which could also be from a previous round) which intersects $A_u$. 
If $Q$ is from round $(k+1)$, then by the description of the \adv strategy, $|Q\cap A_u| > n^{1-\frac{k+1}{r}}$ for otherwise, the \adv would have removed these coordinates from $A_u$.
If $Q$ is from a previous round,
then since (I2.) held after round $k$, we get that
$|Q\cap A_u| > n^{1-\frac{k}{r}}$ before round $(k+1)$. For every vertex $u$ that remains alive, we know that the \adv removes $\leq \left(n^{1 - \frac{k}{r}} - n^{1 - \frac{(k+1)}{r}}\right)$ vertices from $A_u$.
Therefore, after round $(k+1)$, we still have $|Q\cap A_u| > n^{1-\frac{k}{r}} - \left(n^{1 - \frac{k}{r}} - n^{1 - \frac{(k+1)}{r}}\right) = n^{1 - \frac{k+1}{r}}$. Thus, Invariant (I2.) is maintained.

In sum, this shows how \adv can answer all the queries $Q_1, \ldots, Q_t$ in round $(k+1)$ such that both invariants are maintained. Thus, after round $r$, the \adv can maintain the premise of \Cref{lem:adv-goal}.
This in turn, proves \Cref{thm:det-graph-lower}. 
\end{proof}

 \subsection{$\Omgt(n^2)$-Lower Bound for Randomized Non-adpative Algorithms}\label{sec:lb-rand-one} 
The following theorem is a formalization of the lower bound result stated in~\Cref{Thm:rand-graph-conn-or}.
 \begin{theorem}\label{thm:rand-nonadaptive-or-lb-formal}
 	Any $1$-round (non-adaptive) randomized algorithm 
 	which makes less than $\frac{n^2}{2916\log^2 n}$-\OR queries on a graph, 
 	cannot infer whether the graph is connected or not with probability $\geq \frac{7}{16}$.
 \end{theorem}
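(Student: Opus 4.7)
The plan is to apply Yao's minimax principle: it suffices to exhibit a hard distribution $\mathcal{D}$ on $n$-vertex graphs on which every deterministic non-adaptive OR-query algorithm making fewer than $n^2/(2916\log^2 n)$ queries succeeds with probability less than $7/16$. Following the sketch in \Cref{sec:overview}, I will take $\mathcal{D}$ to be the ``two cliques, possibly joined by one bridge'' distribution: draw a uniformly random balanced partition $V = A\sqcup B$ with $|A|=|B|=n/2$ together with a uniformly random cross-pair $e\in A\times B$; with probability $1/2$ output the disconnected graph $G_Y := K_A\cup K_B$, and with probability $1/2$ output the connected graph $G_N := G_Y\cup\{e\}$. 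Under $\mathcal{D}$, any algorithm is a priori correct with probability $1/2$, so the entire advantage must come from distinguishing between the common two-clique background and the additional bridge.

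Fix any deterministic non-adaptive algorithm with OR-queries $Q_1,\ldots,Q_t$ and couple the two branches so that both use the same $(A,B)$ and $e$. On query $Q_i$ the answers on $G_Y$ and $G_N$ agree unless the following \emph{distinguishing event} holds: (i) $Q_i\cap\bigl(\binom{A}{2}\cup\binom{B}{2}\bigr)=\emptyset$, so the partition $(A,B)$ is a proper $2$-coloring of the graph $Q_i$, and (ii) $e\in Q_i$. If no query is distinguishing, the answer vector -- and hence the algorithm's output -- is identical on $G_Y$ and $G_N$, so the algorithm is correct on exactly one of them. Therefore the success probability is bounded by $\tfrac{1}{2}+\tfrac{1}{2}\cdot\Pr\bigl[\exists i:Q_i \text{ is distinguishing}\bigr]$, and the problem reduces to showing that the union-bounded sum $\sum_{i=1}^t\Pr[Q_i\text{ is distinguishing}]$ stays below $1/8$ throughout the claimed query regime.

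For the per-query bound, classify $Q_i$ as a graph on $V$: if $Q_i$ is not bipartite its distinguishing probability is zero, so assume bipartiteness with $m$ non-isolated vertices and $c$ non-trivial components. A Stirling-based estimate on $2^c\cdot\binom{n-m}{n/2-a}/\binom{n}{n/2}$ bounds the probability that a uniform balanced partition properly $2$-colors $Q_i$ by $\mathrm{poly}(n)\cdot 2^{-\Omega(m)}$ for $m\le n/2$ and by $n^{-\omega(1)}$ for $m=\Omega(n)$. Conditioned on this alignment, the uniform bridge lands in $Q_i$ with probability $4|Q_i|/n^2$, and the bipartite-edge bound $|Q_i|\le m^2/4$ caps this by $m^2/n^2$. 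Carefully trading off $m^2$ against the alignment probability -- and losing an unavoidable polylogarithmic factor from the Stirling and binomial-ratio estimates -- yields a per-query bound of $O(\log^2 n/n^2)$, with equality up to constants attained at bipartite queries whose ``bipartition witness'' uses $\Theta(\log n)$ vertices. Summing over $t<n^2/(2916\log^2 n)$ queries gives $\sum_i\Pr[Q_i\text{ is distinguishing}]<1/8$, which completes the argument via the standard translation from query-answer TV distance to success probability.

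The main obstacle is establishing the optimality of the extremal bipartite query shape: in principle, a sophisticated query could distribute its edges across many weakly-aligned components or engineer a bipartite graph that leverages the $c$-dependence of the alignment bound. The cleanest way to handle this is to parametrize all bipartite queries by the profile $(m,c,a)$, derive a tight alignment bound $2^c\cdot\binom{n-m}{n/2-a}/\binom{n}{n/2}$, and check that the trade-off against the edge cap is maximized by the ``bipartite clique at the $\log n$ scale'' example. Handling multi-component queries -- where $c$ is comparable to $m$ and the factor $2^c$ partially compensates for the exponential decay -- is the most delicate step; once this is in hand, everything else is a union bound followed by the standard Yao argument.
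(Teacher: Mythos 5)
Your approach is sound and takes a genuinely different route from the paper. The paper's proof dichotomizes queries by size: \emph{long} queries (at least $36\log^2 n$ pairs) are shown to contain an acyclic subgraph of $\geq 3\log n$ pairs, at least one of which lands inside the same clique with probability $\geq 1-1/n^3$; hence long queries answer $1$ on both the bridge-free and bridge-added graph and never distinguish. \emph{Short} queries have few pairs, so the random bridge lands in one with probability $O(\log^2 n/n^2)$, and a union bound over the $< n^2/(2916\log^2 n)$ short queries bounds the total distinguishing probability by $1/9$. You instead bound the distinguishing probability \emph{directly per query}, parametrized by the query's structure as a bipartite graph; this needs no size threshold and is in principle sharper.

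Your distinguishing criterion is exactly right and the Yao reduction is correct, but the per-query estimate you sketch is both overly pessimistic and, as stated, incorrect. You claim the extremal case is a bipartite clique on $\Theta(\log n)$ vertices and attribute an unavoidable $\mathrm{polylog}(n)$ loss to the Stirling/binomial estimates; in fact $\binom{n-m}{\lfloor(n-m)/2\rfloor}/\binom{n}{\lfloor n/2\rfloor} = 2^{-m}\sqrt{n/(n-m)}\bigl(1+o(1)\bigr)$ carries no polynomial loss when $m=o(n)$, and the extremal bipartite queries have $O(1)$ non-isolated vertices (a single pair, or $K_{2,2}$), yielding a per-query bound of $O(1/n^2)$ rather than $O(\log^2 n/n^2)$. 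The ``delicate'' multi-component trade-off you flagged closes cleanly: for a bipartite query $Q$ with non-trivial components of sizes $m_1,\ldots,m_c$ and bipartitions $(X_j,Y_j)$, one has $|Q|\leq\sum_j |X_j||Y_j|\leq\sum_j 2^{m_j-2}$, whence
\[
2^c\,2^{-m}\,|Q| \;\leq\; \sum_{j=1}^{c}\Bigl(\prod_{i=1}^{c}2^{1-m_i}\Bigr)2^{m_j-2}
\;=\;\frac{1}{2}\sum_{j=1}^{c}\prod_{i\neq j}2^{1-m_i}\;\leq\;\frac{c}{2^{c}}\;\leq\;\frac{1}{2},
\]
using $m_i\geq 2$ for all $i$. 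Combined with the bridge-hit factor $4|Q|/n^2$ conditional on alignment, each query distinguishes with probability at most $\frac{2(1+o(1))}{n^2}$ (for $m\leq n/2$; for $m>n/2$ the exponential factor kills everything), and the union bound over any $o(n^2)$ queries keeps the total below $1/8$. So once filled in, your argument actually proves an $\Omega(n^2)$ lower bound, tight against the $\binom{n}{2}$ singleton-query upper bound; the $\log^2 n$ slack in the stated theorem is an artifact of the paper's long/short decomposition, not a loss inherent to the instance.
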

 \def\cD{\mathcal{D}}
 The $7/16$ can be made arbitrarily close to $1/2$ by making the constant $2916$ larger; we omit these details.
 To prove the above theorem, by Yao's minimax theorem, it suffices to give a distribution $\cD$ over $n$-vertex graphs such that any {\em deterministic} collection of $\leq \frac{n^2}{2916\log^2 n}$-\OR queries fails on this distribution with probability at least $7/16$.
 To describe $\cD$, we describe how a graph is generated in two steps. In the first step, we assign each vertex in $V$ to either $L$ or $R$ with equal probability. We then insert all possible edges among vertices in $L$, as well as among vertices in $R$ -- that is, the graphs induced by $L$ and $R$ are cliques. Let $G_1(V,E_1)$ denote the graph at this stage. Then with probability $1/2$, we output this graph as the final graph (a No instance), and with probability $1/2$, we sample one of the $|L|\cdot |R|$ edge slots connecting vertices in $L$ to vertices in $R$, uniformly at random, insert this edge (call it $e$), and output the resulting graph as the final graph (a Yes instance). Let $G(V,E)$ denote the final graph.
 
 \begin{lemma}
 	\label{lem:1_round_rand_OR}
 	Let $\cQ$ be any fixed set of $\frac{n^2}{2916 \log^2 n}$ \OR queries. Then with probability at least $7/8$, the answers to queries in $\cQ$ are the same on the graphs $G_1(V, E_1)$ and $G(V, E)$.
 \end{lemma}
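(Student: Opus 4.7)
The plan is to fix an arbitrary query $Q \in \cQ$, bound the probability that $Q$'s answers on $G_1$ and on $G$ differ, and then take a union bound over $\cQ$. Write $m_Q = |Q|$ and let $X_Q$ be the indicator of the ``$Q$ disagrees'' event. Since $G$ equals $G_1$ with probability $1/2$ and otherwise equals $G_1 \cup \{e\}$ for a uniformly sampled $e \in L \times R$, and since adding $e$ flips $Q$'s answer only when $Q \cap E_1 = \emptyset$ and $e \in Q$, the event $X_Q = 1$ is the intersection of three events: $B$ = ``the optional edge was added'' ($\Pr[B]=1/2$); $A_Q$ = ``every pair in $Q$ is a cross-edge with respect to $(L,R)$'' (which is exactly when $Q \cap E_1 = \emptyset$); and $C_Q$ = ``$e \in Q$''. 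A short conditional computation, using that on $A_Q$ all $m_Q$ pairs of $Q$ already lie inside $L \times R$, gives
\[
\Pr[X_Q = 1] \;=\; \frac{m_Q}{2} \cdot \mathbb{E}\!\left[\frac{\mathbf{1}_{A_Q}}{|L|\cdot|R|}\right].
\]
Splitting on the Chernoff-good event $\{|L|,|R|\ge n/4\}$, which holds except with probability at most $2e^{-n/8}$, leads to the cleaner bound $\Pr[X_Q = 1] \le 8\,m_Q\,\Pr[A_Q]/n^2 + m_Q\cdot e^{-\Omega(n)}$.

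The heart of the argument is a structural bound relating $\Pr[A_Q]$ to $m_Q$. View $Q$ itself as a graph on $V$. The event $A_Q$ forces $(L,R)$ to be a proper $2$-coloring of this query graph, so $\Pr[A_Q] = 0$ unless the query graph is bipartite, in which case $\Pr[A_Q] = 2^{c_Q - n}$, where $c_Q$ is its number of connected components (isolated vertices included). Split $V$ into $n - t$ isolated vertices and $t$ non-isolated vertices forming $k'$ bipartite components, so $c_Q = (n - t) + k'$. Each component carries an edge, hence has $\ge 2$ vertices, giving $k' \le t/2$; and a bipartite component with $m_i$ edges needs $\ge 2\sqrt{m_i}$ vertices by AM-GM on the bipartition sizes, so $t \ge 2\sum_i \sqrt{m_i} \ge 2\sqrt{m_Q}$ by subadditivity of $\sqrt{\cdot}$. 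Combining, $c_Q \le n - t/2 \le n - \sqrt{m_Q}$, hence
\[
m_Q \cdot \Pr[A_Q] \;\le\; m_Q \cdot 2^{-\sqrt{m_Q}} \;\le\; C
\]
for an absolute constant $C$, since $m \mapsto m\cdot 2^{-\sqrt m}$ is uniformly bounded for integers $m \ge 1$.

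A union bound over $\cQ$ then concludes:
\[
\Pr\!\left[\exists\, Q \in \cQ:\; X_Q = 1\right] \;\le\; \sum_{Q \in \cQ} \Pr[X_Q = 1] \;\le\; \frac{8C\,|\cQ|}{n^2} \,+\, o(1) \;\le\; \frac{8C}{2916\,\log^2 n} \,+\, o(1),
\]
which is well below $1/8$ for all sufficiently large $n$. The main obstacle is the structural bound on $c_Q$: a query designer can try to inflate $\Pr[A_Q]$ either by splitting $Q$ across many small components (to raise $k'$) or by packing its edges into a few dense bipartite blocks (to keep $t$ small), and AM-GM is what pins down both strategies simultaneously.
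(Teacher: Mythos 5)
Your proof is correct, and it takes a genuinely different route from the paper. The paper partitions $\cQ$ into \emph{long} queries (size $\geq 36\log^2 n$) and \emph{short} queries, arguing separately that (i) every long query already answers $1$ on $G_1$ with probability $\geq 1-n^{-3}$, because $\geq 36\log^2 n$ edges must contain a forest of $\geq 3\log n$ edges and the event that all of them cross the $L/R$ partition has probability $\leq 2^{-3\log n}$, and (ii) conditioned on $|L||R| \geq n^2/9$, a short query that answers $0$ on $G_1$ contains the planted edge $e$ with probability $\leq 324\log^2 n/n^2$, so the expected number of flipped short queries is $\leq 1/9$. You instead give a single, uniform per-query bound: you derive the exact identity $\Pr[X_Q=1] = \tfrac{m_Q}{2}\,\mathbb{E}[\mathbf{1}_{A_Q}/(|L||R|)]$, compute $\Pr[A_Q] = 2^{c_Q-n}$ exactly via counting proper $2$-colorings of the query graph, and prove the clean structural inequality $c_Q \leq n - \sqrt{m_Q}$ (via the $k'\leq t/2$ and AM-GM arguments), so that $m_Q\Pr[A_Q] \leq m_Q 2^{-\sqrt{m_Q}} \leq C$ for an absolute constant $C\approx 1.13$. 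Both proofs ultimately exploit the same combinatorial fact --- $m$ edges span $\Omega(\sqrt m)$ vertices, hence contain a forest of $\Omega(\sqrt m)$ edges, hence the probability that an entire query lies in $L\times R$ decays like $2^{-\Omega(\sqrt m)}$ --- but the paper uses it only to justify a hard threshold, while you extract its full strength and interpolate smoothly across all query sizes. What this buys you: a single union bound with no case split, a sharper conclusion (failure probability $O(1/\log^2 n)$ rather than a fixed constant), and explicit constants that are easy to track. The paper's version is slightly more elementary in that it avoids the $2$-coloring/component-count identity, but at the cost of the somewhat ad hoc long/short dichotomy.

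One minor point worth recording explicitly: the identity $\Pr[X_Q=1] = \tfrac{m_Q}{2}\,\mathbb{E}[\mathbf{1}_{A_Q}/(|L||R|)]$ has a removable singularity when $|L||R|=0$, but in that case $A_Q$ forces $Q=\emptyset$ so $m_Q=0$ and there is no disagreement; stating this keeps the formula airtight. Also, in the final union bound the error term $\sum_Q m_Q e^{-\Omega(n)}$ should be noted to be $O(n^4 e^{-\Omega(n)}) = o(1)$, using $m_Q \leq \binom{n}{2}$ and $|\cQ| = O(n^2)$; you clearly have this in mind but it is worth the one line.
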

 \begin{proof}
 	We partition $\cQ$ into two sets of queries, namely, a set $\cQ_1$ that contains queries of size at least $36 \log^2 n$ (long queries), and a set $\cQ_2$ containing the remaining queries (short queries). We first claim that with probability at least $1 - 1/n$, 
 	the response to {\em all} long queries	is $1$ in $G_1$ (and hence, since $G$ is a supergraph of $G_1$, also in $G$).
	To see this, fix any long \OR-query $Q$ with $|Q| \geq 36\log^2 n$. Let us consider these as possible edges $F$ in the $n$-vertex graph. Note that there must exist a set $S\subseteq V$ of with  $|S| \geq 6 \log n$ such that 
	every edge in $F$ is incident to some vertex of $S$. The reason is that $t$ vertices can contain at most $\binom{t}{2} \leq t^2$ edges.
	Therefore, there exists a set $F' \subseteq F$ with $|F'| \geq 3\log n$ such that $F'$ induces an acyclic subgraph.
	Now, the probability that every edge in $F'$ has one end-point in $L$ and other in $R$, is at most $(1/2)^{3 \log n} = 1/n^3$. 
	This means with probability at least $1 - \frac{1}{n^3}$, at least one edge of $F'$ (and thus $F$) must actually have both endpoints in either $L$ or $R$, implying that edge is in $G_1$.
	Therefore, the response to $Q$ is $1$. Taking union bound over all long queries in $\cQ_1$, we conclude that with probability at least $1 - \frac{1}{n}$, all queries in $\cQ_1$ must be answered $1$ in $G_1$ (and hence $G_2$).
	We refer to this event as ${\cal E}_1$.
 	
 	We now analyze the behavior of short queries. Any query $Q \in \cQ_2$ that evaluates to $1$ on $G_1$, continues to be evaluated so in $G$ since we do not remove any edges in going from $G_1$ to $G$. So it suffices to show that every short query that evaluates to $0$ in $G_1$, also evaluates to $0$ in $G$, whp. We first observe that with probability at least $1 - \frac{1}{n}$, the number of edge slots between $L$ and $R$, that is the quantity $|L|\cdot|R|$, is at least $n^2/9$. This follows from a simple application of Chernoff bounds - each of the sets $L$ and $R$, are of size at least $n/3$ with probability at least $1 - \frac{1}{n}$, implying that $|L||R|$, is at least $n^2/9$ with probability at least $1 - \frac{1}{n}$. We will refer to this event as ${\cal E}_2$.
 	
 	From here on, we condition on the simultaneous realization of both events ${\cal E}_1$ and  ${\cal E}_2$. Now fix a short query $Q \in \cQ_2$ that evaluates to $0$ in $G_1$. Since the Yes-instance chooses an edge slot $e$ among the $|L||R|$ edge slots uniformly at random, the probability that $e$ appears in $Q$, is at most $\frac{36 \log^2 n}{n^2/9} = \frac{324 \log^2 n}{n^2}$. Thus, the expected number of queries in $\cQ_2$ that contains the edge slot $e$ is at most $|\cQ_2|\cdot\frac{324 \log^2 n}{n^2}\leq \frac{1}{9}$. 
 	That is, the probability that some query in $\cQ_2$ contains the edge slot $e$ is at most $1/9$.
 	
 	Putting together, the probability that the set of queries $\cQ$ have different responses on graphs $G_1$ and $G$ is at most $1/9 + 2/n$ which is at most $1/8$ for sufficiently large $n$.
 \end{proof}
 We can now complete the proof of~\Cref{thm:rand-nonadaptive-or-lb-formal} as follows.
 By~\Cref{lem:1_round_rand_OR}, any deterministic algorithm that performs less than $\frac{n^2}{2916 \log^2 n}$ \OR queries sees the same answers on Yes and No instances generated from distribution $\cD$ with probability at least $7/8$. Thus any deterministic algorithm must err with probability at least $7/16$ in distinguishing between the Yes and No instances of $\cD$. As mentioned above, Yao's minimax lemma implies the theorem.

\section{Deterministic Algorithm for Graph Connectivity}\label{sec:det-graph-conn}

In this section, we prove the following theorem which formalizes~\Cref{Thm:det-graph-conn}.
\begin{theorem}\label{thm:det-graph-conn-formal}
	Let $r$ be any fixed positive integer. 
	There exists an $35r$-round deterministic algorithm $\DetGraphConn(G)$ which makes 
	at most $O(n^{1+\frac{1}{r}} \log n)$-\BIS-queries on an undirected multigraph $G$, and returns a spanning forest of $G$.
\end{theorem}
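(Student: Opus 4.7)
The plan is to give a Boruvka-style contraction algorithm. Maintain a partition $\mathcal{P}$ of $V$ into super-vertices (initially $n$ singletons); each phase discovers several incident super-edges for every super-vertex of $\mathcal{P}$, contracts them, and passes the resulting coarser partition to the next phase. If each super-vertex finds $k$ distinct incident super-edges, a standard Boruvka argument shows $|\mathcal{P}|$ drops by a factor of $\Omega(k)$, so $O(r)$ phases suffice provided we can guarantee $k = n^{\Omega(1/r)}$ per phase. Correctness is automatic: every contracted edge is a real edge of $G$, so the union of all contracted edges forms a spanning forest.

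To implement a phase, fix a super-vertex $u$ and consider the $(|\mathcal{P}|-1)$-dimensional Boolean vector $x^u$ whose $v$-coordinate is $1$ iff some edge of $G$ joins the part of $u$ to the part of $v$. A query $\BIS(A,B)$ with $A$ contained in the part of $u$ and $B$ a union of other parts is precisely an \OR query on a sub-vector of $x^u$. Hence, the enhanced deterministic multi-element recovery subroutine advertised in the overview (which in $O(r)$ rounds using $N^{1/r}$ queries per round returns $N^{\Omega(1/r)}$ support elements of an $N$-dimensional nonnegative vector) can be invoked in parallel across all super-vertices. The per-super-vertex query budget in each round is $\tilde{O}(n^{1+1/r}/|\mathcal{P}|)$, so as $|\mathcal{P}|$ shrinks across phases the amount of ``recovery work'' each surviving super-vertex can afford grows correspondingly, enabling us to extract progressively more incident super-edges per super-vertex while keeping the per-round \BIS budget at $\tilde{O}(n^{1+1/r})$.

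The main obstacle is the round count. A naive implementation that runs the $O(r)$-round recovery subroutine afresh inside every phase would cost $O(r)$ rounds per phase and $O(r)$ phases, giving $O(r^2)$ rounds rather than the claimed $35r$. Avoiding this blow-up requires amortization: I would design a single pipelined execution of roughly $35r$ rounds in which the queries of round $t$ depend on (i) the recovery state of the current phase and (ii) the newly available coarser partition inherited from preceding contractions, so that the $O(r)$-round recovery ``cost'' is shared across the $O(r)$ phases rather than multiplied. Concretely, set $\alpha = n^{1/r}$ and tune the phases so that, after the $i$-th phase, $|\mathcal{P}| \le n/\alpha^i$ and the $(i{+}1)$-th phase only needs a constant additional number of rounds to lift the per-super-vertex recovery output from $\alpha^i$ to $\alpha^{i+1}$ support elements, leveraging that the recovery work already done on $x^u$ in prior phases is never wasted since $\mathcal{P}$ only coarsens.

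A secondary technical issue, which I expect to be routine but nonzero work, is that \BIS queries are restricted to be between disjoint vertex subsets, so the oracle calls made by the recovery subroutine (which are stated as abstract \OR queries on coordinates of $x^u$) must be faithfully encoded as \BIS queries over the evolving partition; handling updates to $\mathcal{P}$ in a way compatible with this encoding — while keeping a clean accounting that gives $O(n^{1+1/r}\log n)$ \BIS queries per round — is the last ingredient needed to conclude the $35r$-round bound stated in \Cref{thm:det-graph-conn-formal}.
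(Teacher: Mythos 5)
Your high-level plan — Borůvka-style contraction driven by a deterministic multi-element recovery subroutine, with correctness following because contracted super-edges are witnessed by real edges — matches the paper's approach, and you correctly identify that a naive $O(r)$ phases $\times$ $O(r)$ rounds per phase gives $O(r^2)$. The gap is in your proposed fix for this. You suggest a pipelined execution with $O(r)$ phases of constant duration, reusing the recovery state for $x^u$ across phases because ``$\mathcal{P}$ only coarsens.'' This does not hold up: when $\mathcal{P}$ coarsens, the super-vertex $u$ itself typically merges with others, so the recovery state for $x^u$, $x^v$, $x^w$ (built up independently via separate block decompositions) does not combine into a recovery state for $x^{u\cup v\cup w}$; and even when $u$ survives, the coordinate space $\Pi\setminus\{u\}$ is re-partitioned, so a ``heavy block'' of round $t$ need not correspond to any sub-block of round $t{+}1$. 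Nothing in the $\DetFindMany$ structure makes its intermediate state robust to this re-indexing, and you offer no argument that a constant number of extra rounds suffices to re-synchronize.

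The paper avoids incremental/pipelined recovery entirely. Its amortization hinges on a quantitatively different feature of $\DetFindMany_{r,c}$: when it is given a larger per-round budget of $O(N^{c/r}\log N)$ queries, it not only recovers more elements ($N^{c/4r}$) but also finishes \emph{faster}, in $\lceil 2r/c\rceil$ rounds. The algorithm then runs only $O(\log r)$ phases (not $O(r)$), maintaining the invariant $|\Pi_i|\le n^{1-(4^i-1)/r}$. In phase $i$ it sets $c$ so that the per-super-vertex output is $n^{4^i/4r}$; since the number of super-vertices has shrunk, the per-round budget of $\tilde O(n^{1+1/r})$ accommodates this, and phase $i$ costs only $O(r/4^i)$ rounds. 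The geometric sum $\sum_i O(r/4^i)=O(r)$ replaces your pipelining; each phase also repeats the contraction a constant (12) number of times to realize the $n^{4^i/4r}$ factor in component count (via a Borůvka lemma: if every vertex contributes $\min(D,\deg)$ incident edges, the component count is at most $n/D$). Finally, your proposal elides two issues the paper handles explicitly: (i) recovery must return \emph{all} incident super-edges for low-degree super-vertices, not just some, for the Borůvka lemma to apply; and (ii) a $\BIS$ answer only reveals a pseudo-edge between two parts, not an actual edge $(a,b)$ of $G$, so a separate $O(r)$-round ``tree-building'' step is needed to convert each pseudo-edge of the spanning forest into a concrete edge of $G$. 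These are routine but must be accounted for in the round and query budget.
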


%for any positive integer $r$, we design a $O(r)$-round deterministic algorithm
%which makes $O(n^{1+\frac{1}{r}} \log n)$-\BIS-queries and and outputs a spanning forest. 
%given an undirected $n$-vertex multi-graph, for any positive integer $r$, we design an $O(r\log r)$-round deterministic algorithm 
%which makes $O(n^{1+1/r} \log n)$ \BIS-queries per round, 
\noindent
We start by establishing some simple subroutines which we need.

\subsection{Simple Subroutines}

We begin by strengthening the simple algorithm $\DetBinSearch$ asserted in~\Cref{lem:bin-search}. While in $r$-rounds with $O(N^{1/r})$-\OR queries $\DetBinSearch_r(x)$ recovers a single element in the support, one can in fact get many more elements from the support. This result may be of independent interest.
%
%The starting point for our algorithm is the following result from ~\cite{}.
%
%\begin{lemma}~\cite{}
%	\label{lem:non-adaptive_OR_recovery}
%	Given a non-negative integer vector $X$ of length $n$, for any integer $k \in [1..n]$, there exists a non-adaptive algorithm that uses $O(k^2 \log n)$ OR queries and either recovers $\min\{ k, \supp{X} \}$ indices on which $X$ is non-zero, or correctly determines that $\supp{X} > k$.
%\end{lemma}
%
%In what follows, we will refer to the algorithm underlying Lemma~\ref{lem:non-adaptive_OR_recovery} as $\bounded(n,k)$. Using the non-adaptive algorithm above as a starting point, we can design a strategy to recover more efficiently many elements of the support of a vector, albeit using multiple adaptive rounds. This more efficient process will be the main building block for our algorithm.
%
\begin{lemma}
	\label{lem:det-r-round-supp-recovery}
		Let $x \in \RR^N_{+}$ be a  non-zero, non-negative vector, and let $r$ be a positive integer, and let $c<r$. There exists a $\ceil{2r/c}$-round {\em deterministic} algorithm $\DetFindMany_{r,c}(x)$ which makes $O(N^{c/r}\log N)$-\OR queries per round, and returns $\min(N^{c/4r}, \supp(x))$ distinct coordinates from $\supp(x)$.
	%we can find a coordinate $j$ with $x_j > 0$.	If linear queries are allowed, then one can recover $x_j$ as well.
	
%	Given a non-negative integer vector $X$ of length $n$, for any positive integer $r$, there exists a $r$-round algorithm that uses $O(n^{1/r} \log n)$ OR queries in each round, and recovers $\min\{ n^{1/2r}, \supp{X} \}$ indices on which $X$ is non-zero.
\end{lemma}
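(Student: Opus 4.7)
The plan is to generalize $\DetBinSearch$ of \Cref{lem:bin-search} by following many branches of the binary-search tree in parallel. Put $B := \lceil N^{c/r}\rceil$ (queries per round) and $d := \lceil N^{c/4r}\rceil$ (target number of support elements). The key numeric fact is that $B/d \geq N^{3c/4r}$, so an ``effective'' binary search with $B/d$ queries per round takes $\log_{B/d} N = \tfrac{4r}{3c}$ rounds to pin down one element, and this is the single quantity that will govern the round count of the whole algorithm.

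The algorithm maintains a collection $\mathcal{L}_i$ of pairwise disjoint ``live blocks,'' each known to contain at least one support element; initially $\mathcal{L}_0 = \{[N]\}$. In each round of Phase~1, partition every $P \in \mathcal{L}_{i-1}$ into $k_i := \lfloor B/|\mathcal{L}_{i-1}|\rfloor$ (roughly) equal sub-blocks, issue one \OR query per sub-block (total $\leq B$ queries), and let $\mathcal{L}_i$ be the set of sub-blocks whose query returned $1$. Stop Phase~1 as soon as either $|\mathcal{L}_i| \geq d$ (truncate to exactly $d$ blocks) or every surviving block is a singleton. In the latter case we output the survivors as the full support and halt (they cover $\supp(x)$, and there are at most $d$ of them). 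Otherwise, in Phase~2 we run $\DetBinSearch$ inside each of the $d$ blocks in parallel, budgeting $\lfloor B/d\rfloor$ queries per block per round; this returns one support element per block, for a total of $d$ elements.

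The query budget is respected in both phases by construction. The heart of the argument, and the step I expect to need the most care, is the round count. Throughout Phase~1 we have $|\mathcal{L}_{i-1}| < d \leq B^{1/4}$, so $k_i \geq B/(2d)$ and hence the block size shrinks as $s_i \leq 2 s_{i-1}\, d/B$. Telescoping gives $s_{i^*} \leq N \cdot (2d/B)^{i^*}$ at the end of Phase~1 (say, round $i^*$). Phase~2 shrinks each block by a factor of $\lfloor B/d\rfloor$ per round, so it terminates after $\lceil \log_{B/d} s_{i^*}\rceil + O(1)$ rounds. Adding the two phases,
\begin{equation*}
i^* \;+\; \lceil \log_{B/d} s_{i^*}\rceil + O(1) \;\leq\; \log_{B/d} N + O(1) \;=\; \tfrac{4r}{3c} + O(1) \;\leq\; \lceil 2r/c\rceil,
\end{equation*}
where the last inequality uses $c < r$ and $4/3 < 2$.

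The main subtlety is the adversarial tradeoff within Phase~1: whether the support is concentrated in few blocks (making Phase~1 long but Phase~2 trivial) or spread across many blocks (ending Phase~1 fast but leaving a longer Phase~2), the total round count is always governed by the single quantity $\log_{B/d} N$. Per-round query cost is at most $B = O(N^{c/r})$ in either phase, comfortably within the $O(N^{c/r}\log N)$ budget allowed by the lemma, with the extra $\log N$ absorbing the $O(1)$ rounding losses from the floors $\lfloor B/|\mathcal{L}_{i-1}|\rfloor$ and $\lfloor B/d\rfloor$.
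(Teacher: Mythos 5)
Your proposal is correct, and it takes a genuinely different route from the paper's. The paper's proof is ``single-threaded'': in each round it splits the current candidate region into $N^{c/2r}$ sub-blocks, runs the group-testing primitive $\BndSuppRec(\cdot, N^{c/4r})$ of \Cref{lem:det-support-checker} on each (this is where the $\log N$ factor in the query bound comes from), and then either recovers the full support outright or recurses into a single ``heavy'' sub-block that is certified to contain $\geq N^{c/4r}$ support elements; after $\ceil{2r/c}-1$ such rounds the heavy block has shrunk to size $N^{c/2r}$ and is probed coordinate-by-coordinate in a last round. Your proof avoids $\BndSuppRec$ entirely and is in this sense more elementary: it is a parallel binary search, first multiplying the number of live blocks up to $d = N^{c/4r}$ (Phase~1), then running independent binary searches in those $d$ blocks (Phase~2). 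Each approach pays for its simplicity somewhere: the paper's round accounting is mechanical (each round shrinks by exactly the factor $N^{c/2r}$), while yours requires the observation -- which you do correctly identify as the crux -- that the adversary's choice of how fast $|\mathcal{L}_i|$ grows trades off against how fast the block sizes shrink, so the total $i^* + \log_{B/d} s_{i^*}$ is always controlled by the one quantity $\log_{B/d} N$. Conversely your query bound is the cleaner $O(N^{c/r})$ per round with no $\log N$ factor.

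One place where the write-up is looser than it should be: the ``$+O(1)$'' in the round count cannot be absorbed by the $\log N$ factor in the \emph{query} budget, which is what the final sentence suggests; query slack does not buy rounds. What actually saves you is that $\lceil 2r/c \rceil - \tfrac{4r}{3c} \geq 1$ for every $c<r$ (with equality at $c = 2r/3$), so there is exactly one round of genuine slack to absorb the single ceiling that arises in Phase~2. The floor in $k_i$ and the factor of $2$ in $s_i \leq 2 s_{i-1} d/B$ contribute only an additive $i^*\log_{B/d} 2 = O(1/\log N)$ to the exponent, which vanishes for large $N$ rather than accumulating into another integer round. So the ``$O(1)$'' must be pinned down to $\leq 1$; stated as a bare $O(1)$ it is not obviously dominated by the available slack when $c$ is close to $2r/3$. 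With that quantification the argument is airtight.
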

\begin{proof}
	In the first round, we partition the range $[N]$ into $N^{c/2r}$ blocks of size $N^{1-c/2r}$ each. Let these blocks be $B_1, \ldots, B_k$ with 
	$k = N^{c/2r}$.
	For each $i\in [k]$, 
	 we run the algorithm $\BndSuppRec(x\cap B_i, N^{c/4r})$
	asserted in~\Cref{lem:det-support-checker}. The total number of queries used here is $O(N^{c/2r}\cdot \left(N^{c/4r}\right)^2\log N) = O(N^{c/r}\log N)$.

	At the end of this round, either we recover $\supp(x\cap B_i)$ for each block, and thus recover $\supp(x)$, 
	and we are done.
	Or, there is at least one block of size $N^{1- c/2r}$ which is guaranteed to contain $\geq N^{c/4r}$ elements in its support.
	We call this the heavy block of round 1. 
	Next, we now proceed to recover $N^{c/4r}$ elements from this heavy block of round 1. 
	
	In the second round, we partition the indices of this heavy block again into $N^{c/2r}$ blocks of size $N^{1-2c/2r}$ each, and run
	$\BndSuppRec$ again on this block with $d = N^{c/4r}$.
	Once again, either we recover the entire support of the heavy block (which is guaranteed to contain at least $N^{c/4r}$ elements) and we are done. Or find a block of size $N^{1-2c/2r}$ that contains at least $N^{c/4r}$ elements in its support-- this is the heavy block of round $2$ -- and we now proceed to recover $N^{c/4r}$ elements in the heavy block of round 2. 
	
	We continue in this manner, and after $\ceil{2r/c} - 1$ rounds, either we have already recovered at least $N^{c/4r}$ elements in the support of $x$, or have identified a heavy block of size $N^{1 - \left((\frac{2r}{c}-1)\cdot \frac{c}{2r}\right)} = N^{c/2r}$ that contains at least $N^{c/4r}$ elements in the support of $x$. In the final round, we can simply probe each entry completing the proof.
\end{proof}
\begin{remark}
The trade-off between the number of queries and number of elements recovered is not tightly established for the purpose of what we need in the graph connectivity algorithm.
For instance, using the same idea as above, 
in $2$ rounds one can actually recover $\min(N^{1/4},\supp(x))$ coordinates making $O(N^{3/4})$-queries per round.
\end{remark}
Next, we give an algorithm to find edges between two disjoint sets of vertices using $\BIS$-queries.

\begin{lemma}\label{lem:det-set-to-set}
	Let $A$ and $B$ be two {\em disjoint} sets of vertices with at least one edge between them. There exists a $2r$-round deterministic algorithm
	$\DetFindEdge_r(A,B)$ which makes $O(|A|^{1/r} + |B|^{1/r})$-\BIS queries per round, and returns an edge $(a,b)$ with $a\in A$ and $b\in B$.
\end{lemma}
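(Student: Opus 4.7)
The plan is to view the problem as a two-dimensional version of single element recovery, and solve it by running two independent instances of the binary-search style algorithm of \Cref{lem:bin-search}: one that ``peels away'' the $A$-coordinate of some edge in $r$ rounds, and another that then peels away the $B$-coordinate in $r$ more rounds. Since each \BIS query already answers an \OR-query on a rectangle $A' \times B'$ with $A'\subseteq A$, $B'\subseteq B$, we may directly simulate the role of \OR queries in $\DetBinSearch_r$.

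\textbf{Phase 1 (find $a \in A$, $r$ rounds).} Maintain a ``live'' set $A^{(k)}\subseteq A$ with $A^{(0)} = A$ and the invariant that $A^{(k)}$ contains an endpoint of some edge going to $B$. In round $k$, partition $A^{(k-1)}$ into $|A|^{1/r}$ blocks of size $|A^{(k-1)}|/|A|^{1/r}$, call them $A^{(k-1)}_1, \ldots, A^{(k-1)}_{|A|^{1/r}}$, and ask \BIS queries $\BIS(A^{(k-1)}_i, B)$ for $i=1,\dots,|A|^{1/r}-1$. These queries are legitimate since $A^{(k-1)}_i \subseteq A$ is disjoint from $B$. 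If some query returns $1$, set $A^{(k)}$ to the corresponding block; otherwise set $A^{(k)}$ to the last block $A^{(k-1)}_{|A|^{1/r}}$ (which must then contain an endpoint, by the invariant). After $r$ rounds, $|A^{(r)}|=1$, and we denote its element by $a$.

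\textbf{Phase 2 (find $b \in B$, $r$ rounds).} We now know $a \in A$ has at least one neighbor in $B$. Run the same binary search, but on $B$: maintain $B^{(k)}\subseteq B$ with $B^{(0)} = B$ and the invariant that $a$ has a neighbor in $B^{(k)}$. In round $k$, partition $B^{(k-1)}$ into $|B|^{1/r}$ equal blocks and ask $\BIS(\{a\}, B^{(k-1)}_j)$ for $j=1,\dots,|B|^{1/r}-1$, picking the next block exactly as in Phase 1. After $r$ rounds, $|B^{(r)}|=1$, and its element $b$ satisfies $(a,b)\in E$.

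\textbf{Accounting.} The procedure uses $r+r = 2r$ rounds, with at most $|A|^{1/r}-1$ queries per round in Phase 1 and $|B|^{1/r}-1$ queries per round in Phase 2, yielding $O(|A|^{1/r}+|B|^{1/r})$ \BIS queries per round overall. I do not anticipate any real obstacle: the only thing to verify carefully is that the ``either some block returns $1$ or the last block inherits the invariant'' step works exactly as in $\DetBinSearch$, which follows because $\BIS(A^{(k-1)}_i,B)=1$ is precisely the \OR-query on the indicator vector of the edges between $A^{(k-1)}$ and $B$ (and similarly in Phase 2 on the neighbourhood vector of $a$).
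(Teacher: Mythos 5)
Your proof is correct and is essentially identical to the paper's: both reduce to two sequential applications of $\DetBinSearch$, first locating an endpoint in one side by querying against the full other side, then locating the matching endpoint by querying against that fixed vertex. The only cosmetic difference is that you search $A$ first and the paper searches $B$ first; by symmetry this is immaterial.
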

\begin{proof}
	Consider the $|B|$ dimensional vector $x$ where $x_b$ indicates the number of edges from a vertex $b\in B$ to vertices in $A$.
	We can simulate an \OR-query in this vector using a \BIS-query in the graph --- for any subset $S\subseteq B$, $\OR(S)$ on $x$ has the same answer as $\BIS(A,S)$. Therefore, using~\Cref{lem:bin-search}, in $r$-rounds and $|B|^{1/r}$-\BIS-queries, we can find a coordinate $b^*\in B$ with $x_{b^*} > 0$. That is, there is an edge between $b^*$ and some vertex in $A$.
	
	We can find one such vertex $a\in A$ to which $b^*$ has an edge, again as above. We define the $|A|$-dimensional vector $y$ where $y_a$ indicates the number of edges from $b^*$ to $a$. Once again, the \OR-query on $y$ can be simulated using a \BIS query on the graph --- for any subset $S\subseteq A$, $\OR(S)$ on $y$ is the same as $\BIS(S,\{b^*\})$. 
\end{proof}

\subsection{The Connectivity Algorithm}
Now we give the $O(r)$-round deterministic algorithm to find a spanning forest.
First, we need the following simple claim.

\begin{claim}
	\label{claim:conn_components}
	Let $G(V,E)$ be an arbitrary connected multigraph graph on $n$ vertices, and let $D$ be an arbitrary integer in $\{0, 1, \ldots, (n-1)\}$. 
	Let $V_L$ denote all vertices in $V$ whose degree is at most $D$, and let $V_H = V \setminus V_L$.
	Let $E' \subseteq E$ be an arbitrary set of edges that satisfies the following property: for each vertex $u \in V_L$, the set $E'$ contains all edges incident on $u$, and for every each vertex $v \in V_H$, the set $E'$ contains $D$ arbitrary edges incident on $v$. Then the graph $G' = (V,E’)$ contains at most $\lfloor n/D \rfloor$  connected components. % that are not also connected components of the graph $G$. 
\end{claim}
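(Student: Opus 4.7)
The plan is to prove the stronger statement that every connected component of $G'$ contains at least $D$ vertices; the bound of $\lfloor n/D\rfloor$ on the number of components then follows immediately by averaging (if each of $k$ components has $\geq D$ vertices, $n\geq kD$, so $k\leq \lfloor n/D\rfloor$).

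To establish the component-size lower bound, I would split into two cases based on whether the component meets $V_H$. First, suppose a component $C$ of $G'$ satisfies $C\cap V_H\neq\emptyset$ and pick any $v\in C\cap V_H$. By hypothesis $E'$ contains $D$ edges incident on $v$, and in the algorithmic setting these $D$ edges correspond to $D$ distinct neighbors of $v$ (the support-recovery subroutines, such as $\DetFindMany$, return distinct coordinates, i.e.\ distinct pairs $(v,\cdot)$). All these neighbors must lie in $C$ since $C$ is the connected component of $v$, so $|C|\geq D+1 \geq D$. Second, suppose $C\subseteq V_L$. Since $E'$ retains \emph{every} $G$-edge incident on any $V_L$ vertex, every $G$-edge with an endpoint in $C$ already lies in $E'$, so both of its endpoints must lie in $C$; i.e.\ $C$ is closed under adjacency in $G$ and hence is itself a union of connected components of $G$. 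But $G$ is connected, so $C=V$ and $|C|=n\geq D$ (using $D\leq n-1$).

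Combining the two cases, every connected component of $G'$ has size at least $D$, which gives the claimed bound.

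The only real subtlety — and what I expect to be the main obstacle to writing the claim cleanly — is the multigraph aspect: a priori one could choose $D$ parallel copies of a single edge when picking $E'$ for a $V_H$ vertex $v$, which would leave $v$ with just one neighbor in $G'$ and collapse the first case. The natural resolution, consistent with how the claim is invoked downstream (where edges are discovered as distinct vertex pairs via \BIS queries and support-recovery), is to read ``$D$ edges'' as edges to $D$ \emph{distinct} neighbors; under this reading the argument above goes through with no further work.
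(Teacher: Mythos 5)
Your proposal is correct and, up to phrasing the argument directly rather than by contradiction, it is essentially the same as the paper's. The paper assumes too many components exist, extracts a component of size at most $D$, and then derives a contradiction via exactly your two cases: a component meeting $V_H$ has a vertex with $\geq D$ incident $E'$-edges that cannot all stay inside a small component, and a component contained in $V_L$ inherits all its $G$-edges into $E'$ and hence would be a proper component of the connected graph $G$, which is impossible.

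One remark on the multigraph subtlety you raise at the end: you are right that it is real, and you may be interested to know that the paper's own proof makes the same implicit ``distinct neighbors'' leap (it concludes that since $v$ has degree $\geq D$ in $G'$ and at most $D-1$ other vertices share its component, some \emph{neighbor} of $v$ lies outside the component — a step that silently treats $D$ edges as $D$ distinct neighbors). As you observe, this is harmless where the claim is actually used, because it is applied to the graph $G_i = (\Pi_i, \calE_i)$ of \emph{pseudo-edges} between components, which is simple by construction, and the edges are discovered as distinct support coordinates by $\DetFindMany$. So the resolution you propose — read ``$D$ edges'' as edges to $D$ distinct neighbors — is precisely the reading under which the paper's proof (and yours) is valid, and it is the only reading that the application requires.
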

\begin{proof}
Suppose $G'$ has $K \geq \frac{n}{D}$ connected components. Thus, there must exist some component $C$ with $\leq D$ vertices.
Firstly, that $C$ must have some vertex $v\in V_H$. If not, then since vertices in $V_L$ have all their edges in $G$ also in $G'$, this component would be disconnected in $G$ which contradicts $G$'s connectedness.
Secondly, observe that this leads to a contradiction: $v$ has degree at least $D$ in $G'$, and since there are at most $D-1$ other vertices in $C$, one of $v$'s neighbor in $G'$ must lie outside $C$.
This contradicts that $C$ is a connected component.
%Furthermore, for the same reason, one such vertex $v \in V_H\cap C$ must have an edge $(v,u)\in E \cap \partial C$ which is not in $E'$.
%		
%	Suppose $G$ contains $K$ connected components, among which $K_L$ components contain only vertices in $V_L$, and $K_H$ components contain at least one vertex in $V_H$. 
%	We prove this by considering the edges in $E'$ in two stages. In the first stage, we consider all edges in $E'$ incident on the vertices in $V_L$ and consider the set of connected components formed by these edges alone. At the end of this stage, each vertex $u \in V_L$ satisfies one of the following: either $u$ is in a connected component that is also a connected component in $G$, or $u$ is in a connected component that contains at least one vertex in $V_H$. 
%	To prove the claim, it thus suffices to bound the number of distinct connected components containing vertices in $V_H$. We now consider all remaining edges in $E'$. Since $E'$ contains at least $D$ edges incident on each vertex in $V_H$, for any vertex in $V_H$, its connected component must have at least $(D+1)$ vertices. The claim now follows.
\end{proof}

\noindent
We are now ready to describe the algorithm $\DetGraphConn(G)$. For simplicity, assume $G$ is connected and our goal is to find a spanning {\em tree}. Subsequently, we explain how to modify the algorithm to find a spanning forest of a general graph.
The algorithm proceeds in $O(\log r)$ phases starting with phase $0$. 
The input to phase $i$ is a partition 
$\Pi_i = (S_1, \ldots, S_p)$ of the vertices. Each $S_j$ in $\Pi_i$ is guaranteed to be a connected in the graph $G$.
$\Pi_0$ is the trivial partition of $n$ singletons. Given $\Pi_i$, we define the graph $G_i = (\Pi_i,\calE_i)$ where $\calE_i$ is the collection of {\em pseudo-edges}
between components: we have a pseudo-edge $(S_a, S_b)\in \calE_i$ if and only if there exists some edge in $G$ between a vertex $u\in S_a$ and a vertex $v\in S_b$.
Thus, $G_0$ is indeed the original graph. Note that by our assumption that $G$ is connected, all the $G_i$'s are connected.
We will be collecting pseudo-edges which will imply the connected components; we initialize this set $\calF$ to empty set.
We will maintain the following invariant for a phase: $|\Pi_i| \leq n^{1 - \frac{4^{i} - 1}{r}}$; this is certainly true for $i=0$. Next, we describe a phase $i$.
%\begin{tbox}
\begin{enumerate}
	\item For each $S\in \Pi_i$, we construct a vector $x$ indexed by all sets in $\Pi_i\setminus S$ where $x_T$ indicates whether there is a pseudo-edge $(S,T)$ in $G_i$. Next, we run the algorithm $\DetFindMany_{r,c}(x)$ asserted in~\Cref{lem:det-r-round-supp-recovery} to 
	either find all pseudo-edges incident on $S$, or at least $n^{4^i/4r}$ of them. To do so, we set $c$ such that $N^{c/4r} = n^{4^i/4r}$,
	where $N$ is the dimension of $x$. That is, $N = |\Pi_i| - 1$.Indeed, we should set $c = \theta\cdot 4^i$ where $N^\theta = n$. Note, $\theta \geq 1$.
	Also note that the $\OR$-queries on $x$ can be simulated using $\BIS$-queries on the original graph $G$. This is because we are looking at edges between $S$ and a union of a subset of parts in $\Pi_i\setminus S$.

	The number of rounds is $\ceil{\frac{2r}{c}} \leq \ceil{\frac{2r}{4^i}}$. The number of $\BIS$-queries per round is $O(N^{c/r}\log N) = O(n^{4^i/r}\log n)$ per subset $S\in \Pi_i$. And thus, the total number of queries made is $N\cdot O(n^{4^i/r} \log n) \leq |\Pi_i|\cdot O(n^{4^i/r} \log n) \leq n^{1 - \frac{4^i - 1}{r}} \cdot O(n^{4^i/r} \log n) = O(n^{1 + 1/r}\cdot \log n)$.
	
	\item Let $\calE'_i \subseteq \calE_i$ be the pseudo-edges obtained from the previous step. Let $\calF'_i$ be an arbitrary spanning forest of $\calE'_i$.
	We add all these edges to the collection $\calF$.
	Note, $\calF'_i$ is a collection of $\leq |\Pi_i|$ pseudo-edges. 
	
	\item 	Applying~\Cref{claim:conn_components} to the graph $G_i$, adding the pseudo edges in $\calE'_i$ reduces the number of connected components to 
	at most $|\Pi_i|/ n^{\frac{4^i}{4r}}$. We now repeat the above two steps $11$ more times {\em sequentially}, and each time the number of connected components multiplicatively drops  by $n^{\frac{4^i}{4r}}$. Thus, after the $12$ sub-phases we end up with the partition $\Pi_{i+1}$ of connected components, with $|\Pi_{i+1}|\leq |\Pi_i| / n^{\frac{12\cdot 4^i}{4r}} \le  n^{1 - \frac{4^{i} - 1}{r}}\cdot n^{-\frac{12\cdot 4^i}{4r}} =  n^{1-\frac{4^{i+1} - 1}{r}}$, as desired. 
	The second inequality follows from the invariant before phase $(i+1)$ started. 
%	This can be fed to the start of phase $(i+1)$
%	At this point, we let $F'_i$ to denote the union of all the spanning forests of the $E'$'s observed in the phase.
\end{enumerate}

To summarize, Phase $i$ performs $O(\frac{r}{4^i})$-rounds and makes $O(n^{1+1/r}\log n)$-\BIS queries per round.
We run phase $0$ to $L = O(\log r)$, till we get $|\Pi_L|\leq \sqrt{n}$. After than we run a clean up phase.

\begin{enumerate}
	\setcounter{enumi}{3}
	\item {\em Clean-up Phase.} Once $|\Pi_L| = O(\sqrt{n})$, for each pair $(S,T)$ in $\Pi_L \times \Pi_L$, we 
	make a single $\BIS$-query to detect if the pseudo-edge $(S,T)\in \calE_L$. The total number of queries is $O(n)$.
	We add an arbitrary spanning tree of $\calE_L$ to the set $\calF$. At this point, $\calF$ lets us know the structure of connectivity via pseudo-edges. 
	The next step is to recover the actual graph edges.
	
	\item {\em Tree Building Phase.} Note that the total number of pseudo-edges in $\calF$ is $< n-1$. For each $(S,T)\in \calF$, we now desire to find an edge $(s,t)$ in the graph where $s\in S$ and $t\in T$.
	Note that once we do this, we have the spanning tree the graph.
	This can be done in $2r$ more rounds using the algorithm $\DetFindEdge_r(S,T)$ using $O\left(|S|^{1/r} + |T|^{1/r}\right)$-\BIS queries per round.
	Therefore, the total number of queries per round of this phase is $O(n)\cdot O(n^{1/r}) = O(n^{1+1/r})$.
\end{enumerate}
%\end{tbox}
\noindent
The number of rounds is $\sum_{i=1}^{O(\log r)} \frac{24r}{4^i} + 1 + 2r \le 35r$.

This ends the description of the algorithm when $G$ is connected. 
If $G$ had more than one connected component, then
one can recognize the connected components as the algorithm progresses. More precisely, if the algorithm is processing the partition $\Pi_i = (S_1, \ldots, S_p)$ and find that $S_i$ has no edges coming out of it, then 
by the invariant that $S_i$ is connected, the algorithm can discard this component and proceed on the remaining graph as if it were connected. The analysis becomes better as the effective number of vertices decrease but the number of available queries don't.
%one can do $p \leq n$ \OR-queries in parallel to check whether or not any $S_j$ has $0$ edges coming out. In that case, one can treat that as a smaller problem and 
This completes the proof of~\Cref{thm:det-graph-conn-formal}.

\section{Randomized Algorithms for Graph Connectivity}
In the first subsection, we give a $2$-round randomized algorithm which makes $\tilde{O}(n)$-\OR queries and returns a spanning forest of $G$.
Recall, \Cref{thm:rand-nonadaptive-or-lb-formal} proved a $\Omgt(n^2)$-lower bound for $1$-round randomized algorithms using $\OR$-queries. We also show how to implement the $2$-round algorithm in $4$-rounds using only $\BIS$-queries.
In the second subsection we give a non-adaptive ($1$-round), randomized algorithm which makes $\tilde{O}(n)$ \cross-queries and returns a spanning forest of $G$. Previously, this result was known to hold only with \linear-queries~\cite{AhnGM12}.

\subsection{\OR-queries}\label{sec:algo-or}
We prove the following theorem. %We remind the reader that $\OR$-queries can report the existence of an edge in an arbitrary collection of pairs of vertices, while $\BIS$ queries can do so only for pairs induced as a Cartesian product of two disjoint subsets.
\begin{theorem}\label{thm:rand-graph-conn-or-formal}
	There exists a $2$-round randomized algorithm $\RandGraphConnOR(G)$ which makes at most $O(n\log^5 n)$-\OR-queries 
	and returns a spanning forest of $G$ with high probability. 
	There exists a $4$-round randomized algorithm $\RandGraphConnBIS(G)$ which makes at most $O(n\log^5 n)$-\BIS-queries 
	and returns a spanning forest of $G$ with high probability.
\end{theorem}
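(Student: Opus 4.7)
The plan is to follow the high-level sketch in \Cref{sec:overview}. Both algorithms share a first round that samples $\Theta(\log n)$ random edges incident to every vertex, and then differ in how the resulting sparse contracted graph is recovered.

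\textbf{Round 1 (common to both algorithms).} For each $v\in V$, invoke $\Theta(\log n)$ independent copies of the $\ell_0$-sampler $\LSample$ (\Cref{lem:rand-single-element-recovery}) on the incidence vector $x_v\in\{0,1\}^{n-1}$ of $v$. Each such $\OR$-query on $x_v$ restricted to $S\subseteq V\setminus\{v\}$ is just the single-vertex-to-set query on the edge slots $\{v\}\times S$ and is therefore valid both as an $\OR$- and as a $\BIS$-query. All $n\cdot\Theta(\log n)$ invocations run in parallel, giving $\Ot(n)$ queries in round~1. Let $H$ be the union of sampled edges and $C_1,\ldots,C_L$ its connected components. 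The random-incidence sampling theorem of Holm, Klein, Tarjan, Zamir, and Zwick~\cite{HolmKTZZ19} guarantees that whp the set $E'$ of edges of $G$ with endpoints in different $C_i$'s has size $\Ot(n)$.

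\textbf{Round 2 of $\RandGraphConnOR$ (recover $E'$ by hashing).} Given the partition $\{C_i\}$ from round~1, pick $\Theta(\log n)$ independent hash functions $h_\ell:\binom{[L]}{2}\to[T]$ with $T=\Theta(|E'|\cdot\log^2 n)=\Ot(n)$. For each pair $(\ell,b)$, run $\LSample$ on the edge-slot set $\bigcup_{\{i,j\}\in h_\ell^{-1}(b)}(C_i\times C_j)$; this is a legal $\OR$-query since arbitrary subsets of edge slots are queryable. Because $|E'|\le T/\log^2 n$, for each fixed $h_\ell$ most buckets contain at most one cross-edge, so $\LSample$ returns it whp; taking the union over the $\Theta(\log n)$ hash functions, a routine union bound shows that every edge of $E'$ is isolated and hence discovered whp. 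The total cost is $T\cdot\Theta(\log n)\cdot O(\log^2 n)=\Ot(n)$ $\OR$-queries. Outputting $H$ together with a spanning forest of the contracted multigraph $(\{C_i\},E')$ yields a spanning forest of $G$.

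\textbf{Rounds 2--4 of $\RandGraphConnBIS$.} The obstacle specific to $\BIS$ is that a bucket query above is an $\OR$ over a union of bicliques $\bigcup(C_i\times C_j)$, which is in general \emph{not} expressible as a single $\BIS(A,B)$ because two pairs in the same bucket can share a super-vertex. I would simulate each such compound query by two additional rounds of $\BIS$-queries: one round to apply a random coloring/sparsification that breaks each bucket into $O(\log n)$ bipartite sub-buckets, each of which is a legitimate $\BIS$-query, and a final round running $\LSample$-style binary search on the sub-buckets to produce an actual graph edge for every discovered super-vertex pair. This gives a $4$-round algorithm using $\Ot(n)$ $\BIS$-queries per round.

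The main obstacle in both parts will be the round-2 recovery step: arguing that $T=\Ot(n)$ buckets together with $\Theta(\log n)$ independent hash functions isolate \emph{every} cross-edge whp, and bounding $L$ well enough that the universe $\binom{[L]}{2}$ fits inside the hashing analysis. The additional difficulty in the $\BIS$ variant is realizing each compound $\OR$ bucket query by genuine $\BIS$-queries in only two extra rounds without blowing up the per-round query budget beyond $\Ot(n)$.
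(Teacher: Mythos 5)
Your Round~1 is essentially the paper's: sample $\Theta(\log n)$--$\Theta(\log^2 n)$ random incident edges per vertex via $\ell_0$-samplers (valid as both $\OR$ and $\BIS$ queries) and invoke Holm et al.\ to conclude that the contracted graph has $\Ot(n)$ cross-edges whp. Your Round~2 for the $\OR$ case is also the same idea as the paper's, just re-derived inline: the paper applies $\RandSuppRec$ (which internally is exactly the hash-and-isolate construction you describe) to the vector of cross-edge slots, whereas you hash component pairs rather than edge slots. Either works, though note a small mismatch: isolating a component pair $(i,j)$ does not isolate each edge of $E'$ (there may be several parallel cross-edges between $C_i$ and $C_j$), so you recover one edge per pseudo-edge, not all of $E'$; this is still enough for a spanning forest, but the sentence ``every edge of $E'$ is isolated and hence discovered'' over-claims.

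The genuine gap is in your Rounds~2--4 for the $\BIS$ case. A bucket query is an $\OR$ over $\bigcup_{(i,j)\in h_\ell^{-1}(b)} C_i\times C_j$, a union of bicliques over a \emph{selected} set of pairs. Random 2-coloring of components does not turn this into $\BIS$ queries: a query $\BIS(A,B)$ with $A$ a union of red components and $B$ a union of blue components in the bucket reports an edge between \emph{any} red component and \emph{any} blue component appearing in the bucket, not just the pairs $(i,j)$ that hash to $b$. So the simulation picks up spurious pairs and the isolation guarantee breaks; a union of bicliques over an arbitrary pair set simply is not expressible as $\Ot(1)$ $\BIS$ queries. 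The paper sidesteps this by never forming compound multi-pair queries in the $\BIS$ variant. Instead it works one super-vertex at a time: Round~2 estimates $d_i \approx |E(S_i,S_i^c)|$ for each $S_i$ via $\DegEst$; Round~3 runs $\FindNbrs(S_i)$ on each $S_i$ separately, using $O(d_i\log^3 n)$ queries of the form $\BIS(S_i,T)$ (which \emph{do} simulate arbitrary $\OR$ queries on the row of edge slots out of $S_i$, since one side is fixed), and the degree estimates ensure $\sum_i d_i = \Ot(n)$; Round~4 then uses $\RandEdges$ per pseudo-edge to pull out actual graph edges. If you want to repair your $\BIS$ proof you should adopt this per-component structure rather than try to simulate union-of-biclique $\OR$ queries directly.
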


Before we dive into the algorithm, we first establish some simple subroutines which are implied by the algorithms for single element recovery described in~\Cref{sec:warmup}.

\subsubsection{Simple Subroutines}
We begin with a subroutine for a $1$-round randomized algorithm for support recovery.
\begin{lemma}\label{lem:rand-mult-element-recovery}
	Let $x\in \RR^N_{\geq 0}$ be a non-zero vector and let $M$ be an upper bound on $\supp(x)$. 
	There exists a $1$-round (non-adaptive) randomized  algorithm $\RandSuppRec(x)$ which makes  $O(M \log^3 N)$-\OR queries and recovers $\supp(x)$ whp.
\end{lemma}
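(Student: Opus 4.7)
The plan is to combine the deterministic $d=1$ case of $\BndSuppRec$ (which, by~\Cref{lem:det-support-checker}, non-adaptively recovers the unique support element of any vector with support at most $1$ using $O(\log N)$ \OR-queries, and otherwise asserts ``support too large'') with a geometric sequence of random sub-samplings, mimicking the $\ell_0$-sampler skeleton but aimed at recovering every element of $\supp(x)$ rather than just one.

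Concretely, for each scale $i \in \{0,1,\dots,\lceil \log_2 N\rceil\}$ and each trial $t \in [T]$ with $T = \Theta(M \log N)$, I would independently form a random set $R_{i,t} \subseteq [N]$ by including each coordinate with probability $2^{-i}$, and then run the $1$-round algorithm $\BndSuppRec(x|_{R_{i,t}}, 1)$ on it. Every \OR-query on $x|_{R_{i,t}}$ restricted to some $S \subseteq R_{i,t}$ is literally an \OR-query on $x$ restricted to $S$, so these queries are all non-adaptive queries to $x$. The output of $\RandSuppRec$ is the union of all coordinates ever reported by these $\BndSuppRec$ instances. The query total is $(\log_2 N + 1) \cdot T \cdot O(\log N) = O(M \log^3 N)$, as required.

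For correctness, let $s = |\supp(x)| \leq M$ and fix $j \in \supp(x)$. Consider the ``right'' scale $i^\star = \lceil \log_2 s \rceil$, so that the sampling probability $p^\star = 2^{-i^\star}$ lies in $[1/(2s), 1/s]$. In a single trial at this scale, the event that $R_{i^\star,t} \cap \supp(x) = \{j\}$ has probability at least $p^\star (1-p^\star)^{s-1} \geq 1/(2es)$. Conditioned on this event, $\BndSuppRec(x|_{R_{i^\star,t}}, 1)$ deterministically returns $j$. Thus the probability $j$ is never recovered across the $T$ independent trials at scale $i^\star$ is at most $(1 - 1/(2es))^T \leq \exp(-\Omega(M \log N / s)) \leq N^{-c}$ for an arbitrarily large constant $c$ by choosing the hidden constant in $T$ appropriately. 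A union bound over the at most $N$ possible support elements yields the high-probability guarantee.

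The only subtlety I foresee is the behavior of $\BndSuppRec(x|_{R_{i,t}}, 1)$ when the random restriction has support strictly greater than $1$: by the guarantee of~\Cref{lem:det-support-checker}, it then simply reports ``support $>1$'' and contributes no coordinate to the output, so no spurious (non-support) element is ever produced. Thus soundness is automatic, and the whole argument reduces to the standard claim that a geometric schedule of Bernoulli sub-samplings hits the right density for each support element with constant probability. No additional ideas beyond~\Cref{lem:det-support-checker} and a Chernoff/union-bound analysis are needed.
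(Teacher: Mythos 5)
Your proof is correct and takes essentially the same route as the paper's: the paper runs $O(M\log N)$ independent copies of the $\ell_0$-sampler $\LSample$ from~\Cref{lem:rand-single-element-recovery} (itself built from geometric subsampling plus $\BndSuppRec(\cdot,1)$) and concludes by a coupon-collector argument, whereas you inline $\LSample$ and replace the coupon-collector step with a direct per-element isolation bound plus a union bound. The two conclusions are interchangeable; yours is marginally more elementary since it does not need the uniformity guarantee of the $\ell_0$-sampler output.
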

\begin{proof}
	This follows by running $O(M\log N)$ copies of the algorithm $\LSample(x)$ from~\Cref{lem:rand-single-element-recovery} with $\delta = 1/2$.
	Whp, we would get $\Theta(M\log N)$ random samples from $\supp(x)$, and by a coupon collector argument, whp these will contain the full support.
\end{proof}
The above lemma, for instance, implies that if a graph has $\leq m$ edges, then there is a $1$-round randomized algorithm to reconstruct the graph using $O(m\log^3 n)$ \OR-queries. Note, these are not necessarily $\BIS$-queries, and we do need little extra work to handle $\BIS$-qeuries. In particular, we need the following lemmas.

%\begin{lemma}[An immediate corollary of~\Cref{lem:rand-single-element-recovery}]\label{lem:rand-graph-v-to-S}
%	Let $v$ be a vertex and let $S\subseteq V\setminus v$ be an arbitrary subset in which $v$ has at least one neighbor.
%	There exists a $1$-round randomized algorithm $\RandNbr(v,S)$ which makes $O(\log^3 n)$ \BIS-queries and returns 
%	an edge from $v$ to $S$ whp, if any.
%\end{lemma}

\begin{lemma}[A corollary of~\Cref{lem:rand-single-element-recovery}]\label{lem:rand-graph-sample}
	Let $v$ be a vertex, $S$ be a subset of vertices not containing $v$, and $s$ be a positive integer. There exists a $1$-round (non-adaptive) randomized  algorithm $\RandEdges(v,S,s)$ which makes
	$O(s\log^3 n)$ \BIS-queries and, whp, returns $s$ random edges from $v$ to $S$, uniformly at random with repetition.
	If there are no edges from $v$ to $S$, the algorithm says so. We use $\RandEdges(v,s)$ to denote $\RandEdges(v,V\setminus v, s)$.
\end{lemma}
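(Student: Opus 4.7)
The plan is to reduce the task to $s$ parallel invocations of the single-element sampler $\LSample$ of \Cref{lem:rand-single-element-recovery}, run on a natural neighborhood indicator vector.

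Concretely, I would associate to the pair $(v,S)$ the $|S|$-dimensional vector $x \in \{0,1\}^{S}$ defined by $x_u = 1$ iff $(v,u)\in E$ and $x_u=0$ otherwise, so that $\supp(x)$ is exactly the set of neighbors of $v$ inside $S$. The key observation is that any $\OR$-query on a subset $T\subseteq S$, i.e.\ a test of whether $\supp(x)\cap T$ is empty, is identical in answer to the $\BIS$-query $\BIS(\{v\},T)$. Hence every call made by $\LSample(x)$ can be simulated by one $\BIS$-query on $G$ without any loss, and the round structure of $\LSample$ (which is non-adaptive by \Cref{lem:rand-single-element-recovery}) is preserved.

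Next, I would run $s$ independent copies of $\LSample(x)$ in parallel with error parameter $\delta = 1/n^{c}$ for a sufficiently large constant $c$; by \Cref{lem:rand-single-element-recovery}, each copy uses $O(\log^2 n \cdot \log(1/\delta)) = O(\log^3 n)$ $\OR$-queries on $x$, hence $O(\log^3 n)$ $\BIS$-queries on $G$, for a total of $O(s\log^3 n)$ $\BIS$-queries. Each successful copy returns a uniformly random element of $\supp(x)$, i.e.\ a uniformly random neighbor of $v$ in $S$, which we output as the edge $(v,u)$. A union bound over the $s\leq n^2$ copies gives overall failure probability at most $s/n^c$, which is negligible for $c$ large enough. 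To handle the degenerate case where $x$ is the zero vector, I would additionally include the single query $\BIS(\{v\},S)$ in the same round; if it returns $0$ the algorithm reports that no edges exist, otherwise we know $x\ne 0$ and the $\LSample$ calls are valid.

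There is no real obstacle here beyond bookkeeping: the two subtle points are (i) verifying that $\LSample$ is genuinely non-adaptive so that all $s$ copies plus the emptiness check can be issued in a single round, and (ii) arguing independence of the $s$ samples, which follows because the copies use disjoint random bits and the outputs are measurable functions of those bits together with the fixed query answers.
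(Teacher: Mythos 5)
Your proof is correct and matches the approach the paper intends: the lemma is stated as a corollary of \Cref{lem:rand-single-element-recovery} without an explicit proof, and your reduction — encoding $v$'s neighborhood in $S$ as a $\{0,1\}$ vector $x$, observing that $\OR$-queries on $x$ coincide with $\BIS(\{v\},\cdot)$ queries, running $s$ independent copies of $\LSample$ with $\delta = 1/\poly(n)$, and adding one $\BIS(\{v\},S)$ query to handle the empty-neighborhood case — is exactly the intended argument. Note that since \BIS-queries cannot detect edge multiplicities, "uniform edge from $v$ to $S$" necessarily means uniform over the pairs in $E(\{v\},S)$ in the paper's notation, which is what your sampler produces.
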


\begin{lemma}[A corollary of~\Cref{lem:rand-supp-estimate}]\label{lem:rand-bis-degree-estimate}
	Let $A$ be a subset of vertices in a graph. There is a $1$-round randomized algorithm $\DegEst(A)$ which makes  $O(\log^2 n)$-\BIS-queries and returns an estimate $d$ of  $|E(A,A^c)|$ which satisfies $\frac{|E(A,A^c)|}{3} \leq d \leq 3|E(A,A^c)|$ whp.
\end{lemma}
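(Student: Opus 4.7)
I would reduce to the support-estimation procedure $\SuppEst$ stated above, applied to the indicator vector $x \in \{0,1\}^N$ with $N = |A| \cdot |A^c|$ defined by $x_{(u,v)} = 1$ iff $(u,v) \in E(A,A^c)$. Since $|\supp(x)| = |E(A,A^c)|$, invoking $\SuppEst(x)$ with failure probability $\delta = 1/\poly(n)$ returns a $3$-approximation to $|E(A,A^c)|$ using $O(\log N \cdot \log(1/\delta)) = O(\log^2 n)$ OR-queries on $x$. The task therefore reduces to simulating each such OR-query by a single BIS-query on the graph.

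Each OR-query made inside $\SuppEst$ is of the form ``does $x$ have a positive coordinate inside a random subset $T \subseteq [N]$?'', where $T$ is produced by including each coordinate independently at some rate $p$ from a geometrically-spaced family. To mimic this with a BIS-query, I would instead sample $A' \subseteq A$ and $B' \subseteq A^c$ independently by including every vertex with probability $\sqrt{p}$, and then issue a single query $\BIS(A', B')$. Each pair $(u,v)$ then appears in the product set $A' \times B'$ with the correct marginal probability $p$, and $\BIS(A', B') = 1$ exactly when the OR of $x$ restricted to $A' \times B'$ equals $1$. Simulating every OR-query this way keeps the total cost at $O(\log^2 n)$ BIS-queries, which matches the claim.

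\textbf{Main obstacle.} Product sampling does not preserve coordinate-wise independence: two distinct pairs sharing an endpoint become positively correlated, so the analysis of $\SuppEst$ cannot be invoked verbatim. What the proof actually needs, though, is only that the probability $q(p) := \Pr[\BIS(A',B')=1]$ undergoes a sharp transition around the scale $p \approx 1/m$, where $m := |E(A,A^c)|$. The upper side $q(p) \leq p \cdot m$ still follows by a union bound over cross-edges. For the lower side I would run the second-moment method on $X := |E(A', B')|$, splitting ordered pairs of cross-edges into those that are vertex-disjoint (each contributing $p^2$ to $E[X^2]$), share one endpoint (each contributing $p^{3/2}$), or coincide (contributing $p$); Paley--Zygmund applied with this decomposition, combined with $O(\log n)$-fold amplification at each of the $O(\log n)$ sampling scales and a union bound over scales, is intended to locate the transition whp and output the desired $3$-approximation. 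Verifying a clean threshold despite the shared-endpoint correlations, and controlling the middle term $p^{3/2} \sum_v d_v^2$ in regimes where one side of $A$ or $A^c$ contains a very high-degree vertex, is the technical step requiring the most care.
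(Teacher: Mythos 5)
Your reduction has a genuine flaw: the product-sampling simulation of the OR-queries shifts the detection threshold in a way that depends on the degree profile, so it does \emph{not} estimate $|E(A,A^c)|$ to a constant factor. Make this concrete with a star: take $A=\{u\}$ and let $u$ have $m$ neighbours in $A^c$, so $|E(A,A^c)|=m$. With your sampler, $\BIS(A',B')=1$ iff $u\in A'$ and $B'$ hits a neighbour, i.e.
\[
q(p)=\sqrt{p}\,\bigl(1-(1-\sqrt{p})^m\bigr)\approx \sqrt{p}\quad\text{for }p\gtrsim 1/m^2.
\]
This crosses $1/2$ at $p=\Theta(1)$ \emph{independently of $m$}, so a threshold-based estimator reads off $\widehat m=\Theta(1)$ no matter how large $m$ is. The second-moment computation you sketch tells the same story: the shared-endpoint term $p^{3/2}\sum_v d_v^2$ dominates, and Paley--Zygmund only gives $\Pr[X>0]\gtrsim\sqrt{p}$, which is far from constant at $p\approx 1/m$. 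So this is not a delicate technicality to be controlled with extra logs; it is a fundamental failure of product sampling when one side concentrates degree. (This is essentially why edge-counting via $\BIS$ queries, as in Beame et al., is substantially more involved than plain support estimation.)

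The paper's intended route is much more direct and sidesteps the whole difficulty by choosing a different vector. Take $x\in\{0,1\}^{|A^c|}$ indexed by \emph{vertices} $b\in A^c$, with $x_b=1$ iff $b$ has at least one edge into $A$. Then for any $S\subseteq A^c$, the query $\OR(x,S)$ \emph{is} exactly the query $\BIS(A,S)$ — no product sampling, no correlations — so $\SuppEst(x)$ can be run verbatim with $O(\log N\cdot\log(1/\delta))=O(\log^2 n)$ $\BIS$-queries, yielding a $3$-approximation to $|\supp(x)|=|\{b\in A^c:E(\{b\},A)\neq\emptyset\}|$. Note that this is the number of \emph{distinct neighbours} of $A$ in $A^c$, which can be far smaller than $|E(A,A^c)|$; so taken literally the lemma's claim of approximating $|E(A,A^c)|$ is over-stated. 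But this discrepancy is harmless for the one place $\DegEst$ is used: step 2b feeds $d_i$ into $\FindNbrs(S_i)$, whose $\BndSuppRec$ subroutine needs an upper bound precisely on $\supp(x_i)$, i.e. the number of neighbouring vertices, and the cost bound $\sum_i d_i=O(|E_\cross(\Pi)|)$ only requires $d_i=O(|E(S_i,S_i^c)|)$, which is still implied since $\supp(x_i)\le|E(S_i,S_i^c)|$. So if you are reproving this lemma, index the vector by one vertex side rather than by pairs.
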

\begin{lemma}[A corollary of~\Cref{lem:rand-mult-element-recovery}]\label{lem:rand-graph-cut-vertex-nbrs}
	Let $A$ be a subset of vertices and let $d$ be an upper bound on the number of edges in $E(A,A^c)$.
	There exists a $1$-round randomized algorithm $\FindNbrs(A)$ which makes $O(d \log^3 n)$ \BIS-queries and finds the subset $B\subseteq A^c$ of vertices 
	which has at least one edge to some vertex in $A$, whp. That is, it finds the vertex neighbors of $A$.
\end{lemma}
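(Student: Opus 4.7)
The plan is to reduce the problem to a single-element style support recovery on a carefully chosen non-negative vector and then to observe that the required \OR-queries are exactly \BIS-queries on the graph.

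First I would fix an enumeration of the vertices in $A^c$ and define the indicator vector $x \in \{0,1\}^{|A^c|}$ by setting $x_v = 1$ if vertex $v \in A^c$ has at least one edge to some vertex in $A$, and $x_v = 0$ otherwise. By definition, $\supp(x)$ is precisely the set $B$ we are asked to output. Crucially, every $v \in \supp(x)$ contributes at least one edge to $E(A,A^c)$, so $|\supp(x)| \leq |E(A,A^c)| \leq d$; this is the bound $M$ needed in \Cref{lem:rand-mult-element-recovery}.

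Next I would observe that any \OR-query on $x$ can be executed as a single \BIS-query on $G$: for any subset $S \subseteq A^c$, the answer to the \OR-query ``is $\supp(x) \cap S$ nonempty?'' equals $1$ if and only if there exists $v \in S$ with an edge to $A$, which is exactly the answer to $\BIS(A,S)$. Thus each \OR-query needed by the recovery algorithm is implemented by a single \BIS-query, preserving the non-adaptive ($1$-round) structure and the query count.

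Applying \Cref{lem:rand-mult-element-recovery} to $x$ with $N = |A^c| \leq n$ and $M = d$ then yields a $1$-round randomized procedure making $O(d \log^3 N) = O(d \log^3 n)$ \OR-queries on $x$, or equivalently $O(d \log^3 n)$ \BIS-queries on $G$, which returns $\supp(x) = B$ with high probability. This gives the claimed $\FindNbrs(A)$ algorithm. There is no real obstacle here; the only thing to be careful about is checking that the support bound $|\supp(x)| \leq d$ (rather than, say, $|E(A,A^c)| \leq d$ appearing as the support, which would be the wrong quantity) and that the \OR-to-\BIS simulation is exact, both of which are immediate.
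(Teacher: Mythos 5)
Your proof is correct and matches the paper's own argument essentially verbatim: the paper likewise defines an $|A^c|$-dimensional vector (there, $x_b$ counts edges from $b$ to $A$, but since only the support matters this is the same as your $\{0,1\}$ version), notes $|\supp(x)|\leq d$, observes that \OR-queries on $x$ are exactly $\BIS(A,S)$ queries, and invokes \Cref{lem:rand-mult-element-recovery} with $M=d$.
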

\begin{proof}
	For the first part, let $x$ be the $|A^c|$ dimensional vector with $x_{b}$ indicating the number of edges between $b\in A^c$ and $A$.
	Note that an $\OR(x,S)$ query on this vector for $S\subseteq A\times A^c$ can be simulated using an $\BIS$-query on the graph as well. Note $\supp(x) \leq d$ (it could be much smaller).
	Using the algorithm asserted in~\Cref{lem:rand-mult-element-recovery}, we can find $\supp(x)$, that is, all $b\in A^c$ which have some neighbor in $A$.
	This is precisely what this lemma asserts.
	
	%	We do not know how to simulate the above using $1$-round of \BIS-queries. However, 
	%	Wwe can simulate using $2$-rounds using similar ideas as in the previous subsection.
	%	We first consider the $|A^c|$-dimensional vector $x$ where $x_b$ indicates the number of edges from $b\in A^c$ to $A$. An $\OR$-query in $x$ can be simulated using $\BIS$-queries.
	%	Note $\supp(x) \leq d$.
	%	Therefore, using the algorithm in Lemma~\ref{lem:rand-mult-element-recovery}, in one round we can recover {\em all} $b^*\in A^c$ with $x_{b^*} > 0$. Let $B := \{b^*\in A^c: x_{b^*} > 0\}$ whp.
	%	That is, every vertex $b^*\in B$ has an edge to some vertex in $A$.
	%	In the next round, 
	%	In the next round we construct the $|A|$-dimensional vector $y$ 
	%	where $y_a$ indicates the number of edges between $a$ and $b^*$. Again using the algorithm in Lemma~\ref{lem:rand-single-element-recovery}, we can recover a $a\in A$ with $y_{a} > 0$.
	%	That is, $(a,b^*)$ is an edge.
\end{proof}

\subsubsection{The Connectivity Algorithm}
Below we give an algorithm which runs $2$-rounds with $\OR$-queries, and in $4$-rounds with $\BIS$-queries.
%\begin{tbox}
\begin{enumerate}
	\item In Round 1, every vertex $v\in V$ whp samples $s = O(\log^2 n)$-edges incident to it using the 
	algorithm $\RandEdges(v,s)$ asserted in~\Cref{lem:rand-graph-sample}.
	This requires  $O(n\log^5 n)$-\BIS queries.
	
	Let $\Pi:= (S_1, \ldots, S_p)$ be the connected components formed by these edges. 
	Due to~\Cref{lem:edge-sampling-sparse}, whp we have that $E_\cross(\Pi) = O(n\log n)$. 
	%	We find the connected components $C_1, \ldots, C_k$ formed by these edges. Whp, due to Lemma~\ref{lem:edge-sampling-sparse}, we know the number of edges across these components is at most $\tilde{O}(n)$.
	%	
	\item[2a.] If we had $\OR$-queries available, then in Round 2 we next apply the algorithm $\RandSuppRec(x)$ asserted by~\Cref{lem:rand-mult-element-recovery} with $M = O(n\log n)$, on the vector $x$ which indexed by $(u,v)$ for vertex pairs across different components of $\Pi$. This requires $O(n\log^4 n)$ \OR-queries.
	We obtain the support of this vector $x$, that is the set $E_\cross(\Pi)$, whp.
	These edges, along with the edges sampled in Round 1, gives the spanning forest of the graph. 
	This completes the description of $\RandGraphConnOR(G)$.
	
	\item[2b.] If we did not have $\OR$-queries but only $\BIS$-queries, then in Round 2, for each $S_i$, $i\in [p]$, we estimate an upper bound $d_i$ on  
	$|E(S_i,S^c_i)|$ using the algorithm $\DegEst(S_i)$ asserted in~\Cref{lem:rand-bis-degree-estimate}. 
	This round requires $O(n\log^2 n)$ \BIS-queries since $p\leq n$.
	
	\setcounter{enumi}{2}
	\item In Round 3, for each $i\in [p]$, we use the algorithm $\FindNbrs(S_i)$ asserted in~\Cref{lem:rand-graph-cut-vertex-nbrs} to find 
	the vertices $V_i \subseteq S^c_i$ which have at least one edge to at least one node in $S_i$.
	The number of $\BIS$-queries needed is $O(\sum_i d_i \cdot \log^3 n) = O(n\log^4 n)$, since $\sum_{i=1}^p d_i = O(|E_\cross(\Pi)|) = O(n\log n)$.
	
	From the information obtained after Round 3, we can figure out {\em pseudo-edges} $\tilde{E} \subseteq [p]\times [p]$ where $(i,j)\in \tilde{E}$ if
	there is a vertex in $S_j$ which has an edge to $S_i$, or vice-versa. To use the terms defined above, $(i,j)\in \tilde{E}$ if $V_i\cap S_j$ or $V_j\cap S_i$ is non-empty.
	Note that if there is no edge $(i,j)\in \tilde{E}$, there is no edge from any vertex in $S_i$ to any vertex in $S_j$ in the original graph.
	
	Let $\tilde{F}$ be an arbitrary spanning forest in $\tilde{E}$. Note the connected components of $\tilde{F}$ are also the connected components of $G$.
	What remains is to find the edges of $G$ (we only have pseudo-edges now) to connect up the connected components.
	We use the next round to find the actual graph edges.
	
	\item In Round 4, for every pseudo-edge $(i,j)\in \tilde{F}$, we know either $V_i\cap S_j \neq \emptyset$ or $V_j\cap S_i \neq \emptyset$ and we know which is the case.
	Suppose $V_i\cap S_j\neq \emptyset$. Then, let $w_{ij}$ be a vertex in $V_i\cap S_j$. Let $W$ be all such vertices collected as we go over all the pseudo-edges $(i,j)\in \tilde{F}$.	Note $|W|\leq k-1 < n$.
	
	For each vertex $w_{ij} \in W$ in parallel, we use the algorithm $\RandEdges(w_{ij}, S_i, 1)$ asserted in~\Cref{lem:rand-graph-sample} with 	to get a graph edge from $w_{ij}$ to a vertex $u\in S_i$.
	Once we get all such edges, we would have obtained a spanning forest of $G$ whp. The number of queries is $O(|W|\log^3 n)$ which is $O(n\log^3 n)$.
	This completes the description of $\RandGraphConnBIS(G)$.
\end{enumerate}
%\end{tbox}
\noindent
Note that every step in the above description uses randomized subroutines which succeed whp. To prove~\Cref{thm:rand-graph-conn-or-formal} via union bound, we need to ensure that only a polynomially many events occur. Indeed this is the case; the total number of events is $O(n)$. \Cref{lem:edge-sampling-sparse} completes the proof of~\Cref{thm:rand-graph-conn-or-formal}.

%{\bf DeepC: This lemma has to be restated as the Holm et al theorem, and proof, if at all, moved to appendix. I would rather remove it altogether.}

\begin{lemma}\label{lem:rand-or-conn-comp}\label{lem:edge-sampling-sparse}
	Let $G$ be an undirected graph on $n$ vertices. Suppose every vertex $v\in G$ samples $O(\log^2 n)$ edges with repetition, and let
	$\Pi = (S_1, \ldots, S_p)$ be the resulting connected components. 
	Then, whp, $|E_\mathsf{cross}(\Pi)| = O(n\log n)$.
	%	After Round 1, if $C_1, \ldots, C_k$ are the connected components. Let $E_\cross$ be the pairs of vertices $(u,v)$ where $u$ and $v$ lie in different $C_i$'s and there is at least on edge between $u$ and $v$.
	%	Then whp $|E_\cross| = \tilde{O}(n)$.
	%Whp, for every $i$, $|E(C_i,C^c_i)|\leq 10|C_i|\log n$.
\end{lemma}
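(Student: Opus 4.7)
The plan is to derive this claim from the recent edge-sampling result of Holm, Kopelowitz, Thorup, Zamir, and Zwick~\cite{HolmKTZZ19}. Their analysis shows that when each vertex of a graph $G$ samples $k \geq c\log n$ random incident edges (with replacement) to form a subgraph $H$ whose component partition is $\Pi$, then with high probability the number of $G$-edges whose endpoints lie in different components of $\Pi$ is at most $O(n/k)\cdot \polylog(n)$. Applied with $k = c\log^2 n$, this is in particular $O(n\log n)$, which yields the claim. Indeed, the paper already signals this approach in the overview, so the cleanest execution is to invoke~\cite{HolmKTZZ19} as a black box.

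For intuition, a direct plan of attack is to reduce to a per-vertex statement: with high probability, for every vertex $v$ the number of distinct $G$-neighbors of $v$ that lie in a component of $\Pi$ different from $v$'s own component is $O(\log n)$. Summing this contribution over all $n$ vertices and dividing by two (each cross pair is counted once from each of its endpoints) immediately yields $|E_\cross(\Pi)| = O(n\log n)$.

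To argue the per-vertex statement, I would split $N(v)$ according to the $G$-degrees of the neighbors. For any neighbor $u$ of $v$ with $\deg_G(u) \leq k/(10 \log n)$, the probability that $u$ fails to sample the edge $(u,v)$ in any of its $k$ independent trials is at most $(1 - 1/\deg_G(u))^k \leq e^{-10\log n} = n^{-10}$; a union bound over all such ``low-degree'' neighbors of $v$ then forces every one of them to be directly attached to $v$ in $H$ whp, and hence to sit in $v$'s component. For the remaining ``high-degree'' neighbors, the argument must instead track transitive connectivity through chains of sampled edges, which is the delicate step isolated in~\cite{HolmKTZZ19}.

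The main obstacle is exactly this high-degree case: when $u$ and $v$ both have degree much larger than $k$, neither side is likely to sample the edge $(u,v)$ directly, and one has to argue that with overwhelming probability the samples of the other vertices glue $u$ and $v$ together through a short path in $H$. Giving a fully self-contained proof of this step would essentially reproduce the analysis of~\cite{HolmKTZZ19}, so the plan is to cite their theorem and conclude.
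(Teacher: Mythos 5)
Your overall plan---invoking Holm, Kopelowitz, Thorup, Zamir, and Zwick~\cite{HolmKTZZ19} as the main tool---is indeed the route the paper takes. However, there is a genuine gap in how you apply their theorem. You state it as holding for sampling \emph{with replacement}, but the theorem (as quoted in the paper and in~\cite{HolmKTZZ19}) is for sampling \emph{without replacement}: each vertex $v$ picks a uniformly random subset of $\min(k,\deg(v))$ incident edges. The lemma at hand is about $O(\log^2 n)$ samples \emph{with repetition}, so the two models do not match, and you cannot simply plug in $k=c\log^2 n$. The paper bridges this mismatch explicitly with a coupon-collector argument: fix $k=c\log n$ (not $c\log^2 n$); if $\deg(v)\le 2k$, then whp the $O(\log^2 n)$ with-repetition samples hit every incident edge; if $\deg(v)>2k$, then whp they yield at least $k$ \emph{distinct} incident edges, and by symmetry those $k$ distinct edges are a uniformly random $k$-subset. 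Only then does Holm et al.'s theorem apply, and the whp bound $O(n\log n)$ follows by taking $\ell=\log^2 n$ in their tail inequality.

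Your ``per-vertex'' reduction is also a detour the paper does not take, and I do not think it can be salvaged as stated. The claim that whp \emph{every} vertex $v$ has only $O(\log n)$ distinct $G$-neighbors in other components of $\Pi$ is strictly stronger than what is needed (the lemma is only about the aggregate), and it is not what Holm et al.\ prove --- their bound is on the total number of cross-component edges, not a uniform per-vertex bound. Your low-degree observation is fine, but as you yourself note, the high-degree case is the entire difficulty, and it is handled in~\cite{HolmKTZZ19} only in aggregate. So if you want a self-contained proof rather than a black-box citation, you should look at the paper's Appendix B, which gives a direct (slightly weaker) argument that partitions the components of $\Pi$ into ``big'' and ``small'' according to whether $|E(C,C^c)|>|C|$, and shows via a potential-function argument over $O(\log n)$ phases that the total number of cross edges is $O(n\log n)$.
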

\begin{proof}
	We use the following strong theorem recently proved by Holm \etal \cite{HolmKTZZ19} in FOCS 2019. 
	The only difference is that the process in~\cite{HolmKTZZ19} is without repetition.
	But it can be easily modified for the with-repetition case, and we sketch a proof below.

\begin{theorem}[Theorem 1.2, Corollary 2.22  of~\cite{HolmKTZZ19}]
	Let $G$ be an arbitrary undirected $n$-vertex graph and let $k \ge c \log n$,
		where $c$ is a large enough constant. Let $G$ be a random
		subgraph of G where every vertex $v$ independent samples a subset of $\min(k,\deg(v))$ edges incident on it, each subset equally likely. 
		Then the expected number of edges in $G$ that connect different connected components of $G$ is
		$O(n/k)$. Furthermore, there exists a constant $b$ such that the probability that
		the number of edges in $G$ that connect different connected
		components of $G$ exceeds  $\ell\cdot bn/k$ is at most $2^{-\ell}$.
\end{theorem}

Fix a vertex $v$. Let $d$ be its degree. Let $k = c\log n$ with $c$ as in the above theorem. 
Note that if $d \leq 2c\log n = 2k$, then by a coupon collector argument, whp, our with repetition experiment will sample all the edges. We may assume, therefore, $d > 2k$, 
then again a coupon collector style argument shows that whp we will obtain at least $k$ distinct edges. Furthermore, by symmetry, every subset of $k$ distinct edges are going to be equally likely.
Therefore, we can apply the above theorem which implies the lemma (set $\ell = \log^2 n$). 

For the interested reader, we include a self contained proof (a weaker version of \cite{HolmKTZZ19} theorem) of the lemma in \Cref{sec:app-proof}.
\end{proof}

\subsection{Linear Queries}\label{sec:lin-graph-conn}
In this section we prove the following theorem.
\begin{theorem}\label{thm:rand-graph-conn-cross-formal}
	There exists a $1$-round (non-adaptive) randomized algorithm $\RandGraphConnCross(G)$ which makes at most $O(n\log^4 n)$-\cross-queries 
	and returns a spanning forest of $G$ with high probability. 
\end{theorem}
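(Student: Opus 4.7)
The plan is to emulate the AGM linear-sketching algorithm for graph connectivity~\cite{AhnGM12}, but engineer the sketch so that each atomic measurement can be expressed as the difference of two \cross-queries. Order the vertex set as $V = \{1, \ldots, n\}$ and, for each $v \in V$, define the signed incidence vector $a_v \in \mathbb{R}^{\binom{V}{2}}$ by $(a_v)_{\{v, w\}} = x_{vw}$ if $v < w$, $(a_v)_{\{v, w\}} = -x_{vw}$ if $v > w$, and $(a_v)_e = 0$ on pairs $e$ not containing $v$, where $x_{vw}$ is the edge multiplicity. The key algebraic identity (the core of AGM) is that for any $S \subseteq V$, the vector $\sum_{v \in S} a_v$ vanishes on edges internal to $S$ (the two endpoints contribute opposite signs) and equals $\pm x_e$ on boundary edges $e \in \partial(S)$. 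So any sketch that is linear in $a_v$ can, by summation in post-processing over $v \in S$, be converted into a sketch of a vector whose support is exactly $\partial(S)$.

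Next, I instantiate the $\ell_0$-sampler of~\cite{JowhariST11}: this is a sketch matrix $L$ with $O(\log^3 n)$ rows and entries in $\{-1, 0, +1\}$ from which, given $L \cdot y$, one can recover an element of $\supp(y)$ with high probability. For each vertex $v$ and each row $L_k$, the scalar $L_k \cdot a_v = \sum_{w \neq v} L_{k, \{v, w\}} \cdot \mathrm{sign}(w - v) \cdot x_{vw}$ can be written as $\cross(\{v\}, B_{k,v}^+) - \cross(\{v\}, B_{k,v}^-)$, where $B_{k,v}^+ = \{w \neq v : L_{k, \{v, w\}} \cdot \mathrm{sign}(w - v) = +1\}$ and $B_{k,v}^-$ is defined analogously with $-1$. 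Crucially, the sets $B_{k,v}^{\pm}$ depend only on $v$ and on the publicly chosen sketch matrix $L$, not on the graph, so the $2$ \cross-queries simulating each measurement can be issued non-adaptively. To support Boruvka-style $O(\log n)$-phase merging, I maintain $O(\log n)$ independent sketches per vertex, giving a total of $O(n \log^4 n)$ \cross-queries, all asked in a single round.

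After all answers are collected, the algorithm is deterministic post-processing: starting from the trivial partition $\Pi_0$ of $V$ into singletons, in phase $t$, for every current component $S \in \Pi_{t-1}$, I linearly combine the $t$-th sketch over $v \in S$; by linearity this equals $L \cdot (\sum_{v \in S} a_v)$, whose underlying vector is supported on $\partial(S)$, and the $\ell_0$-decoder returns a boundary edge of $S$ with high probability. Merging along these edges halves the number of components each phase, so after $O(\log n)$ phases we recover a spanning forest, and a union bound over the $O(n \log n)$ decoder invocations yields success with high probability. The main obstacle is verifying the sign-decomposition step, but it goes through cleanly because $a_v$ is signed by a graph-independent rule (the ordering of endpoints): for each row of $L$ the coefficient $L_{k, \{v, w\}} \cdot \mathrm{sign}(w - v)$ is a definite element of $\{-1, 0, +1\}$, so it carves out a positive/negative bipartition of $V \setminus \{v\}$ each of whose parts is exactly the second argument of a \cross-query paired with $\{v\}$.
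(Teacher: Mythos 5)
Your proposal follows essentially the same strategy as the paper: define signed per-vertex incidence vectors $a_v$ so that $\sum_{v\in S} a_v$ is supported exactly on $\partial(S)$, split each signed linear measurement into two \cross-queries by separating positive and negative coefficients, and maintain $O(\log n)$ independent $\ell_0$-sampler sketches, one per Bor\r{u}vka phase, so the overall algorithm is non-adaptive. The one place where your argument is looser than the paper's is the assertion that the JST $\ell_0$-sampler~\cite{JowhariST11} comes with a sketch matrix whose entries lie in $\{-1,0,+1\}$. That property is not automatic from the JST statement: the standard $1$-sparse decoder inside such samplers uses a measurement of the form $\sum_i i\,y_i$ (or a Vandermonde-type encoding) whose coefficients range up to $n$, and it is precisely this coefficient range that obstructs a direct translation into two \cross-queries of the form $\cross(\{v\},B^+)-\cross(\{v\},B^-)$, which have $\{0,1\}$ coefficients by definition. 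The paper avoids leaning on the black box by constructing its own sampler, \MultiCutSampler, from scratch: geometric sub-sampling (the sets $T^\ell_i$) to isolate a single boundary edge, followed by $O(\log n)$ random half-subsets $T^\ell_{i,j}$ used group-testing-style to identify that edge. Every atomic measurement there is $w(\partial^+(v)\cap T)-w(\partial^-(v)\cap T)$, which is by construction a difference of two \cross-queries. Your argument would be complete if you either cite a $\{-1,0,+1\}$-valued $\ell_0$-sampler explicitly (e.g.\ the bit-identification variant of $1$-sparse recovery, which is what \MultiCutSampler implements) or, equivalently, replace the appeal to JST with that short construction; as written, the sign-decomposition step is the right idea but is asserted of a primitive that does not obviously provide it. Everything else — the linearity of the sketch under summing over $S$, the independence of the $O(\log n)$ sketches across phases to dodge adaptivity (the paper's Remark~\ref{rem:mcs-once}), and the $O(n\log^4 n)$ query accounting — matches the paper.
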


\noindent
Our main tool is a data structure, which we call  \MultiCutSampler,  which takes input a single partition $\Pi := (S_1, \ldots, S_p)$ of the vertex set, and, with high probability, for each $i\in [p]$ returns an 
edge uniformly at random from $E(S_i,S^c_i)$.
The key feature of \MultiCutSampler is that it can be constructed \emph{non-adaptively}. That is, we next show how to create the \MultiCutSampler by making $O(n\log^3 n)$ \linear (indeed, \ADD) queries to the graph.

%
%Formally, 
%%\begin{tbox}
%	\textbf{\MultiCutSampler}: 
%	\begin{itemize}
%		\item \textbf{Initialization.} Makes $\Ot(n)$ \emph{non-adaptive} queries to the input graph $G(V,E)$ to construct the data structure. 
%		\item \textbf{Query response.} Given any arbitrary partition $(S_1,S_2,\ldots,S_{p})$ of vertices $V$, with high probability returns $p$ edges $(e_1,\ldots,e_p)$ where for any $i \in [p]$, $e_i$ is chosen uniformly at random from  
%		the set $E(S_i,S^c_i)$. 
%	\end{itemize}	
%%\end{tbox}

\begin{remark}\label{rem:mcs-once}
	It is important to assert that the above guarantee provided by \MultiCutSampler in response to a given query holds for answering \emph{exactly one query partition}. In other words, asking multiple queries from \MultiCutSampler \emph{adaptively} would ensure that the
	returned solution is wrong with probability close to one. Therefore, if one desires to find random edges for $k$ different partitions, it is necessary to construct $k$ independent copies of \MultiCutSampler.
\end{remark}

%In the remainder of this section, we show how to implement the \MultiCutSampler data structure. To continue we need some notation. 
%%To continue, we need some definition. 

\paragraph{Notation.} For any set $S \subseteq V$, we use $\partial(S)$ to denote the set of pairs in the cut $(S,V\setminus S)$ which have edges between then.
We let $w(\partial(S))$ denote the number of these edges.
If $S = \set{v}$ for some $v \in V$, 
we slightly abuse the notation and use $\partial(v)$ instead of $\partial(\set{v})$. Additionally, we use $\partial^+(v)$ (resp. $\partial^-(v)$) to denote the set of edges that are go to lexicographically larger (resp. smaller) vertices in $\partial(v)$. 

\paragraph{Preprocessing.} 
Let $L := \set{2^{i} \mid 0 \leq i \leq \log{{{n}\choose{2}}}}$ 
and $k = 100\log{n}$. For any $\ell \in L$, sample $k$ sets $T^{\ell}_1,\ldots,T^{\ell}_k$, each chosen by picking each of the ${{n}\choose{2}}$ vertex-pairs independently with probability $1/\ell$. For any set $T^{\ell}_i$ for $\ell \in L$ and $i \in [k]$, 
sample $k$ random subsets of $T^{\ell}_i$ denoted by $T^{\ell}_{i,1},\ldots,T^{\ell}_{i,k}$ each chosen by picking each element in $T^{\ell}_i$ w.p. $1/2$ independent of other elements. This step can be done without any interaction with the graph $G$. 

\paragraph{Initialization.} For any vertex $v \in V$, any set $T^{\ell}_{i,j}$ for $\ell \in L$, and $i,j \in [k]$, let: 
\begin{align*}
t^{\ell}_i(v) &= w\left(\partial^+(v) \cap T^{\ell}_{i}\right) - w\left(\partial^{-}(v) \cap T^{\ell}_{i}\right), \\
t^{\ell}_{i,j}(v) &= w\left(\partial^+(v) \cap T^{\ell}_{i,j}\right) - w\left(\partial^{-}(v) \cap T^{\ell}_{i,j}\right).
\end{align*}

It is easily verified that for any vertex $v$, the scalar $t^{\ell}_{i,j}(v)$ can be computed using $2$-\cross-queries. For each vertex $v$, one recognizes the set $W$ of vertices $u\in V\setminus v$ such that the pair $(u,v)\in T^{\ell}_{i,j}$; one then queries $\cross(\{v\}, W)$ to get 
$w\left(\partial^+(v) \cap T^{\ell}_{i,j}\right)$.
The \MultiCutSampler algorithm simply calculates $t^{\ell}_{i,j}(v)$ for all vertices $v$ and all choice of parameters $\ell,i,j$ in the initialization step. This takes $O(n\log^3 n)$-\cross queries.

\paragraph{Query Response.} Given a partition $\Pi$, and for all sets $S\in \Pi$, 
%\begin{tbox}
\begin{enumerate}
	\item For all $\ell\in L,i\in [k],j\in [k]$, compute $t^{\ell}_{i}(S) := \sum_{v \in S} t^{\ell}_{i}(v)$ and $t^{\ell}_{i,j}(S) := \sum_{v \in S} t^{\ell}_{i,j}(v)$.
	\item \emph{Level test:} Find $(\ell^*,i^*)$ such that %$t^{\ell^*}_{i^*}(S) \in \set{-1,1}$ and 
	for all $j \in [k]$, $t^{\ell^*}_{i^*,j}(S) \in \set{0,t^{\ell^*}_{i^*}(S)}$. If no such choice exists, return $\perp$ and terminate. 
	\item \emph{Edge test:} Find a pair $e \in T^{\ell^{*}}_{i^*}$ such that for all $j \in [k]$, whenever $e \in T^{\ell^*}_{i^*,j}$, $t^{\ell^*}_{i^*,j}(S) = t^{\ell^*}_{i^*}(S)$. If no such edge exists, return $\perp$ and terminate. 
	\item Return the pair $e$ found in the last test, and assert this is in $E(S,S^c)$. 
\end{enumerate} 
%\end{tbox}
\noindent
In the following, we prove the correctness of \MultiCutSampler. 

\begin{lemma}[Level Test Correctness]\label{lem:level-test}
	For parameters $(\ell^*,i^*)$ chosen by level test, $\card{T^{\ell^*}_{i^*} \cap \partial(S)}=1$ with high probability.
\end{lemma}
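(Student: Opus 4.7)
The plan is to union-bound over the $O(\log^2 n)$ candidate levels and show that any $(\ell, i)$ with $|T^\ell_i \cap \partial(S)| \geq 2$ fails the level test with high probability; hence any $(\ell^*, i^*)$ the algorithm outputs must isolate a unique pair. Fix $(\ell, i)$ and let $p_1, \ldots, p_m$ denote the vertex pairs in $T^\ell_i \cap \partial(S)$, with $w_\alpha \geq 1$ parallel edges across pair $p_\alpha$ and sign $s_\alpha \in \{\pm 1\}$ determined by whether the lexicographically smaller endpoint of $p_\alpha$ lies in $S$. Unwinding the definition of $t^\ell_i(v)$ and summing over $v \in S$, the edges internal to $S$ cancel via their two endpoints, leaving
\[ t^\ell_i(S) = \sum_{\alpha=1}^m s_\alpha w_\alpha, \qquad t^\ell_{i,j}(S) = \sum_{\alpha=1}^m \chi_\alpha^{(j)} s_\alpha w_\alpha, \]
where $\chi_\alpha^{(j)} := \mathbf{1}[p_\alpha \in T^\ell_{i,j}]$ are i.i.d.\ Bernoulli$(1/2)$ conditional on $T^\ell_i$.

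If $m = 1$, the right-hand side deterministically equals $0$ or $t^\ell_i(S)$, so every subsample automatically passes. If $m \geq 2$, the core claim is that for each fixed $j$, $\Pr[t^\ell_{i,j}(S) \in \{0, t^\ell_i(S)\}] \leq 3/4$. Conditioning on $\chi_\alpha^{(j)}$ for $\alpha \geq 3$ and writing $R$ for their contribution, the four equiprobable outcomes of $(\chi_1^{(j)}, \chi_2^{(j)})$ yield the values $R$, $R + s_1 w_1$, $R + s_2 w_2$, and $R + s_1 w_1 + s_2 w_2$. A three-way case split on whether $s_1 w_1$ equals $s_2 w_2$, equals $-s_2 w_2$, or neither shows that this multiset has at worst three distinct entries with one appearing twice (e.g.\ $R + s_1 w_1$ is repeated when $s_1 w_1 = s_2 w_2$, and $R$ is repeated when $s_1 w_1 = -s_2 w_2$). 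Since the target set $\{0, t^\ell_i(S)\}$ has at most two elements, at most three of the four outcomes can land in it, giving the conditional $3/4$ bound; marginalizing over the other coordinates preserves it.

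By conditional independence of $T^\ell_{i,1}, \ldots, T^\ell_{i,k}$ given $T^\ell_i$, the probability that a bad $(\ell, i)$ with $m \geq 2$ survives all $k = 100\log n$ subsample tests is at most $(3/4)^{k} = n^{-\Omega(1)}$. Union-bounding over the at most $|L| \cdot k = O(\log^2 n)$ pairs $(\ell, i)$, with probability $1 - n^{-\Omega(1)}$ no bad pair is accepted, so whenever the algorithm returns some $(\ell^*, i^*)$ it must satisfy $|T^{\ell^*}_{i^*} \cap \partial(S)| = 1$.

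The only nontrivial step will be the anti-concentration argument in the $m \geq 2$ case: a generic Littlewood--Offord style bound would not deliver a constant strictly less than $1$, so it is essential to exploit the specific two-element target $\{0, t^\ell_i(S)\}$ through the $2$-coordinate pairing above. The remaining ingredients are a routine union bound over $(\ell, i)$ and the independence structure baked into the preprocessing.
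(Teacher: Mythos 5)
Your anti-concentration argument for the $m \geq 2$ case is correct and matches the paper's in substance: both proofs condition on the inclusion/exclusion of all but two crossing pairs and observe that among the four equiprobable assignments of $(\chi_1^{(j)},\chi_2^{(j)})$, at most three can land in the two-element target set $\set{0, t^{\ell}_{i}(S)}$, giving a per-trial failure probability of at least $1/4$. You make the signed-weight decomposition $t^\ell_i(S) = \sum_\alpha s_\alpha w_\alpha$ explicit and carry out a careful three-way case split where the paper merely asserts ``there is always one choice of $e_1$ and $e_2$ that ensures $t^{\ell}_{i,j}(S) \notin \set{0, t^\ell_i(S)}$''; that is a welcome clarification rather than a different route, and the union bound over $O(\log^2 n)$ pairs and $k$ independent subsamples is identical.

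There is, however, a real gap. You only rule out $m \geq 2$ and then jump to ``hence any $(\ell^*,i^*)$ the algorithm outputs must isolate a unique pair.'' That inference ignores the $m = 0$ possibility: if $T^\ell_i \cap \partial(S) = \emptyset$, then $t^\ell_i(S) = 0$ and $t^\ell_{i,j}(S) = 0$ for every $j$, so such an $(\ell,i)$ passes the level test vacuously (the target set collapses to $\set{0}$). To conclude $m = 1$ you must also dispose of $m = 0$; the paper does so in one sentence (asserting it ``cannot pass the first step of the level test''), and your proof needs either an explicit side condition $t^{\ell^*}_{i^*}(S) \neq 0$ or a separate argument that the procedure never selects such a pair. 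Separately, the paper's proof of this lemma also bundles in the non-vacuity claim that the level test does not output $\perp$ with high probability, by exhibiting the scale $\ell$ with $\ell \leq \card{\partial(S)} < 2\ell$ and arguing that with constant per-trial probability one of the $k$ samples $T^\ell_i$ isolates exactly one cut pair. Your proposal proves only the conditional statement ``if the level test returns something, it is correct,'' whereas the existence half is needed for the downstream correctness of \MultiCutSampler and the paper folds it into this same lemma.
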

\begin{proof}
	We first prove that the probability that the level test outputs $\perp$ is at most $1/n^{O(1)}$. Let $\ell \in L$ be such that $\ell \leq \card{\partial(S)} < 2\ell$. For this choice of $\ell$ and by randomness in choice of $T^{\ell}_i$ for $i \in [k]$, 
	\begin{align*}
	\Pr\paren{\card{T^{\ell}_{i} \cap \partial(S)} = 1} = \Omega(1). 
	\end{align*}
	As a result, with high probability, there exists some $i \in [k]$, such that $\card{T^{\ell}_{i} \cap \partial(S)} = 1$. It can be verified that such choice of $(\ell,i)$ pass the level test because there is only one edge in $\partial(S)$ that can contribute
	to the value of $t^{\ell}_i(S)$ (the other edges will cancel out each other's contribution). Clearly, such a choice for $(\ell,i)$ satisfies the requirement in the lemma statement. We now prove that any $(\ell,i)$ such that
	$\card{T^{\ell}_{i} \cap \partial(S)} \neq 1$, would pass the test only with probability $1/n^{O(1)}$; this, together with a union bound on $\polylog{(n)}$ choices for $(\ell,i)$ finalizes the proof. 
	
	If $\card{T^{\ell}_{i} \cap \partial(S)} = 0$, clearly $(\ell,i)$ cannot pass the first step of the level test. Hence, in the following, we assume that $\card{T^{\ell}_{i} \cap \partial(S)} \geq 2$. 
	For any $j \in [k]$, we prove that, 
	\begin{align*}
	\Pr\paren{t^{\ell}_{i,j}(S) \notin \set{0,t^{\ell}_i(S)}} = \Omega(1),
	\end{align*}
	which immediately finalizes the proof as we repeat this process $k = O(\log{n})$ times. 
	
	Let $e_1$ and $e_2$ be two edges in $T^{\ell}_i \cap \partial(S)$. Suppose we fix the assignment of every edge in $\partial(S)$ in $T^{\ell}_{i,j}$ except for $e_1$ and $e_2$. There is always one choice of $e_1$ and $e_2$ that ensures
	that $t^{\ell}_{i,j}(S) \notin \set{0,t^{\ell}_i(S)}$. Hence, w.p. at least $1/4$, $t^{\ell}_{i,j}(S) \notin \set{0,t^{\ell}_i(S)}$, finalizing the proof. 
\end{proof}

\begin{lemma}[Edge Test Correctness]\label{lem:edge-test}
	Conditioned on $\card{T^{\ell^*}_{i^*} \cap \partial(S)}=1$, the edge $e$ returned by the edge test is the single edge in $T^{\ell^*}_{i^*} \cap \partial(S)$ with high probability. 
\end{lemma}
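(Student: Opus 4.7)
My plan is to show that the unique edge $e^* \in T^{\ell^*}_{i^*} \cap \partial(S)$ always passes the edge test, and that with high probability no other pair in $T^{\ell^*}_{i^*}$ does. The key preliminary observation, which drives everything, is a parity/cancellation fact: for any subset $T \subseteq T^{\ell^*}_{i^*}$, the quantity $\sum_{v \in S} \bigl(w(\partial^+(v)\cap T) - w(\partial^-(v)\cap T)\bigr)$ receives a contribution of $0$ from each edge of $T$ whose endpoints lie entirely inside $S$ or entirely inside $V\setminus S$ (the two endpoints cancel), and a contribution of $\pm 1$ from each edge of $T\cap \partial(S)$ (only the endpoint in $S$ is counted). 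Under the conditioning $|T^{\ell^*}_{i^*}\cap \partial(S)|=1$, this immediately yields $t^{\ell^*}_{i^*}(S) = \pm 1 \ne 0$, and for every $j\in [k]$ we have $t^{\ell^*}_{i^*,j}(S) = t^{\ell^*}_{i^*}(S)$ when $e^* \in T^{\ell^*}_{i^*,j}$, and $t^{\ell^*}_{i^*,j}(S) = 0$ otherwise.

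From this observation, the condition checked in the edge test, ``whenever $e \in T^{\ell^*}_{i^*,j}$ we have $t^{\ell^*}_{i^*,j}(S) = t^{\ell^*}_{i^*}(S)$,'' is equivalent (under the conditioning) to ``whenever $e \in T^{\ell^*}_{i^*,j}$, also $e^* \in T^{\ell^*}_{i^*,j}$.'' For the choice $e=e^*$ this is trivially satisfied for every $j$, so $e^*$ is always a valid output. Thus I only need to bound, for each other pair $e \in T^{\ell^*}_{i^*} \setminus \{e^*\}$, the probability that $e$ also passes the test, and then union bound.

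Fix such an $e \ne e^*$. Recall that each set $T^{\ell^*}_{i^*,j}$ is obtained by independently including each element of $T^{\ell^*}_{i^*}$ with probability $1/2$, and the $k$ sets are mutually independent. In particular, for each $j$, the events $\{e \in T^{\ell^*}_{i^*,j}\}$ and $\{e^* \in T^{\ell^*}_{i^*,j}\}$ are independent Bernoulli$(1/2)$, so
\[
\Pr\bigl(e \in T^{\ell^*}_{i^*,j} \text{ and } e^* \notin T^{\ell^*}_{i^*,j}\bigr) = \tfrac14,
\]
which is exactly the event that $e$ fails the test for this particular $j$. Since the $k=100\log n$ sets $T^{\ell^*}_{i^*,j}$ are independent, the probability that $e$ survives all $k$ checks is at most $(3/4)^{100\log n} = n^{-\Omega(1)}$ with a large constant.

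Finally, $|T^{\ell^*}_{i^*}| \leq \binom{n}{2}$, so a union bound over at most $\binom{n}{2}$ candidate pairs $e \ne e^*$ shows that with high probability no such $e$ passes the edge test, and therefore the unique pair returned must be $e^*$, proving the lemma. I do not foresee a real obstacle here; the only subtlety is making sure to invoke the parity cancellation cleanly so that the edge-test condition reduces to the purely combinatorial set-inclusion condition between $e$ and $e^*$, after which the analysis is a standard independent-coin amplification plus union bound.
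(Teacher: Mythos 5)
Your proposal is correct and follows the same approach as the paper's own proof: observe that the cancellation in $\sum_{v\in S}\bigl(w(\partial^+(v)\cap T)-w(\partial^-(v)\cap T)\bigr)$ kills the contribution of every non-cut pair, so under the conditioning $t^{\ell^*}_{i^*,j}(S)=t^{\ell^*}_{i^*}(S)$ iff $e^*\in T^{\ell^*}_{i^*,j}$, deduce that $e^*$ always passes while a fixed $e\ne e^*$ fails per $j$ with constant probability, amplify over the $k$ independent subsamples, and union bound over all pairs in $T^{\ell^*}_{i^*}$. In fact you are slightly more careful than the paper on two small points: you analyze the exact event the edge test checks (``$e\in T^{\ell^*}_{i^*,j}\Rightarrow e^*\in T^{\ell^*}_{i^*,j}$'') rather than its converse as the paper's wording suggests, and you correctly compute the per-pair survival probability as $(3/4)^k$ rather than the paper's stated $1/2^k$; by symmetry of the independent $1/2$-subsampling the two events have the same probability, and since $k=100\log n$ both bounds give $n^{-\Omega(1)}$, so the conclusion is unaffected. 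The only minor imprecision on your side is describing each cut edge's contribution as $\pm1$; in a multigraph it is $\pm w(e)$, but since the sign and nonvanishing are all that is used, the argument is unchanged.
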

\begin{proof}
	Fix $e^{*} = T^{\ell^*}_{i^*} \cap \partial(S)$. It is immediate that $t^{\ell^*}_{i^*,j} = t^{\ell^*}_{i^*}$ iff $e^* \in T^{\ell^*}_{i^*,j}$. Now fix any other edge $e \neq e^* \in T^{\ell^*}_{i^*}$. The probability that $e$ appears in all 
	$T^{\ell^*}_{i^*,j}$ in which $e^*$ also appears is at most $1/2^{k} = 1/n^{O(1)}$. A union bound on all possible edges finalizes the proof. 
\end{proof}

The correctness of \MultiCutSampler now follows immediately from~\Cref{lem:level-test} and~\Cref{lem:edge-test}.

\subsubsection{The Connectivity Algorithm}\label{sec:uconn-algorithm}

The algorithms is simply as follows: 
%\begin{tbox}
	\begin{enumerate}
		\item Create $t:= O(\log{n})$ \MultiCutSampler data structure $D_1,\ldots,D_t$ by querying the graph non-adaptively. 
		\item Define $S^0_1 = \set{v_1},\ldots,S^0_{n}:=\set{v_n}$. 
		\item For $i = 1$ to $t$ steps: 
		\begin{enumerate}
			\item Query $D_i$ with sets $(S^{i-1}_1,\ldots,S^{i-1}_{n'})$ to obtain an edge from each $\partial(S^{i-1}_{j})$. 
			\item Let $S^{i}_j := S^{i-1}_{j} \cup S^{i-1}_{j'}$ whenever the edge sampled for $S^{i-1}_{j}$ by $D_i$ is incident on a vertex in $S^{i-1}_{j'}$.  
		\end{enumerate}
	\end{enumerate}
%\end{tbox}
Since the algorithm finds an edge out of every connected cluster, and the number of clusters drop by a factor $2$ in each round, in $O(\log n)$-rounds the algorithm finds a spanning forest of $G$, whp. Once again, the total number of bad events is $O(n)$, and thus the whp holds due to a union bound over the whp assertions in~\Cref{lem:level-test} and~\Cref{lem:edge-test}. The total number of \cross-queries is in the creation of the $O(\log n)$ \MultiCutSampler data structures. This completes the proof of~\Cref{thm:rand-graph-conn-cross-formal}.

\section{Related Work}\label{sec:related}

\paragraph{Graph Reconstruction via \cross Queries.} 
As mentioned in the Introduction, most of the work in the literature on \cross-queries in graphs has focused on reconstructing the graph.
Starting with the work of Grebinski and Kucherov~\cite{GrebinskiK98}, a long line~\cite{AlonBKRS02,AlonA05,ReyzinS07,ChoiK08,Bshouty09,BshoutyM11,Mazzawi10,BshoutyM12} of work culminated in a randomized, adaptive, {\em polynomial time} $O(\frac{m\log n}{\log m})$-\cross query algorithm due to Choi~\cite{Choi13} to reconstruct the graph. Interestingly, there is a non-adaptive, deterministic algorithm with the same number of queries~\cite{ChoiK08,Bshouty09}, however neither are the queries explicit, nor is there an efficient algorithm known to reconstruct the graph from the answers. To our knowledge, obtaining a polynomial time, non-adaptive algorithm with optimal query complexity is an open problem. Having said that, it is rather straightforward (see~\cite{ReyzinS07}, for instance) to obtain an efficient $O(m\log n)$-\cross query non-adaptive, deterministic algorithm.

One recent work using \cross-queries which is similar in spirit to our paper, is one by Rubinstein, Schramm, and Weinberg~\cite{RubinsteinSW18}. They give a randomized, $O(1)$-round\footnote{They don't specify the number of rounds, but our guess is $3$} algorithm which makes $\tilde{O}(n)$-\cross queries in a {\em simple} undirected graph and returns the global minimum cut. Their result, although related, is rather incomparable. For one, it uses more rounds than our algorithm, two, it runs on simple graphs, and lastly, it is unclear whether their algorithm can return a spanning tree. Our algorithm, on the other hand, cannot find the minimum cut. Indeed, for general multigraphs, a recent result of Assadi, Chen, and Khanna~\cite{AssadiCK19} shows that
any $O(1)$-round algorithm which finds the {\em exact} minimum cut must make $\Omgt(n^2)$ queries.

\paragraph{Parameter Estimation via \IS and \BIS Queries.} 
Beame et al.~\cite{BeameHRRS18} considered the problem of estimating the number of edges in a graph given $\BIS$-queries. They gave a randomized algorithm which obtained an $(1+\eps)$-approximation making $\poly\left(\log n, \frac{1}{\eps}\right)$-\BIS queries. 
This improved upon a earlier result of Dell and Lapinskas~\cite{DellL18} who gave a $\tilde{O}(n)$-query algorithm. 

Building upon~\cite{BeameHRRS18}, very recently Bhattacharya et al.~\cite{BhattacharyaBGM18} give a polylogarithmic query approximation to estimate the number of triangles (using a stronger query model). 
To our knowledge, we don't know of an explicit reference to graph reconstruction using these queries; however, it is not hard to obtain a
$2$-round, randomized algorithm making $O(m\log n)$-\BIS queries (assuming we know $m$); indeed, our algorithm in~\Cref{sec:algo-or} does that.

Apart from $\BIS$-query, another similar query model is the $\IS$ model which takes input a subset $A$ and says whether there is any edge with both endpoints in $A$. This has also been called the {\em edge-detecting model}~\cite{AngluinC08,AbasiB18}. We should stress here that the \IS queries are significantly weaker than \BIS queries. On the one hand, one can simulate $\IS$ queries using $O(\log n)$ nonadaptive \BIS queries.
%\footnote{Once again, we couldn't find an explicit reference for this, but here is a short sketch.
%	Given any subset $A$, one can find explicit family of subsets $A_1, \ldots, A_L$ where $L = O(\log |A|)$ such that for every pair $a,b\in A$, there exists at least one $A_i$ with $a\in A_i$ and $b\notin A_i$. 
%	For instance, if we think of the elements in $A$ as $\{1,2, \ldots, |A|\}$ written in binary, then $A_j$ can be the elements which has the $j$th significant bit $1$.
%	Now, $\IS(A)$ is simply the 
%OR of $\BIS(A_i,A\setminus A_i)$'s. If there is some edge with both endpoints in $A$, one of the $\BIS$-queries will catch it.
%}
On the other hand, 
 %are incomparable (this is unlike the \ADD and \cross query case) and one is not known to be stronger than the other. 
it is known, for instance, that $O(\log n)$-\BIS queries can estimate the degree (\Cref{lem:rand-bis-degree-estimate}), but $\Omgt(n)$-\IS queries are needed to estimate the same~\cite{BeameHRRS18}. For our problem at hand (of finding a spanning forest), one can use the construction in~\Cref{sec:lb-rand-one} to show any (randomized, adaptive) algorithm for finding a spanning forest needs to make $\Omgt(n^2)$-\IS queries.

\paragraph{Sketching, Streaming, and other Access Models.} \linear queries are more famous as \linear sketches. Over the past two decases, a huge amount of literature has amassed on linear sketching; we refer the readers to surveys~\cite{WoodruffSurvey, McGregorSurvey} and the references within. Here, we mention the works most relevant to our paper.

Ahn, Guha, and McGregor~\cite{AhnGM12} were the first to give a \linear sketch for connectivity. As mentioned in the overview, their work implies a non-adaptive, randomized $\Ot(n)$-\linear query algorithm for finding a spanning forest. Our result (\Cref{Thm:rand-graph-conn-cross}) can be thought of as a special class of linear sketch for connectivity. One of the main applications of linear sketches arises in dynamic streaming algorithms. In a dynamic stream, objects are inserted and deleted in a stream, and the algorithm has to maintain a certain structure in bounded space. The AGM result~\cite{AhnGM12} immediately implied a $\Ot(n)$-space randomized, one-pass dynamic stream algorithm. We are, in fact, unaware of any results on {\em deterministic} dynamic stream algorithms for maintaining a spanning forest. Our result (\Cref{Thm:det-graph-conn}) implies an $O(n^{1+1/r})$-space algorithm in $O(r)$ passes, which in turn implies a $O(\log n)$-pass semi-streaming ($\Ot(n)$-space) algorithm.

Recently, Nelson and Yu~\cite{NelsonY18} proved an $\Omega(n\log^3 n)$-lower bound on the space requirement for the single-pass dynamic spanning forest problem. This, for instance, proves a lower bound of $\Omega(n\log^2 n)$ on the number of $\cross$-queries required by any non-adaptive algorithm (the information theoretic lower bound is only $\Omega(n)$).

Finally, we mention that many other query-access models have been proposed in the literature, and this approach is instructive and important to understand the power and limitations of algorithms. We mention one such recent work by Sun et al.~\cite{SunWYZ19} which consider querying an unknown {\em matrix} via matrix-vector queries. The paper studies multiple objectives arising from linear algebra, statistics, and most relevant to us, graph problems. For instance, the matrix could be the adjacency or edge-incidence matrix of an unknown graph. This model is (way) stronger than even the \linear-query model as every query returns $\Ot(n)$-bits of information. 
For connectivity,~\cite{SunWYZ19} show that if the matrix is edge-incidence, then results~\cite{kapralov2017single} on dynamic spectral sparsifiers imply $O(\polylog(n))$ queries suffice, while with an adjacency matrix, $\Omgt(n)$-queries are required.

\paragraph{Combinatorial Group Testing, Compressed Sensing, and Coin Weighing.}
The single element recovery problem is closely related to all these three deep fields. In group testing, we are given \OR-query access to a vector as in \ProblemOne, but the objective is to recover the whole support.
This field started with the work of~\cite{Dorfman43} out of a very practical application in World War II (we refer to the book~\cite{DH00} and various references within for historical perspectives), but it has since had numerous applications in fields as diverse as DNA screening~\cite{NgoD00} to multiaccess communication (MAC) protocols~\cite{Wolf85}. It is known that if there are $d$ elements in the support, then $O(d^2\min\left(\log(N/d), \log^2_d N\right))$ non-adaptive, deterministic queries suffice~\cite{KautzS64,D01,HwangS87,PoratR08} and $\Omega(d^2 \log_d N)$-queries are needed for a non-adaptive algorithm.
Closing this gap is an outstanding open question in the combinatorial group testing community.
On the other hand, with adaptive algorithms can solve this problem with $O(d\log(N/d))$-queries, and this is necessary. We are, however, unaware of any work understanding the trade-off with rounds of adaptivity.
We refer the interested reader to the book~\cite{DH00} and surveys~\cite{NgoD00} and lecture notes~\cite{Ngo11}.

Compressed Sensing and Coin Weighing problems are closer to the single element recovery problem with \linear queries. 
In the coin weighing problem~\cite{SoderbergS63}, one is given $N$ coins out of a collection
of coins of two distinct weights $w_0$ and $w_1$, a spring scale (as opposed to a balance-scale), and the objective is to determine the weight of each coin with minimal number of weighings. This is same as given a $0,1$-vector in $N$ dimensions, recover it using \linear queries.
There is a slew of work (we simply point the reader to the references in~\cite{Bshouty09}) on these problems.
It is known that with no other assumption, $\frac{2n}{\log n}$ is the correct answer for non-adaptive algorithms~\cite{Moser70,Lindstrom71}.
When it is known that there are $\leq d$ instances of one coin, then the best non-adaptive algorithm makes $O(d\log n)$-queries~\cite{Lindstrom71} while the best non-adaptive algorithm makes $O(\frac{d\log N/d}{\log d})$-queries~\cite{Bshouty09}. When the vector is an arbitrary non-negative vector, then the problem falls in the realm of (non-negative) compressed sensing. One of the main problems in compressed sensing is given a $d$-sparse (or close to $d$-sparse) $N$-dimensional vector, can it be recovered (or approximately recovered) from linear measurements. It is now well known that $O(d\log(N/d))$ non-adaptive measurements suffice~\cite{Donoho06,CandesRT06,CormodeM06}, and this is tight. There are various nuanced results in what it means by close to sparse and approximation, and we point the reader to the survey~\cite{GilbertI10} and the references within for a deeper picture.
}

%\vspace{3ex}
\full{\subsection*{Acknowledgements} We thank anonymous reviewers whose detailed comments have improved the paper.}

\full{
\vspace{3ex}
\bibliographystyle{abbrv}
\bibliography{general}
}
\conf{
\newpage
\bibliographystyle{plainurl}
\bibliography{general}
}
%\newpage

\noindent
\clearpage
\full{
\appendix
\section{Simple Lower Bound for Single Element Recovery with \OR-queries}\label{sec:lb-sing-el-or}
We provide a simple proof for the single element recovery problem with \OR-queries.
Note that since \OR-queries are weaker than \linear-queries, \Cref{thm:lb-lin-single-element} already implies this, but this proof is arguably simpler (or rather, more direct, or to use the language from the proof of~\Cref{thm:lb-lin-single-element}, it works in the {\em primal} space).
\begin{theorem}\label{thm:lb-single-element-or}
	Any $r$-round deterministic algorithm for \ProblemOne must make $\geq \left(N^{1/r} - 1\right)$-\OR queries in some round.
\end{theorem}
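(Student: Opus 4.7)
The plan is an adversary argument analogous to (indeed, a simpler, one-vertex specialization of) the argument used in \Cref{sec:lb-det-conn-or} for graph connectivity, with the broad/narrow-query dichotomy stripped away. Define the decreasing sequence $\tau_i := N^{(r-i)/r}$, so that $\tau_0=N$, $\tau_r=1$, and $\tau_{i-1}/\tau_i = N^{1/r}$. Suppose for contradiction that some algorithm $\ALGO$ uses $k_i\le N^{1/r}-2$ queries in every round $i$; it suffices to fool any such algorithm. The adversary maintains a shrinking active set $A_i\subseteq[N]$ after round $i$, with $A_0=[N]$, and preserves the two invariants
\begin{itemize}\itemsep0pt
\item[(I1)] $|A_i|\ge 2\tau_i$ for every $i\ge 1$ (while $|A_0|=N=\tau_0$),
\item[(I2)] $|Q\cap A_i|>\tau_i$ for every query $Q$ that has been answered $1$ in any round $j\le i$.
\end{itemize}

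In round $i+1$, all queries $Q_1,\dots,Q_{k_{i+1}}$ are revealed non-adaptively and the adversary computes $A_{i+1}$ via the following fixed-point procedure. Initialize $A:=A_i$ with every query marked ``undecided.'' Whenever an undecided $Q_\ell$ satisfies $|Q_\ell\cap A|\le\tau_{i+1}$, delete $Q_\ell\cap A$ from $A$ and commit to answering $Q_\ell$ as a $0$-query; iterate until no undecided query satisfies the threshold, then answer all remaining undecided queries as $1$-queries and set $A_{i+1}:=A$. By construction, each $0$-query $Q$ satisfies $Q\cap A_{i+1}=\emptyset$ (so $x$ may safely be zero on $Q$), while each $1$-query satisfies $|Q\cap A_{i+1}|>\tau_{i+1}$, directly giving (I2) for $1$-queries born in round $i+1$. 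The sets of elements deleted at different trigger events are pairwise disjoint (a later trigger sees an already-shrunk $A$) and each has size at most $\tau_{i+1}$, so $|A_i\setminus A_{i+1}|\le k_{i+1}\tau_{i+1}\le(N^{1/r}-2)\tau_{i+1}=\tau_i-2\tau_{i+1}$. This yields (I1) at once: $|A_{i+1}|\ge|A_i|-(\tau_i-2\tau_{i+1})\ge 2\tau_{i+1}$. The same deletion bound gives (I2) for $1$-queries issued in earlier rounds: $|Q\cap A_{i+1}|\ge|Q\cap A_i|-(\tau_i-2\tau_{i+1})>\tau_i-\tau_i+2\tau_{i+1}=2\tau_{i+1}>\tau_{i+1}$.

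After round $r$, the invariants give $|A_r|\ge 2$ and $|Q\cap A_r|\ge 2$ for every $1$-query $Q$. The algorithm must now output some index $j^\star\in[N]$; the adversary commits to $x\in\{0,1\}^N$ with $\supp(x)=A_r\setminus\{j^\star\}$. This $x$ is non-zero and satisfies $x_{j^\star}=0$. It is consistent with every $0$-query $Q$ since $\supp(x)\cap Q\subseteq A_r\cap Q=\emptyset$, and with every $1$-query $Q$ since $|Q\cap A_r|\ge 2$ guarantees $Q\cap(A_r\setminus\{j^\star\})\neq\emptyset$. Hence $\ALGO$ returns an index outside the support of a valid input, a contradiction.

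The main conceptual obstacle is maintaining (I2) for $1$-queries issued in the current round: a naive ``remove-on-the-fly'' rule could let a later $0$-query shrink $|Q\cap A|$ below $\tau_{i+1}$ after $Q$ has already been answered $1$. The fixed-point construction sidesteps this by only freezing the $0/1$ label of a query when the active set will no longer shrink within the round; monotonicity of the iteration (each step strictly reduces $A$) ensures termination in at most $k_{i+1}$ steps.
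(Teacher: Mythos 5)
Your proof is correct and follows essentially the same adversary strategy as the paper's: a shrinking active set per round, an iterate-to-fixed-point rule for deciding which queries to answer $0$, the invariant that every $1$-query retains more than $N^{(r-i)/r}$ active coordinates, and a final perturbation of the active set to defeat the output index. The only cosmetic difference is that you assume $k_i\le N^{1/r}-2$ rather than the exact negation $k_i<N^{1/r}-1$ (these coincide when $N^{1/r}$ is an integer; otherwise the telescoping still goes through with the weaker hypothesis, as the paper uses, and gives $|A_r|>1$ hence $|A_r|\ge 2$), and you make the active-set size bound explicit as (I1) where the paper leaves it implicit.
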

\begin{proof}
	Suppose, for the sake of contradiction, there is an algorithm which makes $< N^{1/r} - 1$ queries every round.
	The proof is via an adversary argument. % similar to prove lower bounds for evasive properties.
	Given an algorithm $\ALG$, the \adv maintains a vector $x\in \{0,\star\}^N$ and responds queries consistent with $x$; the $\star$ indicates that the \adv has not committed to the value of $x$ on that coordinate. 
	At the end of $r$-rounds, the algorithm returns a coordinate $j$. Our goal is to reveal $x\in \{0,1\}^N$ at that point such that $x_j = 0$ and $x$ is consistent with the transcript so far. If we are successful, then the algorithm cannot be correct on all inputs. We call this {\em fooling} the algorithm. 
	
	The \adv maintains a set of active vertices $A$. Initially $A = [N]$. For every $i\in A$, we have $x_i = \star$; for every $i\notin A$, we have $x_i = 0$. 
	Every query $Q$ made will be responded either $0$ or $1$. The former set are called $0$-queries, and \adv will maintain $x_i = 0$ for all $i$ in a $0$-query.
	All others are $1$-queries. The adversary maintains the following invariant.
	\begin{itemize}
		\item[(I.)] After round $k\geq 1$, for every $1$-query $Q$ with $Q\cap A \neq \emptyset$, we have $|Q\cap A|> N^{1 - k/r}$.
	\end{itemize}
Note that at the beginning, that is after round $k=0$, the above invariant holds vacuously.
%	
%	We let $I := [N]\setminus A$ be the set of inactive or zero coordinates. 
%	We call a query $Q$ {\em dead} if $Q\subseteq I$ and it has been answered $0$. Other queries are alive. The definition depends on the set $A$, of course, but our set $A$ will be shrinking as we go along; therefore, if a query $Q$ is dead at some point of time, it stays dead. 
%	
%	We wish to maintain the following invariant: after round $k$, for any {\em alive} query $Q$, $|Q\cap A| > N^{1 - k/r}$. In particular, after $r$-rounds, $|Q\cap A| \geq 2$. Let's first see how to maintain this invariant, and then show how this invariant suffices to fool the algorithm.
%	
%	Let $Q^{(1)}_1, Q^{(1)}_2, \ldots, Q^{(1)}_t$ be the queries in the first round. Note, $t < N^{1/r} -  1$.
%	We call a query {\em small} wrt this round $1$, if $|Q^{(1)}_i \cap A| \leq N^{1-1/r}$. While a small query $Q$ exists, we (a) respond $0$ to the query, (b) we set $x_i = 0$ for every $i\in Q\cap A$, and (c) remove these elements from the active set $A$. Note all such queries are now {\em dead}. For all remaining alive queries $Q^{(1)}_i$, by definition we have $|Q^{(1)}_i \cap A| > N^{1-1/r}$, as needed by our invariant.
	
	Consider an arbitrary round $1 \le k \leq r$. Let $Q_1, Q_2, \ldots, Q_t$ be the queries in this round, with $t < N^{1/r} - 1$.
	Call $Q_r$ {\em small} if $|Q_r \cap A| \leq N^{1-k/r}$. The \adv responds $0$ to every small query and (a) sets $x_i = 0$ for $i\in Q_r\cap A$, and (b) removes these elements from $A$.
	Note that each query can lead to a drop of $\leq N^{1-k/r}$ in the size of $|A|$.
	Since $A$ has diminished in size, this may make some other query small, and the \adv repeats this process till no small queries remain. For all such queries, the \adv responds $1$.
	This completes the description of the responses, and now let us show that the invariant holds. Indeed, fix any $1$-query $Q$. If this query $Q$ is from round $k$, then the invariant holds by the description
	of the \adv process: otherwise, $Q$ would be small and therefore a $0$-query. If $Q$ is from a previous round, then
	since the invariant held after round $(k-1)$, we know that before round $k$, $|Q\cap A| > N^{1 - (k-1)/r}$. After round $k$, the set $A$ decreases, but by at most $N^{1-k/r}\cdot \left(N^{1/r}-1\right)$
	since there are $< N^{1/r} - 1$ queries in all (perhaps all queries are small). Thus, after round $k$, we still have $|Q\cap A| > N^{1 - (k-1)/r} - N^{1-k/r}\cdot \left(N^{1/r}-1\right) = N^{1-\frac{k}{r}}$.
%	
%	While a small query $Q$ exists, we do the same as before : (a) respond $0$ to the query, (b) we set $x_i = 0$ for every $i\in Q\cap A$, and (c) remove these elements from the active set $A$. 
%	These queries are dead. Now let us consider an alive query $Q^{(j)}_i =: Q$ from {\em any of the rounds} $1\leq j \leq k$.
%	By the invariant, at the end of the round $k-1$, we had $|Q\cap A_{k-1}| > N^{1 - (k-1)/r}$, where we use $A_{k-1}$ to denote the alive vertices at the end of round $(k-1)$. In the $k$th round, some elements from $A_{k-1}$ got removed to give $A_k$.
%	But this number is $\leq N^{1 - k/r}\cdot \left(N^{1/r} - 1\right) = N^{1 - \frac{k-1}{r}} - N^{1-\frac{k}{r}}$.
%	Thus, at the end of this round, $|Q\cap A_k| \geq |Q\cap A_{k-1}| - \left(N^{1 - \frac{k-1}{r}} - N^{1-\frac{k}{r}}\right) > N^{1-\frac{k}{r}}$, which is what the invariant needs.
	
	To complete the proof of the theorem, note that at the end of $r$ rounds, we are in a situation where there is a subset of elements $A$ and every query $Q$ ever made by the algorithm is either $Q\cap A = \emptyset$ and we have responded $0$,  or $|Q\cap A| \geq 2$ and we have responded $1$. This is trouble for the algorithm. Suppose the algorithm returns $j\in [N]$ claiming that $x_j > 0$. Well, consider the vector $x$ which is $x_i = 0$ for all $j \cup [N]\setminus A$ and $x_i = 1$ otherwise. 
	We claim this is consistent with every query --- if $Q\cap A = \emptyset$, then we respond $0$ as we should, and if $|Q\cap A| \geq 2$, then we respond $1$. Since $|Q\cap A|\geq 2$, there must exist some element of $A\setminus j$ in $Q$. That is, $x$ has at least one endpoint in $Q$.
	Therefore, we are consistent. This proves the algorithm's behavior is incorrect completing the proof of this theorem.
\end{proof}

\section{Self contained proof of Lemma~\ref{lem:rand-or-conn-comp}}\label{sec:app-proof}
\begin{lemma}%	\label{lem:rand-or-conn-comp}\label{lem:edge-sampling-sparse}
	Let $G$ be an undirected graph on $n$ vertices. Suppose every vertex $v\in G$ samples $O(\log^2 n)$ edges with repetition, and let
	$\Pi = (S_1, \ldots, S_p)$ be the resulting connected components. 
	Then, whp, $|E_\mathsf{cross}(\Pi)| = O(n\log n)$.
	%	After Round 1, if $C_1, \ldots, C_k$ are the connected components. Let $E_\cross$ be the pairs of vertices $(u,v)$ where $u$ and $v$ lie in different $C_i$'s and there is at least on edge between $u$ and $v$.
	%	Then whp $|E_\cross| = \tilde{O}(n)$.
	%Whp, for every $i$, $|E(C_i,C^c_i)|\leq 10|C_i|\log n$.
\end{lemma}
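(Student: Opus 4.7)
The plan is to prove that with high probability, every component $C$ of the sampled subgraph $H$ satisfies $|\partial(C)| = O(|C| \log n)$; summing over components of $\Pi$ then gives $|E_{\mathsf{cross}}(\Pi)| = \tfrac{1}{2}\sum_{i} |\partial(S_i)| = O(n \log n)$, as claimed. The central probabilistic tool is the following observation: if a set $S \subseteq V$ is exactly a union of components of $\Pi$, then no vertex in $S$ can have sampled any edge in $\partial(S)$. Letting $b_v := |\partial(v) \cap \partial(S)|$ and noting that each of the $k = \Theta(\log^2 n)$ samples at $v$ lands in $\partial(S)$ independently with probability $b_v/d_v$, the probability of this event is at most $\prod_{v \in S} (1 - b_v/d_v)^k \le \exp\!\left(-k \sum_{v \in S} b_v/d_v\right)$.

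The main obstacle is that a naive union bound over the $2^n$ choices of $S$ is infeasible, so the plan is to restrict attention to the polynomially many subsets that can actually arise as components. To do so, we split the $k$ samples per vertex into $T = O(\log n)$ independent batches of $\Theta(\log n)$ samples each, let $\Pi_i$ denote the partition after batch $i$, and prove by induction on $i$ that every component of $\Pi_i$ has boundary $O(|C| \log n)$ with high probability. For the inductive step, the fresh randomness of batch $i+1$ is independent of $\Pi_i$, so we only need to union-bound over the $|\Pi_i| \le n$ components present in $\Pi_i$ (rather than all subsets of $V$); summing over the $T$ batches costs only an $O(n \log n)$ factor, well within a polynomial union-bound budget.

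The per-batch contraction step requires showing that any component $C$ of $\Pi_i$ with $|\partial(C)| > c \cdot |C| \log n$ merges with another component during batch $i+1$ with probability $1 - n^{-\Omega(1)}$. The hard part is that the exponent $\sum_v b_v/d_v$ in the tail bound can be small when the boundary is concentrated on a single very-high-degree vertex. We handle this by a case split on vertex degrees inside $C$: vertices with $d_v = O(\log n)$ will have sampled their entire neighborhood by a coupon-collector argument and hence cannot lie on any component boundary at all, so it suffices to argue about vertices with $d_v = \Omega(\log n)$. For such vertices, $\Theta(\log n)$ fresh samples per vertex in batch $i+1$ suffice (via a direct Chernoff bound on $\sum_{v \in S} b_v/d_v$ once the low-degree regime is removed) to ensure that at least one boundary edge is sampled with probability $1 - n^{-\Omega(1)}$. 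Iterating this argument over all $T$ batches and taking a union bound completes the proof that, whp, all components $C$ of $\Pi$ satisfy $|\partial(C)| = O(|C| \log n)$, yielding the target bound $|E_{\mathsf{cross}}(\Pi)| = O(n \log n)$.
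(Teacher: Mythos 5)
Your high-level strategy — split the $O(\log^2 n)$ samples into $T = O(\log n)$ batches, use the fresh randomness of batch $i{+}1$ to analyze the (at most $n$) components of $\Pi_i$, and thereby avoid a hopeless union bound over $2^n$ subsets — is the right idea and parallels the phase structure in the paper's proof. However, there are two genuine gaps.

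\textbf{The per-component tail bound is not established by your degree split.} You want $\sum_{v\in C} b_v/d_v = \Omega(1)$ so that a batch of $\Theta(\log n)$ samples per vertex samples a boundary edge with probability $1 - n^{-\Omega(1)}$. Removing low-degree vertices does not help: a vertex with $b_v = 0$ already contributes nothing to the sum, so discarding it does not make the sum larger, and a single high-degree vertex with small $b_v/d_v$ can still dominate a large $|\partial(C)|$ in principle without making the sum large. What rescues the bound is the simple-graph constraint $d_v \le b_v + |C|$: either some $b_v > |C|$, whence $b_v/d_v > 1/2$, or else all $b_v \le |C|$ and $\sum b_v/d_v \ge \sum b_v/(2|C|) = |\partial(C)|/(2|C|) > 1/2$ whenever $|\partial(C)| > |C|$. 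This is exactly \Cref{clm:big} in the paper, and it is both cleaner and strictly necessary; your ``Chernoff bound once the low-degree regime is removed'' has no stated reason to yield $\sum b_v/d_v = \Omega(1)$. (Also, your coupon-collector step for $d_v = O(\log n)$ uses all $\Theta(\log^2 n)$ samples, not just one batch of $\Theta(\log n)$; that can be accommodated, but it must be argued carefully if you are trying to keep per-batch randomness ``fresh''.)

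\textbf{The induction does not close, and this is the substantive gap.} Once the hypothesis ``every $C\in\Pi_i$ satisfies $|\partial(C)| \le c|C|\log n$'' holds for some $i$, it propagates for free: $\Pi_{i+1}$ is a coarsening of $\Pi_i$, and boundary is subadditive over unions of components, so every $D\in\Pi_{i+1}$ inherits $|\partial(D)| \le c|D|\log n$. The real content is therefore \emph{when the hypothesis first becomes true} — and your inductive step does not show this. It shows only that a big component of $\Pi_i$ merges whp, but merging a big $C$ with a neighboring small (or big) $C'$ can still yield $|\partial(C\cup C')| > c|C\cup C'|\log n$; there is no argument that the number of big components, or their total boundary, decreases geometrically. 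Indeed the hypothesis already fails at the base: $\Pi_0$ is the singleton partition and the condition reads $d_v \le c\log n$, false for any high-degree vertex, and $\Pi_1$ can likewise have components of large boundary. The paper avoids this by not requiring every component to become small: it maintains a charge set $\tilde{E}$ and, in each phase, \emph{gives up} on the boundary edges of currently-small components ($|\partial(C)| \le |C|$), which costs only $O(n)$ per phase since $\sum_{\text{small }C} |\partial(C)| \le n$. It then shows by a case analysis (on whether most big components have small neighbors) that each phase either halves the number of components or halves the number of big components. After $O(\log n)$ phases, all remaining components are small and $|\tilde{E}| = O(n\log n)$. You would need either this $\tilde{E}$-accumulation idea, or some other explicitly decreasing potential function, to make your batching argument terminate.
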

\begin{proof}%(A self contained proof. Probably will be moved to appendix or khatched.)
	%In Round 1, we sample $O(\log^2 n)$ edges incident on every vertex uar with replacement. 
	We analyze the above by deferring the $O(\log^2 n)$ edges per vertex over $L = O(\log n)$ phases.
	Let $E^R$ be the collection of sampled edges initialized to $\emptyset$.
	We also maintain a set $\tilde{E}$ of edges in the graph initialized to $\emptyset$. These will be edges we will ``give up'' on.
	We use this simple fact.
	\begin{fact}\label{fact:trivial}
		Let $\Pi$ denote the connected components induced by $E_R$ and $\Pi'$ the connected components induced by $E^R \cup \tilde{E}$. Then, $E_\mathsf{cross}(\Pi) \subseteq E_\cross(\Pi') \cup \tilde{E}$. In particular, $|E_\mathsf{cross}(\Pi)|\leq |E_\mathsf{cross}(\Pi')| + |\tilde{E}|$.
	\end{fact}%{} \smallskip

	In each phase, we will either sample $O(\log n)$ edges on every vertex, or we will add $O(n)$ edges to $\tilde{E}$.
	%	Consider the $O(\log^2 n)$ edges sampled by every vertex occuring over $L := O(\log n)$ phases.
	%	
	%	these $O(\log^2 n)$ edges sampled over $A\log n$ phases where in each phase we sample $B\log n$ edges incident on every vertex.
	Let $E^R_t$ be the collection of sampled edges in the first $t-1$ phases.
	Let $\Pi_t$ be the partition of the vertices formed by the connected components of $E^R_t \cup \tilde{E}$. 
	%	%
	\noindent
	We call a component $C$ in $\Pi_t$ {\em big} if $|E(C,C^c)| > |C|$, and \emph{small} otherwise. Big components have the following useful property.
	
	\begin{claim}\label{clm:big}
		Let $C$ be a big component. Now suppose every vertex $v\in C$ samples $O(\log n)$ edges incident to it independently with replacement.
		Then whp, an edge in $E(C,C^c)$ is sampled.
	\end{claim}
	\begin{proof}
		%Let $C\in \mathcal{C}^{(t)}$ be an arbitrary component with $|E(C,C^c)| > 2|C|$. We claim that whp the $t$th phase will sample an edge in $|E(C,C^c)|$. 
		Let $\mathcal{E}$ be the event that we {\em don't} sample an edge from $E(C,C^c)$.
		Rename the vertices in $C$ as $v_1, \ldots, v_\ell$. Let $\alpha_i$ denote the number of edges $v_i$ has to $C^c$, that is, $\alpha_i := |E(v_i,C^c)|$. Thus, we have $\sum_{i=1}^\ell \alpha_i > \ell$.
		The probability that $v_i$ {\em doesn't} sample an edge in $E(v_i, C^c)$ in the $t$th phase is $\leq \left(1 - \frac{\alpha_i}{\alpha_i + \ell}\right)^{O(\log n)}$ since $v_i$ has at most $\alpha_i + \ell$ edges incident on it.
		Therefore, 
		$
		\Pr[\mathcal{E}] = \prod_{i=1}^\ell  \left(1 - \frac{\alpha_i}{\alpha_i + \ell}\right)^{O(\log n)}.
		$
If any of the $\alpha_i > \ell$, we get that one of the product terms in the RHS becomes $< \frac{1}{\poly(n)}$. 
	If all $\alpha_i \leq \ell$, then
		$
		\Pr[\mathcal{E}^{(t)}] = \prod_{i=1}^\ell  \left(1 - \frac{\alpha_i}{\alpha_i + \ell}\right)^{O(\log n)}$ which is at most $\prod_{i=1}^\ell \left(1 - \frac{\alpha_i}{2\ell}\right)^{O(\log n)} \leq e^{-\frac{O(\log n)}{\ell}\sum_{i=1}^\ell \alpha_i} \leq  1/\poly(n)$ since $\alpha_i$'s sum to $>\ell$.
	\end{proof}
	
	Let $n^{(t)} = |\Pi_t|$, and let $n^{(t)}_b$ and $n^{(t)}_s$ denote the number of big and small components, respectively.
	Note that, the total number of edges in $\tilde{E}^{(t)} := \cup_{C\in \Pi_t: \textrm{small}} |E(C,C^c)| = O(n)$. 
	%Also note that the final $E_\cross = E_\cross^{(L)}$ will contain only a subset of the edges of $\tilde{E}^{(t)}$.
	Among the $n^{(t)}_b$ big components, some of these have edges to small components, and some of these don't.
	Let $n^{(t)}_{b,1}$ be the first number and $n^{(t)}_{b,2}$ be the second. If $n^{(t)}_{b,1} < n^{(t)}_{b,2}$, then, we {\em coarsen} $\Pi_t$ to $\Pi_{t+1}$ by adding all the edges of $\tilde{E}^{(t)}$ to $\tilde{E}$; it is as if we are committing to these edges being in the final $E_\mathsf{cross}$.	The number of connected components in $\Pi_{t+1}$ is $\leq n^{(t)}_{b,1} + \theta$ where $\theta$ is the collection of ``new'' components. Note each such new component must contain a big component among to type 2 and a small component. Thus,
	$\theta \leq \min(n^{(t)}_{b,2},n_s)$. Therefore, since $n^{(t)}_{b,1} \leq n^{(t)}_{b,2}$, we get that $n^{(t+1)} \leq \frac{2}{3}n^{(t)}$. 
	On the other hand, if $n^{(t)}_{b,1} \geq n^{(t)}_{b,2}$, then we sample $O(\log n)$-edges per vertex. By~\Cref{clm:big}, the number of big components will then drop to $n^{(t+1)}_b \leq \frac{1}{2} n^{(t)}_{b,1} + n^{(t)}_{b,2} \leq \frac{3}{4} n^{(t)}_b$.
	
	Therefore, in each phase either the number of components or the number of big components drop by a constant factor. Two small components can't merge to give a big component. Therefore, in $O(\log n)$-phases, we end up either with a connected graph, or with a graph with all small components. We end the process in either case; note that in both cases if $\Pi_L$ is the final partition, we have $E_\mathsf{cross}(\Pi_L) = O(n)$.
	Now, we use~\Cref{fact:trivial}. Since $\Pi_L$ is induced by $E^R\cup \tilde{E}$, we get that $|E_\mathsf{cross}(\Pi)| \le |E_\mathsf{cross}(\Pi_L)| + |\tilde{E}| = O(n\log n)$. This is because, the number of phases is $O(\log n)$ and in each phase $|\tilde{E}|$ gets $O(n)$ edges.
\end{proof}

}
\conf{
}

%\clearpage

%%\clearpage

\end{document}